\newtheorem{theorem}{Theorem} 
\newtheorem{prop}{Proposition}
\newtheorem{definition}{Definition}
\newtheorem{observation}{Observation}
\newcommand{\B}{\boldsymbol}
\newcommand{\M}{\mathbf}
\newcommand{\sbt}{\mathrm{s.t.}}
\newcommand{\rnk}{\mathrm{rank}}
\newcommand{\tr}{\mathrm {Tr}}
\DeclareMathOperator*{\argmin}{arg\,min}
\DeclareMathOperator*{\mini}{minimize}
\DeclareMathOperator*{\maxi}{maximize}
\newcommand{\R}{\mathbb{R}}
\newcommand{\sub}{\subseteq}
\newcommand{\st}{\operatorname{s.t.}}
\newcommand{\rank}{\operatorname{rank}}
\newcommand{\mb}[1]{\mathbf{#1}}
\newcommand{\Tr}{\operatorname{Tr}}
\newcommand{\bs}{\boldsymbol}
\newcommand{\F}{\mathcal{F}_{\s}}
\newcommand{\Fs}{\mathcal{F}_{\s}}
\newcommand{\cg}{\succcurlyeq}
\newcommand{\psd}{\cg\mb0}
\newcommand{\uu}{\mb u}
\newcommand{\s}{\bs\Sigma}
\newcommand{\zz}{\mb z}
\newcommand{\ds}{\displaystyle}
\newcommand{\bk}{\bs\mu}
\newcommand{\bsig}{\bs\sigma}
\newcommand{\f}{\bs f}
\newcommand{\nn}{\mb N}
\newcommand{\mm}{\mb M}
\newcommand{\pp}{\mb P}
\newcommand{\cW}{{\bs \Psi_{p,p-r}}}
\newcommand{\convenv}{\operatorname{convenv}}
\newcommand{\tol}{\texttt{TOL}}
\newcommand{\nodes}{\texttt{Nodes}}
\newcommand{\T}{\bs\Theta}
\newcommand{\Ph}{\bs\Phi}
\newcommand{\ph}{\bs \phi}
\newcommand{\diag}{\operatorname{diag}}
\newcommand{\W}{\mb W}
\newcommand{\LL}{\mb L}
\newcommand{\bell}{\bs \ell}
\newcommand{\n}{\texttt{n}}
\title{Certifiably Optimal Low Rank Factor Analysis}
\author{ Dimitris Bertsimas \and Martin S. Copenhaver\and Rahul Mazumder\thanks{Sloan School of Management and Operations Research Center, MIT. Emails: \texttt{$\{$dbertsim,mcopen,rahulmaz$\}$@mit.edu}. Copenhaver is supported by the Department of Defense, Office of Naval Research, through the National Defense Science and Engineering Graduate (NDSEG) Fellowship.}}
\date{}
\begin{document}

\maketitle

\begin{abstract}%
Factor Analysis  (FA) is a technique of fundamental importance that is widely used in classical and modern multivariate statistics, psychometrics and econometrics. In this paper,
 we revisit  the classical rank-constrained FA problem, which seeks to approximate  an observed covariance matrix ($\B\Sigma$), by the sum of a
 Positive Semidefinite (PSD) low-rank component ($\B\Theta$) and a diagonal matrix ($\B\Phi$) (with nonnegative entries) subject to 
 $\B\Sigma - \B\Phi$ being PSD. 
  We propose a flexible family of rank-constrained, nonlinear Semidefinite Optimization based formulations for this task.
 We introduce a reformulation of the problem as a smooth optimization problem with convex compact constraints and 
 propose a unified algorithmic framework, utilizing state of the art techniques in 
nonlinear optimization 
to obtain high-quality feasible solutions for our proposed formulation. 
At the same time, by using a variety of techniques from discrete and global optimization, we show that these solutions are \emph{certifiably optimal} in many cases, even for problems with thousands of variables. Our techniques are general and make \emph{no} assumption on the underlying problem data. The estimator proposed herein, aids statistical interpretability, provides computational scalability and significantly improved accuracy when compared to current, publicly available popular 
methods for rank-constrained FA. We demonstrate the effectiveness of our proposal on an  array of synthetic and real-life datasets. To our knowledge, this is the first paper that demonstrates how a previously intractable rank-constrained optimization problem can be solved to provable optimality by coupling developments in convex analysis and in discrete optimization.
\end{abstract}

\section{Introduction}\label{sec:intro1}
Factor Analysis (FA)~\citep{anderson2003, Bartholomew_Knott_Moustaki_2011,mardia}, a widely used methodology in classical and modern multivariate statistics is used as a tool to obtain a parsimonious representation of
the correlation structure among a set of variables in terms of a smaller number of common hidden factors. 
A basic FA model is of the form $\M{x} = \M{L} \M{f} + \B{\epsilon},$ where $\M{x}_{p \times 1}$ is the observed random vector, $\M{f}_{r_1 \times 1}$ (with $r_{1} \leq p$, note that we do not necessarily restrict $r_{1}$ to be small) 
is a random vector of common factor variables or scores, 
$\M{L}_{ p \times r_1}$ is a matrix of factor loadings and $\B{\epsilon}_{p \times 1}$ is a vector of uncorrelated random  variables. We assume that the variables are mean-centered,
$\M{f}$ and $\B{\epsilon}$ are uncorrelated and without loss of generality, 
the covariance of $\M{f}$ is the identity matrix. We will denote
$\mathrm{Cov}(\B{\epsilon})=\B{\Phi}= \mathrm{diag}(\Phi_1, \ldots, \Phi_p).$  
It follows that
 \begin{equation}\label{decompose-1}
\B\Sigma = \B\Sigma_{c} + \B{\Phi},
\end{equation}
 where,  
 $\B\Sigma$ is the covariance matrix of $\M{x}$ and $\B\Sigma_{c}:=  \M{L}\M{L}'$ is the covariance matrix corresponding to the common factors.
  Decomposition~\eqref{decompose-1}
suggests that $\B\Sigma$ can be written as the sum of a positive semidefinite (PSD) matrix $\B\Sigma_{c}$ of rank $r_{1}$
 and a nonnegative diagonal matrix ($\B\Phi$), corresponding to the errors. 
In particular, 
the variance of the $i$th coordinate of $\M{x}:=(x_{1}, \ldots, x_{p}),$ i.e.,  
$\text{var}(x_{i})= \sum_{k} l^2_{ik} + \Phi_{i}, i = 1, \ldots, p,$  splits into two parts,
where, $\M{L} = (( l_{ik} ))$.
The first part ($\sum_{k} l^2_{ik}$) is known as the 
\emph{communality estimate} (since this is the variance of the factors common to all the $x_{i}$'s) 
and the remaining part $\Phi_{i}$ is the variance specific to the $i$th variable ($\Phi_{i}$'s are also referred to as
the \emph{unique variances} or simply \emph{uniquenesses}).
In the finite sample setting, when we are provided with $n$ samples $\M{x}_{i}, i = 1, \ldots, n$, 
we consider the empirical covariance matrix  in place of $\B\Sigma$.

\paragraph{Formulation of the estimator:} 
In decomposition~\eqref{decompose-1}, the assumption that the rank ($r_{1}$) of $\B\Sigma_{c}$ is small compared to $p$
is fairly stringent, see~\cite{guttman1958extent,rank-shapiro-82,ten1998some} for a historical overview of the concept. In a classical paper~\cite{guttman1958extent}, the author argued based on psychometric evidence that $\B\Sigma_{c}$ is often found to have high algebraic rank. 
In psychometric case studies it is rather rare that the covariance structure can be \emph{completely} explained by a \emph{few} common factors corresponding to mental abilities---in fact there is evidence of at least hundreds  of common factors being present with the number growing without an upper bound. Formally, this means that instead of assuming  
that $\B\Sigma_{c}$ has \emph{exactly} low-rank it is practical to assume that it can be well-approximated by a low-rank matrix, namely,
 $\M{L}_{1}\M{L}_{1}'$ with $\M{L}_{1} \in \R^{p \times r}$.  More precisely,  $\M{L}_{1}\M{L}_{1}'$ is the best rank-$r$ approximation to $\B\Sigma_{c}$ in the matrix 
$q$-norm (also known as the Schatten norm), as defined in~\eqref{mat-q-norm} and 
 $\left(\B\Sigma_{c} -\M{L}_1\M{L}_1'  \right)$ is the \emph{residual} component. Following psychometric terminology,  $\M{L}_{1}$ corresponds to the 
$r$ \emph{most significant}  factors representative of mental abilities and the residual $\B\Sigma_{c} - \M{L}_{1}\M{L}_{1}'$ 
corresponds to the remaining psychometric factors unexplained by $\M{L}_1\M{L}_{1}'$.
 Thus we can rewrite decomposition~\eqref{decompose-1} as follows:
\begin{equation}\label{cov-decomp-1}
\B{\Sigma} = \underbrace{\M{L}_{1}\M{L}_1'}_{:=\B\Theta} +  \underbrace{ \left( \B\Sigma_{c} - \M{L}_1\M{L}_1'  \right)}_{:=\B{\cal N}} + \B{\Phi},
\end{equation}
where, we use the notation $\B\Theta = \M{L}_1 \M{L}_1'$ and $\B{\cal N} = \left( \B\Sigma_{c}  -\M{L}_1\M{L}_1'  \right)$ with 
$\B\Theta + \B{\cal N}  = \B\Sigma_{c}=   \B{\Sigma} - \B\Phi$.
Note that $\B\Theta$ denotes  the best rank-$r$ approximation to $(\B{\Sigma}  - \B\Phi)$, with 
the residual component being $\B{{\cal N}} = \B{\Sigma}  - \B\Phi - \B\Theta$. Note that the entries in $\B\Phi$ need to be non-negative\footnote{Negative estimates of  the diagonals of $\B\Phi$ are unwanted since they correspond to variances, but 
some FA estimation procedures often lead to negative estimates of $\B\Phi$---these are popularly known in the literature as Heywood cases and have invited a significant amount of discussion in the community.} and $\B\Sigma - \B\Phi \succeq \M{0}$.
In fact, in the words of~\cite{ten1998some} (see p.\ 326) \\
``\ldots \emph{However, when $\B\Sigma-\B\Phi$ the
covariance matrix for the common parts of the variables, would appear to be
indefinite, that would be no less embarrassing than having a negative unique
variance in $\B\Phi$}\ldots''

We further refer the reader to~\cite{mardia} discussing the importance of $\B\Sigma - \B\Phi$ being PSD.\footnote{However, the estimation method described in~\cite{mardia} does not guarantee that 
$\B\Sigma - \B\Phi \succeq \M{0}$.}
We thus have the following natural structural constraints on the parameters:
\begin{equation}\label{feas-set-1}
\B\Theta \succeq \M{0}, \;\;\;\;\; \B\Phi=\diag(\Phi, \ldots, \Phi_{p}) \succeq \M{0}  \;\;\;\;\;\text{and}\;\;\;\;\; \B{\Sigma} - \B\Phi \succeq \M{0}.
\end{equation}
 
Motivated by the above considerations, we present the following rank-constrained estimation problem for FA: 
\begin{equation}\label{obj-2-0}
\begin{aligned}
\mini\;\; & \eta_{\B\Sigma}(\B\Theta, \B\Phi):= \;\;\; \| \B{\Sigma} - (\B{\Theta} + \B{\Phi})  \|^q_q \\
\sbt \;\;\; & \;\;\; \rnk(\B{\Theta})\leq r \\
& \;\;\; \B{\Theta}  \succeq \M{0} \\
& \;\;\; \B{\Phi}=\mathrm{diag}(\Phi_1, \ldots, \Phi_p) \succeq \M{0} \\
&\;\;\; \B\Sigma - \B\Phi \succeq \M{0},
\end{aligned}
\end{equation}
where  $\B{\Theta} \in \R^{p \times p}, \B{\Phi} \in \R^{p \times p}$ are the optimization variables, and
for a real symmetric matrix $\M{A}_{ p \times p}$,  
its matrix $q$-norm, also known as the Schatten norm (or Schatten-von-Neumann norm) is defined as:
\begin{equation}\label{mat-q-norm}
\| \M{A} \|_q :=  \left(\sum_{i=1}^{p} |\lambda_{i}(\M{A})|^{q} \right)^{\frac{1}{q}},
\end{equation}
where $\lambda_{i}(\M{A}), i = 1, \ldots, p$, are the (real) eigenvalues of $\M{A}$. 

\paragraph{Interpreting the estimator:} 

The estimation criterion~\eqref{obj-2-0} seeks to \emph{jointly} obtain the (low-rank) common factors and \emph{uniquenesses} that 
best explain $\B\Sigma$ in terms of minimizing the matrix $q$-norm of the error $\B\Sigma - (\B\Theta + \B\Phi)$
under the PSD constraints~\eqref{feas-set-1}. Note that criterion~\eqref{obj-2-0} does not necessarily assume that $\B\Sigma$ \emph{exactly} decomposes into a low-rank 
PSD matrix and a non-negative diagonal matrix. Problem~\eqref{obj-2-0} enjoys curious similarities with Principal Component Analysis (PCA). 
In PCA, given a PSD matrix $\B\Sigma$ the leading $r$ principal component directions of $\B\Sigma$ are obtained 
by minimizing $\| \B\Sigma - \B\Theta \|_{q}$  subject to $\B\Theta \succeq \M{0}$ and $\rnk(\B\Theta) \leq r$.
If the optimal solution  $\B\Phi$ to Problem~\eqref{obj-2-0} is \emph{given}, Problem~\eqref{obj-2-0} is analogous to a rank-$r$ PCA
on the residual matrix $\B\Sigma - \B\Phi$ --- thus it is naturally desirable to have $\B\Sigma - \B\Phi \succeq \M{0}$. In PCA one is interested in understanding the proportion of variance explained 
by the top-$r$ principal component directions: $\sum_{i=1}^{r} \lambda_{i}(\B\Sigma)/\sum_{i=1}^{p} \lambda_{i}(\B\Sigma)$. The denominator 
$\sum_{i=1}^{p} \lambda_{i}(\B\Sigma) = \tr(\B\Sigma)$ accounts for the total variance explained by the covariance matrix $\B\Sigma$. Analogously, the proportion of variance explained by 
$\widehat{\B\Theta}_{r}$ (which denotes the best rank $r$ approximation to $\B\Sigma - \B\Phi$) is given by $\sum_{i=1}^{r} \lambda_{i}(\B\Sigma - \B\Phi)/\sum_{i=1}^{p} \lambda_{i}(\B\Sigma - \B\Phi)$ --- for this quantity to be 
interpretable it is imperative that $\B\Sigma - \B\Phi \succeq \M{0}$. 
In the above argument, of course,  we assumed that $\B\Phi$ is given. In general, $\B\Phi$ needs to be estimated: Problem~\eqref{obj-2-0} achieves this goal by \emph{jointly} 
learning $\B\Phi$  and $\B\Theta$. 
We note that certain popular approaches of FA (see Sections~\ref{sec:related-work} and~\ref{sec:categories}) do not impose the PSD 
constraint $\B\Sigma - \B\Phi$ as a part of the estimation scheme --- leading to indefinite 
$\B\Sigma - \B\Phi$, thereby rendering statistical interpretations troublesome. Our numerical evidence suggests that the quality of estimates of 
$\B\Theta$ and $\B\Phi$ obtained from Problem~\eqref{obj-2-0} outperform those obtained by other competing procedures which do not take into account the PSD constraints 
into their estimation criterion.

\paragraph{Choice of $\M{r}$:} In exploratory FA, it is standard to consider several choices of $r$ and study the manner in which the proportion of variance explained by the 
common factors \emph{saturates} with increasing $r$. We refer the reader to popularly used methods, described in~\cite{anderson2003, Bartholomew_Knott_Moustaki_2011,mardia}
 and more modern techniques~\cite{bai-ng-2008large-review} (see also, references therein) for the choice of $r$.

\medskip
\medskip

In this paper, we propose a general computational framework to solve Problem \eqref{obj-2-0} 
for  any $q \geq 1$. The very well known Schatten $q$-norm appearing in the loss function is chosen for flexibility---it underlines the fact that our approach can be applied for \emph{any} $q \geq 1$. 
Note that the estimation criterion~\eqref{obj-2-0} (even for the case $q=1$) does not seem to appear in prior work on  \emph{approximate minimum rank Factor Analysis} ({\textsc{\small MRFA}})~\citep{berge-91,shapiro-inference-FA-02}. However, we 
show in Proposition~\ref{lem:reform-1} that Problem~\eqref{obj-2-0} for the special case $q=1$, turns out to be equivalent to {\textsc{\small MRFA}}. 
For $q=2$, the loss function is the familiar squared Frobenius norm also used in {\textsc{\small MINRES}}, though the latter formulation is not equivalent to Problem~\eqref{obj-2-0}, as explained in Section~\ref{sec:related-work}.  We place more emphasis on studying the computational properties for the more common norms $q \in \{ 1, 2\}$.

The presence of the rank constraint in Problem~\eqref{obj-2-0} makes the optimization problem non-convex. Globally optimizing 
Problem~\eqref{obj-2-0} or for that matter obtaining a good stationary point is quite challenging. We propose a new \emph{equivalent}
smooth formulation to Problem~\eqref{obj-2-0} which does not contain the combinatorial rank constraint. We employ simple and 
tractable sequential convex relaxation techniques with guaranteed convergence properties and excellent computational properties
to obtain a stationary point for Problem~\eqref{obj-2-0}. An important novelty of this paper is to present certifiable lower bounds on Problem~\eqref{obj-2-0} without resorting to structural assumptions, thus making it possible to
solve Problem~\eqref{obj-2-0} to \emph{provable} optimality. Towards this end we propose new methods and ideas that 
significantly outperform existing off-the-shelf methods for nonlinear global optimization.\footnote{The class of optimization problems studied in this paper involve global minimization of nonconvex continuous SDOs. Computational methods for this class of problems are in a nascent stage; further, such methods are significantly less developed when compared to those for mixed integer linear optimization problems, thus posing a major 
challenge in this work}.

 \subsection{A selective overview of related FA estimators}\label{sec:related-work}
FA has a long and influential history which dates back to more than a hundred years. 
The notion of FA possibly first appeared in~\cite{spearman1904} for the one factor model, which was then 
generalized to the multiple factors model by various authors~(see for example, \cite{thurstone1947multiple}). Significant contributions related to computational methods for FA have been nicely documented in~\cite{bai-ng-2008large-review,minres-66,mle-factor-analysis,joreskog-78,lawley-max-62,lederman-37,Rao73,rank-shapiro-82,berge-91}, among others.

We will briefly describe some widely used approaches for FA that are closely related to the approach pursued herein and also point out their connections.
\paragraph{Constrained Minimum Trace Factor Analysis (\textsc{\small{MTFA}}):}
This approach~\citep{tenberge-mtfa-81,shapiro-mtfa-82} seeks to decompose $\B\Sigma$
exactly into the sum of a diagonal matrix and a low-rank component, which are estimated via the following convex optimization problem:
\begin{equation}\label{orig-mtfa1}
\begin{array}{rl}
\mini\limits_{}\;\;\;&   \tr(\B\Theta) \\
 \sbt\;\;\; & \B\Theta \succeq \M{0} \\
 & \B\Sigma = \B\Theta + \B\Phi \\
 & \B\Phi = \diag(\Phi_1, \ldots, \Phi_p) \succeq  \M{0},
\end{array}
\end{equation}
with variables $\B\Theta, \B\Phi$. $\B\Theta$ being PSD,   
$\tr(\B\Theta) = \sum_{i=1}^{p} \lambda_{i}(\B\Theta)$ is a convex surrogate~\citep{fazel-thes} for the rank of $\B\Theta$---Problem~\eqref{orig-mtfa1} 
may thus be viewed as a 
convexification of the rank minimization problem:
\begin{equation}\label{orig-mtfa1-exact}
\begin{array}{rl}
\mini\limits_{} \;\;\;&    \rnk(\B\Theta) \\
 \sbt\;\;\; & \B\Theta \succeq \M{0} \\
 & \B\Sigma = \B\Theta + \B\Phi \\
 &\B\Phi = \diag(\Phi_1, \ldots, \Phi_p) \succeq  \M{0}. 
\end{array}
\end{equation}
In general, Problems \eqref{orig-mtfa1} and  \eqref{orig-mtfa1-exact}   are \emph{not} equivalent. 
See~\cite{shapiro-mtfa-82,venkat-2012-factor,rank-shapiro-82} (and references therein) for further connections between 
the minimizers  of \eqref{orig-mtfa1} and   \eqref{orig-mtfa1-exact}.

A main difference between formulations~\eqref{obj-2-0} and~\eqref{orig-mtfa1-exact} is that the former allows an error in 
the residual $(\B\Sigma - \B\Theta - \B\Phi)$ by constraining  $\B\Theta$ to have low-rank, 
unlike~\eqref{orig-mtfa1-exact} which imposes a hard constraint $\B\Sigma = \B\Theta + \B\Phi$. This can be quite restrictive
in various applications as substantiated by our experimental findings in Section~\ref{sec:compute}.

\paragraph{Approximate Minimum Rank Factor Analysis (\textsc{\small MRFA}):}
This method~\citep[see for example,][]{berge-91,shapiro-inference-FA-02}
considers the following optimization problem:
\begin{equation}\label{obj-2-0-margin-mrfa}
\begin{array}{rl}
\mini\limits\;\; & 
\displaystyle  \sum_{i=r+1}^{p} \lambda_i(\B{\Sigma} - \B{\Phi} ) \\
\sbt \;\; & \B{\Phi}=\mathrm{diag}(\Phi_1, \ldots, \Phi_p) \succeq \M{0} \\
 &  \B\Sigma - \B\Phi \succeq \M{0},
\end{array}
\end{equation}
where the  optimization variable is $\B\Phi$.
Proposition~\ref{lem:reform-1}, presented below establishes that Problem~\eqref{obj-2-0-margin-mrfa} is  equivalent 
to the rank-constrained FA formulation~\eqref{obj-2-0} for the case $q=1$. This connection does not appear to be formally established 
in~\cite{berge-91}. We believe that criterion~\eqref{obj-2-0} for $q=1$ is easier to interpret as an estimation criterion for FA models over~\eqref{obj-2-0-margin-mrfa}.
\cite{berge-91} also describe a method for numerically optimizing~\eqref{obj-2-0-margin-mrfa}---as
documented in the code description for \textsc{\small MRFA}~\citep{mrfa-2003-code}, their implementation can handle problems of size $p\leq20$.


\paragraph{Principal Component (PC) Factor Analysis:}
Principal Component factor analysis or PC in short~\citep{connor1986performance,bai-ng-2008large-review}
implicitly assumes that $\B\Phi = \sigma^2 \M{I}_{p \times p}$, for some 
$\sigma^2 > 0$ and performs a low-rank PCA on $\B\Sigma$.
It is not clear how to estimate $\B\Phi$ via this method such that $\Phi_{i} \geq 0$ and $\B\Sigma - \B\Phi \succeq \M{0}$. 
Following~\cite{mardia}, the $\B\Phi$'s may be estimated after estimating $\widehat{\B\Theta}$ via
the update rule $\widehat{\B\Phi} = \diag(\B\Sigma - \widehat{\B\Theta})$---the estimates thus obtained, however, need not be non-negative.
Furthermore, it is not guaranteed that the condition $\B\Sigma - \B\Phi \succeq \M{0}$ is met.

\paragraph{Minimum Residual Factor Analysis (\textsc{{\textsc{\small MINRES}}}):}
This approach~\citep{minres-66,shapiro-inference-FA-02} considers the following optimization 
problem (with respect to the variable ${\M{L}_{p \times r}}$):
\begin{equation}\label{orig-minres1}
\mini \;\;\;\;  \sum_{1 \leq i\neq j \leq p}  \left ( \sigma_{ij} - (\M{L}\M{L}')_{ij} \right)^2,
\end{equation}
where $\B\Sigma := ((\sigma_{ij}))$, $\B\Theta = \M{L}\M{L}'$ and
the sum (in the objective) is taken over all the off-diagonal entries.  Formulation~\eqref{orig-minres1} is equivalent to the non-convex optimization problem:
\begin{equation}\label{orig-minres1-equiv}
\begin{array}{rl}
\mini \;\;\;&  \| \B\Sigma - ( \B\Theta + \B\Phi) \|_2^2\\
 \sbt\;\;\; & \rnk(\B\Theta) \leq r \\
 & \B\Theta \succeq \M{0} \\
 & \B\Phi = \diag(\Phi_1, \ldots, \Phi_p),
\end{array}
\end{equation}
with variables $\B\Theta, \B\Phi$, where $\Phi_{i}, i \geq 1$ are unconstrained. If $\widehat{\B\Theta}$ is a minimizer of Problem~\eqref{orig-minres1-equiv}, then \emph{any} 
$\widehat{\M{L}}$ satisfying $\widehat{\M{L}}\widehat{\M{L}}'=\widehat{\B\Theta}$ minimizes~\eqref{orig-minres1} and vice-versa. 

Various heuristic approaches are used to for Problem~\eqref{orig-minres1}. 
The {\texttt R} package \texttt{psych} for example, uses a black box gradient based tool \texttt{optim} to minimize the non-convex Problem~\eqref{orig-minres1} with respect to $\M{L}$.
 Once $\widehat{\M{L}}$  is estimated,  the diagonal entries of $\B\Phi$ are estimated as $\widehat{\Phi}_{i}=\sigma_{ii} -  (\widehat{\M{L}}\widehat{\M{L}}')_{ii}$, for $i \geq 1$.
 Note that  $\widehat{\Phi}_{i}$ obtained in this fashion may be negative\footnote{If $\widehat{\Phi}_{i}<0$, some ad-hoc procedure is used to threshold it to a nonnegative quantity.} and the condition $\B\Sigma - \B\Phi\succeq \M{0}$ may be violated.
\paragraph{Generalized Least Squares, Principal Axis and variants:}
The Ordinary Least Squares ({\textsc{\small OLS}}) method for FA~\citep{Bartholomew_Knott_Moustaki_2011} 
considers formulation~\eqref{orig-minres1-equiv} with the additional constraint that 
$\Phi_{i}\geq 0, \forall i$.  The Weighted Least Squares ({\textsc{\small WLS}}) or the generalized least squares 
method~(see for example, \cite{Bartholomew_Knott_Moustaki_2011})  considers 
a weighted least squares objective:
$$\left \| \M{W} \left ( \B\Sigma - ( \B\Theta + \B\Phi)  \right ) \right \|_2^2.$$
As in the ordinary least squares case, here too, we assume that $\Phi_{i} \geq 0$. Various choices of $\M{W}$ are possible depending upon the
 application---$\M{W} \in \{ \B{\Sigma}^{-1}, \B\Phi^{-1} \}$ being a couple of popular choices.
 
The Principal Axis (PA) FA method~\citep{Bartholomew_Knott_Moustaki_2011,psych}
is  popularly used to estimate factor model parameters based on criterion~\eqref{orig-minres1-equiv} along with the constraints 
$\Phi_{i} \geq 0, \forall i$. 
This method starts with a nonnegative estimate $\widehat{\Phi}_{i}$ 
and performs a rank $r$ eigendecomposition on  $\B\Sigma - \widehat{\B\Phi}$
to obtain $\widehat{\B\Theta}$. The matrix $\widehat{\B\Phi}$ is then updated to match the diagonal entries of $\B\Sigma - \widehat{\B\Theta}$, 
and the above steps are repeated until the estimate $\widehat{\B\Phi}$ stabilizes\footnote{We are not aware of a proof, however,  showing the convergence of this procedure.}.
Note that in this procedure, the estimate  $\widehat{\B\Theta}$ may fail to be PSD, 
the entries of $\widehat{\Phi}_{i}$ may be negative as well. 
Heuristic restarts and various initializations are often carried out to arrive at a reasonable solution~(see for example discussions in~\cite{Bartholomew_Knott_Moustaki_2011}).

In summary, the least squares stylized methods described above may lead to estimates that violate one or more of the constraints:
 $\B\Sigma -  \B\Phi \succeq \M{0}$ , $\B\Theta \succeq \M{0}$ and $\B\Phi \succeq \M{0}$.
 
\paragraph{Maximum Likelihood  for Factor Analysis:}
This approach~\citep{bai2012statistical,mle-factor-analysis,mardia,roweis1999unifying,em-factor-rubin} is another widely used method in FA and 
typically assumes that the data follows a multivariate Gaussian distribution. This procedure maximizes a likelihood function and is quite different 
from the loss functions pursued in this paper and discussed above.
The estimator need not exist for any $\B\Sigma$---see for example~\cite{robertson2007maximum}.

\medskip
\medskip
\medskip

Most of the methods described in Section~\ref{sec:related-work} are widely used and their implementations are available in statistical packages 
{\texttt{psych}}~\citep{psych}, {\texttt{nFactors}}~\citep{r-nfactors-13}, {\texttt{GPArotation}}~\citep{gparotation-r}, and others
and are publicly available from \texttt{CRAN}.\footnote{\url{http://cran.us.r-project.org}}

\subsection{Broad categories of factor analysis estimators}\label{sec:categories}

A careful investigation of  the methods described above suggests that they can be divided into two broad categories.
Some of the above estimators explicitly incorporate  a PSD structural assumption
on the residual covariance matrix $\B\Sigma - \B\Phi$ in \emph{addition} to requiring $\B\Theta \succeq \M{0}$ and $\B\Phi \succeq \M{0}$; while the others do not. As already pointed out, these constraints are important for statistical interpretability. 
We propose to distinguish between the following two broad categories of FA algorithms:
\begin{itemize}
\item[(A)] This category comprises of FA estimation procedures cast as nonlinear Semidefinite Optimization (SDO) problems---estimation takes place 
in the presence of constraints of the form $\B\Sigma - \B\Phi \succeq \M{0}$, along with $\B\Theta \succeq \M{0}$ and $\B\Phi \succeq \M{0}$.
Members of this category are~\textsc{\small MRFA},  \textsc{\small{MTFA}}  and more generally Problem~\eqref{obj-2-0}.

Existing approaches for these problems are typically not scalable: for example, we are not aware of any algorithm (prior to this paper) for Problem~\eqref{obj-2-0-margin-mrfa} (\textsc{\small MRFA})
that scales to covariance matrices with $p$ larger than thirty. Indeed, while theoretical guarantees of optimality exist in certain cases \citep{venkat-2012-factor}, the conditions required for such results to hold are generally difficult to verify in practice.

\item[(B)] This category includes classical FA methods which are not based on nonlinear SDO based formulations (as in Category (A)). 
 \textsc{\small{MINRES}}, \textsc{\small{OLS}}, \textsc{\small{WLS}}, \textsc{\small{GLS}}, \textsc{\small{PC}} and \textsc{\small{PA}} based FA estimation procedures (as described in Section~\ref{sec:related-work}) belong to this category.
These methods are generally scalable to problem sizes where $p$ is of
the order of a few thousand---significantly larger than most procedures belonging to Category (A); 
and are implemented in open-source {\texttt R}-packages.

\end{itemize}

\paragraph{Contributions:}
Our contributions in this paper may be summarized as follows:
\begin{enumerate}

\item We consider a flexible family of FA estimators, which can be obtained as solutions to rank-constrained nonlinear SDO problems. 
In particular, our framework provides a unifying perspective on several existing FA estimation approaches.

\item  We propose a novel \emph{exact} reformulation of the rank-constrained FA Problem~\eqref{obj-2-0} 
as a smooth optimization problem with convex compact constraints. 
We also develop a unified algorithmic framework, utilizing modern optimization techniques 
to obtain high quality solutions to Problem~\eqref{obj-2-0}. Our algorithms, at every iteration, simply require
computing a low-rank eigen-decomposition of a $p\times p$ matrix and a structured scalable convex 
SDO.
Our proposal is capable of solving  
FA problems involving covariance matrices having dimensions upto a few thousand;
thereby making it at par with the most scalable FA methods used currently.

\item Our SDO formulation enables us to estimate the underlying factors and {unique variances} under 
the restriction that the residual covariance matrix is PSD---a characteristic that is 
absent in several popularly used FA methods. This aids statistical interpretability, especially in drawing parallels with PCA and 
understanding the proportion of variance explained by a given number of factors. 
Methods proposed herein, produce superior quality estimates, in terms of various performance metrics, when compared to existing FA approaches. To our knowledge, this is the first paper demonstrating that certifiably optimal solutions to a rank-constrained problem can be found for 
problems of realistic sizes, without making any assumptions on the underlying data.

\item Using techniques from discrete and global optimization, we develop a branch-and-bound algorithm which proves that the low-rank solutions found are often optimal in seconds for problems on the order of $p=10$ variables, in days for problems on the order of $p=100$, and in minutes for some problems on the order of $p=4000$. As the selected rank increases, so too does the computational burden of proving optimality. It is particularly crucial to note that the optimal solutions for all problems we consider are found very quickly, and that vast majority of computational time is then spent on proving optimality. Hence, for a practitioner who is not particularly concerned with certifying optimality, our techniques for finding feasible solutions provide high-quality estimates quickly.

\item We provide computational evidence demonstrating the favorable performance of our proposed method.
Finally, to the best of our knowledge this is the first paper that views 
various FA methods in a unified fashion via a modern optimization lens and attempts to compare
a wide range of FA techniques in large scale. 

\end{enumerate}

\paragraph{Structure of the paper:}
The paper is organized as follows.
In Section~\ref{sec:intro1} we propose a flexible family of optimization Problems~\eqref{obj-2-0} for the task of statistical estimation in FA models.
 Section~\ref{sec:method1} presents an exact reformulation of Problem~\eqref{obj-2-0} 
 as a nonlinear SDO without the rank constraint. Section~\ref{sec:CG-method1} describes the use of nonlinear optimization
techniques such as the Conditional Gradient (CG) method~\citep{bertsekas-99}
adapted to provide feasible solutions (upper bounds) to our formulation. 
First order methods employed to compute the convex SDO subproblems are also described in the same section.
In Section~\ref{sec:alg}, we describe our method for certifying optimality of the solutions from Section~\ref{sec:CG-method1} in the case when $q=1$. In Section~\ref{sec:compute}, we present computational results demonstrating the effectiveness of our proposed method in terms of
(a) modeling flexibility in the choice of the number of factors $r$ and the parameter $q$ (b) scalability and (c) the quality of solutions obtained in a wide array of real and synthetic datasets---comparisons with several existing methods for FA are considered.
Section~\ref{sec:conclude} contains our conclusions.

\section{Reformulations}\label{sec:method1}

Let $\B{\lambda}(\M{A})$ and $\B{\lambda}(\M{B})$ denote the vector of eigenvalues of $\M{A}$ and  $\M{B}$, respectively, arranged in decreasing order, i.e., 
\begin{equation}\label{order-e-vals-1}
 \lambda_{1}(\M A) \geq \lambda_2(\M A) \geq \ldots \geq \lambda_{p}(\M A), \;\;\;\;  \lambda_{1}(\M B) \geq \lambda_2(\M B) \geq \ldots \geq \lambda_{p}(\M B).
 \end{equation}

The following proposition presents the first reformulation of Problem~\eqref{obj-2-0}  as a continuous  eigenvalue optimization problem with convex compact constraints. Proofs of all results can be found in Appendix A. 
\begin{prop}\label{lem:reform-1}
{\bf (a)} For any $q \geq 1$, Problem~\eqref{obj-2-0} is equivalent to:
\begin{equation}\label{obj-2-0-margin}
\begin{array}{lrl}
(\textsc{CFA}_{q}) \;\;\;\;\;\;\; & \mini \;\; & 
f_{q}(\B\Phi; \B\Sigma) := \displaystyle  \sum_{i=r+1}^{p} \lambda^q_i(\B{\Sigma} - \B{\Phi} ) \\
&\sbt \;\;\; & \B{\Phi}=\mathrm{diag}(\Phi_1, \ldots, \Phi_p) \succeq \M{0} \\
 &  &  \B\Sigma - \B\Phi \succeq \M{0},
\end{array}
\end{equation}
where  $\B\Phi$ is the optimization variable. 

{\bf (b)} Suppose $\B\Phi^*$ is a minimizer of Problem~\eqref{obj-2-0-margin}, and let
 \begin{equation}\label{eqn-exp-updt-theta}
 \B\Theta^* =  \M{U} \; \diag \big ( \lambda_{1}( \B{\Sigma} - \B{\Phi}^*  ), \ldots, \lambda_{r}(\B{\Sigma} - \B{\Phi}^* ) , 0, \ldots, 0 \big) \; \M{U}',
 \end{equation}
where $\M{U}_{p \times p}$  is the matrix of eigenvectors of $\B\Sigma - \B\Phi^*$. Then $(\B\Theta^*, \B\Phi^*)$ is a solution to Problem~\eqref{obj-2-0}.
\end{prop}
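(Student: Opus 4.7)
The plan is to prove the proposition by partial optimization: for each fixed feasible $\B\Phi$ in Problem~\eqref{obj-2-0}, explicitly solve the inner problem in the variable $\B\Theta$, and show that the resulting value is $\sum_{i=r+1}^{p}\lambda_i^q(\B\Sigma - \B\Phi)$ with the optimal $\B\Theta$ given by the top-$r$ eigentruncation in~\eqref{eqn-exp-updt-theta}. Since the remaining constraints on $\B\Phi$ (diagonal, $\B\Phi\succeq 0$, $\B\Sigma-\B\Phi\succeq 0$) exactly match those in~\eqref{obj-2-0-margin}, both parts follow simultaneously.

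First I would fix any $\B\Phi$ that is feasible for \eqref{obj-2-0-margin}, write $\M{M} := \B\Sigma - \B\Phi\succeq \M{0}$ with eigendecomposition $\M{M} = \M{U}\,\diag(\lambda_1(\M{M}),\ldots,\lambda_p(\M{M}))\,\M{U}'$, and consider the inner problem
\begin{equation*}
\min_{\B\Theta}\;\;\|\M{M} - \B\Theta\|_q^q \quad \sbt\;\; \B\Theta\succeq\M{0},\;\;\rnk(\B\Theta)\leq r.
\end{equation*}
The Schatten $q$-norm is unitarily invariant, so by the Eckart--Young--Mirsky theorem any best rank-$r$ approximation of $\M{M}$ in $\|\cdot\|_q$ (without the PSD restriction) is the truncated SVD of $\M{M}$. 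Because $\M{M}\succeq \M{0}$, the truncated SVD coincides with the truncated eigendecomposition $\M{U}\,\diag(\lambda_1(\M{M}),\ldots,\lambda_r(\M{M}),0,\ldots,0)\,\M{U}'$, which is itself PSD. Hence the PSD restriction on $\B\Theta$ is inactive, and the optimum of the inner problem equals $\sum_{i=r+1}^p \lambda_i^q(\M{M}) = \sum_{i=r+1}^p \lambda_i^q(\B\Sigma - \B\Phi)$, attained by the matrix in~\eqref{eqn-exp-updt-theta}.

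Next I would verify that the feasibility regions for $\B\Phi$ match on both sides. In Problem~\eqref{obj-2-0} the explicit constraints on $\B\Phi$ are exactly $\B\Phi\succeq\M{0}$ (diagonal) and $\B\Sigma-\B\Phi\succeq\M{0}$, which is precisely the feasible set of~\eqref{obj-2-0-margin}; in particular the rank and PSD constraints on $\B\Theta$ add no implicit restriction on $\B\Phi$ since a feasible $\B\Theta$ (e.g.\ $\B\Theta=\M{0}$) always exists. Substituting the inner optimal value into \eqref{obj-2-0} and minimizing over $\B\Phi$ then yields~\eqref{obj-2-0-margin}, proving part (a). For part (b), given a minimizer $\B\Phi^*$ of~\eqref{obj-2-0-margin}, the construction~\eqref{eqn-exp-updt-theta} produces a rank-$r$ PSD $\B\Theta^*$ satisfying $\|\B\Sigma - \B\Theta^* - \B\Phi^*\|_q^q = f_q(\B\Phi^*;\B\Sigma)$, which equals the optimal value of~\eqref{obj-2-0} by part (a), so $(\B\Theta^*,\B\Phi^*)$ is optimal.

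The only subtlety, and the step I would be most careful with, is the invocation of Eckart--Young--Mirsky for the Schatten $q$-norm together with the additional $\B\Theta\succeq\M{0}$ constraint: one must argue not only that the truncated eigendecomposition achieves the unconstrained-rank optimum, but also that imposing $\B\Theta\succeq\M{0}$ cannot strictly improve the objective---this is automatic here because the unconstrained minimizer lies in the PSD cone. The remaining details (existence of minimizers for \eqref{obj-2-0-margin} via compactness of the feasible set $\{\B\Phi\succeq \M{0}:\B\Sigma-\B\Phi\succeq\M{0}\}$ and continuity of $f_q$) are routine.
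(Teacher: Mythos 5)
Your proposal is correct and follows essentially the same route as the paper: fix $\B\Phi$, observe that the inner minimization over $\{\B\Theta \succeq \M{0},\ \rnk(\B\Theta)\leq r\}$ is solved by the top-$r$ eigentruncation of $\B\Sigma-\B\Phi$ with optimal value $\sum_{i>r}\lambda_i^q(\B\Sigma-\B\Phi)$, and then minimize over $\B\Phi$. The only cosmetic difference is that you invoke the Eckart--Young--Mirsky theorem for Schatten norms as a black box (handling the PSD constraint by noting the unconstrained minimizer is already PSD), whereas the paper derives the same inner step directly from Mirsky's $q$-Wielandt--Hoffman inequality.
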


Problem~\eqref{obj-2-0-margin} is a nonlinear SDO in $\B\Phi$, 
unlike the original formulation~\eqref{obj-2-0} that estimates 
$\B\Theta$ and $\B\Phi$ jointly. Note that the rank constraint does not appear in Problem~\eqref{obj-2-0-margin} and the  constraint set of Problem~\eqref{obj-2-0-margin} is convex and compact.  
However, Problem~\eqref{obj-2-0-margin} is non-convex due to the non-convex objective function 
$\sum_{i={r+1}}^{p} \lambda^q_{i} (\B\Sigma - \B\Phi)$. 
For $q=1$  the function appearing in the objective of~\eqref{obj-2-0-margin}  is concave  and for $q >1$,  it  is neither convex nor concave.

\begin{prop}\label{equi-minus-0}
The estimation Problem~\eqref{obj-2-0} is equivalent to
\begin{equation}\label{obj-2-0-0}
\begin{aligned}
\mini\;\;\; & \;\;\; \| \B{\Sigma} - (\B{\Theta} + \B{\Phi})  \|^q_q \\
\sbt \;\;\; & \;\;\; \rnk(\B{\Theta})\leq r \\
&  \;\;\; \B{\Theta}  \succeq \M{0} \\
& \;\;\; \B{\Phi}=\mathrm{diag}(\Phi_1, \ldots, \Phi_p) \succeq \M{0} \\
&\;\;\; \B\Sigma - \B\Phi \succeq \M{0} \\
&\;\;\; \B\Sigma - \B\Theta \succeq \M{0}.
\end{aligned}
\end{equation}
\end{prop}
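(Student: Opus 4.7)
The plan is to show the two problems have the same optimal value by exhibiting, for any optimal solution of (obj-2-0), an optimal solution that also satisfies the additional PSD constraint $\B\Sigma - \B\Theta \succeq \M{0}$. Since (obj-2-0-0) is a restriction of (obj-2-0), its optimal value is automatically $\geq$ that of (obj-2-0), so only the reverse inequality requires work.

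First, I would apply Proposition~\ref{lem:reform-1}. Given any optimizer $(\B\Theta^\star, \B\Phi^\star)$ of (obj-2-0), Proposition~\ref{lem:reform-1}(a) says that $\B\Phi^\star$ solves $(\textsc{CFA}_{q})$, and Proposition~\ref{lem:reform-1}(b) says that if we \emph{redefine} $\B\Theta$ by the top-$r$ spectral truncation
\[
\widetilde{\B\Theta} = \M{U}\,\diag\!\bigl(\lambda_1(\B\Sigma-\B\Phi^\star),\ldots,\lambda_r(\B\Sigma-\B\Phi^\star),0,\ldots,0\bigr)\,\M{U}',
\]
with $\M{U}$ the orthogonal matrix of eigenvectors of $\B\Sigma-\B\Phi^\star$, then $(\widetilde{\B\Theta},\B\Phi^\star)$ is also optimal for (obj-2-0). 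This candidate already satisfies $\rnk(\widetilde{\B\Theta})\le r$, $\widetilde{\B\Theta}\succeq\M{0}$, $\B\Phi^\star\succeq\M{0}$, and $\B\Sigma-\B\Phi^\star\succeq\M{0}$, so the only remaining task is to verify the extra constraint $\B\Sigma - \widetilde{\B\Theta}\succeq\M{0}$.

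The key step is a direct spectral computation. Diagonalize $\B\Sigma-\B\Phi^\star = \M{U}\,\diag(\lambda_1,\ldots,\lambda_p)\,\M{U}'$ with $\lambda_i:=\lambda_i(\B\Sigma-\B\Phi^\star)\ge 0$ (nonnegativity coming from the feasibility constraint $\B\Sigma-\B\Phi^\star\succeq\M{0}$). Then
\[
\B\Sigma - \widetilde{\B\Theta} \;=\; (\B\Sigma-\B\Phi^\star) + \B\Phi^\star - \widetilde{\B\Theta} \;=\; \M{U}\,\diag(0,\ldots,0,\lambda_{r+1},\ldots,\lambda_p)\,\M{U}' + \B\Phi^\star,
\]
which is a sum of two PSD matrices and hence PSD. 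Thus $(\widetilde{\B\Theta},\B\Phi^\star)$ is feasible for (obj-2-0-0) with the same objective value as $(\B\Theta^\star,\B\Phi^\star)$, so the optimal value of (obj-2-0-0) is $\leq$ that of (obj-2-0), completing the equivalence.

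I do not anticipate a serious obstacle here: the whole argument reduces to the spectral identity above, and the appeal to Proposition~\ref{lem:reform-1}(b) lets us replace an arbitrary optimal $\B\Theta^\star$ by the canonical eigenvalue-truncated version without changing optimality. The only subtle point worth emphasizing in the write-up is that the original optimal $\B\Theta^\star$ itself need not satisfy $\B\Sigma-\B\Theta^\star\succeq\M{0}$; what the proposition asserts is equivalence of the problems, i.e., equality of optimal values (and, by the construction above, existence of a common optimizer), not that every optimizer of (obj-2-0) is automatically feasible for (obj-2-0-0).
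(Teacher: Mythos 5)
Your proof is correct and takes essentially the same route as the paper: the paper likewise replaces $\B\Theta$ by the top-$r$ spectral truncation of $\B\Sigma-\B\Phi$ from Proposition~\ref{lem:reform-1} and notes that $\B\Sigma-\B\Theta^* = (\B\Sigma-\B\Phi-\B\Theta^*)+\B\Phi$ is PSD as a sum of PSD matrices, so the added constraint is inactive at this minimizer. No gaps.
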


Note that Problem~\eqref{obj-2-0-0} has an additional PSD constraint $\B\Sigma - \B\Theta \succeq \M{0}$ which does not explicitly appear in Problem~\eqref{obj-2-0}. 
It is interesting to note that 
the two problems are equivalent. By virtue of  Proposition~\ref{equi-minus-0}, Problem~\eqref{obj-2-0-0} can as well be used as the estimation criterion for rank constrained FA.
However, we will work with formulation~\eqref{obj-2-0}  because it is easier to interpret from a  statistical perspecitve.

\paragraph{Special instances of $(\textsc{CFA}_{q})$:}
We show that some well-known FA estimation problems can be viewed as special cases of our general framework.

For $q=1$, Problem~\eqref{obj-2-0-margin} reduces to {\textsc{\small MRFA}}, 
as described in~\eqref{obj-2-0-margin-mrfa}.      
For $q =1$ and $r =0$, Problem~\eqref{obj-2-0-margin} reduces to {\textsc{\small MTFA}}~\eqref{orig-mtfa1}.
When $q=2,$ we get a variant of~\eqref{orig-minres1-equiv}, i.e., a PSD \emph{constrained} analogue of \textsc{\small{MINRES}}
\begin{equation}\label{obj-2-equiv22}
\begin{array} {lrl}
& \mini\limits_{  }\;\;\; & 
 \sum\limits_{i=r+1}^{p} \lambda^2_i(\B{\Sigma} - \B{\Phi} ) \\
&\sbt \;\;\; & \B{\Phi}=\mathrm{diag}(\Phi_1, \ldots, \Phi_p) \succeq \M{0} \\
 &  &  \B\Sigma - \B\Phi \succeq \M{0}. 
\end{array}
\end{equation}
Note that unlike Problem~\eqref{orig-minres1-equiv}, Problem~\eqref{obj-2-equiv22}
explicitly imposes PSD constraints on $\B\Phi$ and $\B\Sigma - \B\Phi$.
The objective function in~\eqref{obj-2-0-margin} is continuous but non-smooth. 
The function is differentiable at $\B\Phi$ if and only if  the $r$ and $(r+1)$th eigenvalues of $\B\Sigma - \B\Phi$ are distinct~\citep{lewis-96,shapiro-inference-FA-02}, i.e., 
$\lambda_{r+1}(\B{\Sigma} - \B{\Phi} )  < \lambda_{r}(\B{\Sigma} - \B{\Phi} ).$
The non-smoothness of the objective function in Problem~\eqref{obj-2-0-margin}, makes the use of standard gradient based methods problematic~\citep{bertsekas-99}.
Theorem~\ref{thm:FA-reform1-gen-q} presents  a reformulation of Problem~\eqref{obj-2-0-margin}  in which the objective function is continuously differentiable. 
\begin{theorem}\label{thm:FA-reform1-gen-q}
{\bf (a)} The estimation criterion given by Problem~\eqref{obj-2-0} 
is equivalent to\footnote{For any PSD matrix $\M{A}$, with eigendecomposition 
 $\M{A} = \M{U}_{A} \diag(\lambda_{1}, \ldots, \lambda_{p})\M{U}'_{A} $,
we define $\M{A}^q := \M{U}_{A} \diag(\lambda^{q}_{1}, \ldots, \lambda^{q}_{p})\M{U}'_{A}$, for any $q\geq 1$.}
\begin{equation}\label{FA-reform1-gen-q}
\begin{array} {l r c l }
\;\;\; & \mini\;\;\; & g_{q}(\M{W},\B{\Phi}):=& \tr(\M{W} (\B{\Sigma} - \B{\Phi} )^q ) \\
&\sbt \;\;\; && \B{\Phi}=\mathrm{diag}(\Phi_1, \ldots, \Phi_p) \succeq \M{0} \\
  & &&  \B\Sigma - \B\Phi \succeq \M{0} \\
  && & \M{I}\succeq \M{W} \succeq \M{0} \\
  &&& \tr(\M{W})= p - r .
\end{array}
\end{equation}
{\bf (b)} The solution~$\widehat{\B\Theta}$ of Problem~\eqref{obj-2-0} can be recovered from the solution
$\widehat{\M{W}},\widehat{\B\Phi}$ of Problem~\eqref{FA-reform1-gen-q} via:
\begin{equation}\label{eq:hat-theta1}
\widehat{\B\Theta} :=  \widehat{\M{U}} \diag(\widehat{\lambda}_{1}, \ldots, \widehat{\lambda}_{r}, 0, \ldots, 0) \widehat{\M{U}}',
\end{equation}
where $\widehat{\M{U}}$ is the matrix formed by  the $p$ eigenvectors corresponding to the eigenvalues 
$\widehat{\lambda}_{1}, \ldots, \widehat{\lambda}_{p}$ (arranged in decreasing order) of the matrix $\B\Sigma - \widehat{\B\Phi}$. Given
$\widehat{\B\Phi}$, any solution $\widehat{\B\Theta}$ (given by~\eqref{eq:hat-theta1}) is independent of $q$.
\end{theorem}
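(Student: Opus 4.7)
}

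The plan is to combine Proposition~\ref{lem:reform-1} (which already reduces~\eqref{obj-2-0} to the eigenvalue problem~\eqref{obj-2-0-margin} in $\B\Phi$ alone) with a classical Ky~Fan-type semidefinite representation of the sum of the smallest eigenvalues of a PSD matrix. The key auxiliary identity is that for any $\M A \succeq \M 0$ with eigenvalues $\lambda_1(\M A)\geq \cdots \geq \lambda_p(\M A)$,
\begin{equation*}
\sum_{i=r+1}^p \lambda_i(\M A) \;=\; \min_{\M W}\Bigl\{\,\tr(\M W \M A)\;:\;\M 0 \preceq \M W \preceq \M I,\ \tr(\M W) = p-r\,\Bigr\},
\end{equation*}
with the minimum attained by $\M W^\star = \M U_{\M A}\,\diag(0,\ldots,0,1,\ldots,1)\,\M U_{\M A}'$, where the ones sit in the last $p-r$ positions and $\M U_{\M A}$ is the orthonormal eigenvector matrix of $\M A$. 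This follows by diagonalizing in the eigenbasis of $\M A$, reducing the SDO to an LP on the hypercube with a single linear equality, and observing that its extreme points have $\{0,1\}$ coordinates summing to $p-r$; the optimum then concentrates on the $p-r$ smallest eigenvalues of $\M A$.

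For part~(a), I would apply this identity with $\M A = (\B\Sigma - \B\Phi)^q$. Because $\B\Sigma - \B\Phi \succeq \M 0$ is enforced in~\eqref{obj-2-0-margin}, the matrix $(\B\Sigma - \B\Phi)^q$ is well-defined via the spectral calculus given in the footnote; since $q\geq 1$ and eigenvalues are nonnegative, the ordering is preserved and $\lambda_i\bigl((\B\Sigma-\B\Phi)^q\bigr) = \lambda_i^q(\B\Sigma-\B\Phi)$. Substituting the SDO representation for the objective of~\eqref{obj-2-0-margin} turns
\begin{equation*}
f_q(\B\Phi;\B\Sigma)=\sum_{i=r+1}^p \lambda_i^q(\B\Sigma-\B\Phi) \;=\; \min_{\M W\in\mathcal W}\,\tr\bigl(\M W(\B\Sigma-\B\Phi)^q\bigr),
\end{equation*}
with $\mathcal W := \{\M W : \M 0 \preceq \M W \preceq \M I,\ \tr(\M W)=p-r\}$. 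Merging the outer minimization over $\B\Phi$ with this inner minimization over $\M W$ yields exactly~\eqref{FA-reform1-gen-q}, and by Proposition~\ref{lem:reform-1}(a) this problem is equivalent to~\eqref{obj-2-0}.

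For part~(b), given an optimal $\widehat{\B\Phi}$ from~\eqref{FA-reform1-gen-q}, projecting the joint optimum back to the original formulation is immediate from Proposition~\ref{lem:reform-1}(b): the corresponding optimal $\widehat{\B\Theta}$ is given by the truncated spectral formula~\eqref{eqn-exp-updt-theta}, which coincides verbatim with~\eqref{eq:hat-theta1}. The claimed $q$-independence is then manifest, since $q$ does not appear in this recovery formula; at a more conceptual level, this reflects the fact that the best rank-$r$ PSD approximation to a PSD matrix in \emph{any} Schatten $q$-norm is its truncated eigendecomposition.

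The main technical obstacle is the Ky~Fan-type SDO identity above, and especially justifying attainment at a rank-$(p-r)$ projector so that the joint problem~\eqref{FA-reform1-gen-q} really achieves the same optimum as~\eqref{obj-2-0-margin}; everything else is bookkeeping that traces the reformulation through Proposition~\ref{lem:reform-1}. A minor technical point worth checking is that $(\B\Sigma-\B\Phi)^q$ depends continuously on $\B\Phi$ on the feasible set (so the inner minimum is jointly attained), which is clear because $\B\Sigma - \B\Phi$ stays PSD and the $q$-th power is continuous on the PSD cone for $q\geq 1$.
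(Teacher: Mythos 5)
Your proposal is correct and follows essentially the same route as the paper: reduce via Proposition~\ref{lem:reform-1} to the eigenvalue problem in $\B\Phi$ alone, then represent $\sum_{i=r+1}^{p}\lambda_i\bigl((\B\Sigma-\B\Phi)^q\bigr)$ by the linear SDO over $\{\M 0 \preceq \M W \preceq \M I,\ \tr(\M W)=p-r\}$ and recover $\widehat{\B\Theta}$ from the truncated eigendecomposition, exactly as in the paper's part (b). The only cosmetic difference is the justification of the Ky Fan--type identity (you diagonalize and pass to an LP on the hypercube, while the paper invokes Fan's orthonormal-$\M V$ formulation together with the extreme-point structure of the spectrahedron), but both are standard arguments yielding attainment at a rank-$(p-r)$ projector.
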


In Problem~\eqref{FA-reform1-gen-q}, if we partially minimize the function $g_{q}(\M{W},\B{\Phi})$ over $\B\Phi$ (with fixed $\M{W}$), the resulting function is 
concave in $\M{W}$. This observation leads to the following proposition.
\begin{prop}\label{prop:margin-1}
The function $G_{q}(\M{W})$ obtained upon (partially) minimizing $g_{q}(\M{W},\B{\Phi})$ over $\B\Phi$ (with $\M{W}$ fixed) in Problem~\eqref{FA-reform1-gen-q}, given by
\begin{equation}\label{eg:marg-fnw1}
\begin{array}{l c l}
G_{q}(\M{W}):=& \inf\limits_{\B\Phi}&\; \;\; g_{q}(\M{W},\B{\Phi})\\
&\sbt\;\; & \B\Phi = \diag(\Phi_{1}, \ldots, \Phi_{p})\succeq \M{0} \\
&&  \B\Sigma - \B\Phi \succeq \M{0},
\end{array}
\end{equation}
is concave in $\M{W}$. The sub-gradients of the function $G_{q}(\M W)$ exist and are given by:
\begin{equation}\label{eq:subdiff1}
\nabla G_{q}(\M{W}) = (\B\Sigma - \widehat{\B\Phi}(\M{W}))^{q},
\end{equation}
where $\widehat{\B\Phi}(\M{W})$ is a minimizer of the convex optimization Problem~\eqref{eg:marg-fnw1}.
\end{prop}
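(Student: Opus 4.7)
The plan is to exploit the key observation that $g_q(\M{W}, \B\Phi) = \tr(\M{W} (\B\Sigma - \B\Phi)^q)$ is \emph{linear} in $\M{W}$ for each fixed $\B\Phi$. Once this is noted, concavity follows from the standard fact that a pointwise infimum of affine functions is concave, and the subgradient formula falls out of a one-line envelope (Danskin-type) calculation.

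First I would verify that the infimum in \eqref{eg:marg-fnw1} is attained so that $\widehat{\B\Phi}(\M{W})$ is well-defined. The feasible set $\mathcal{F} := \{\B\Phi = \diag(\Phi_1, \ldots, \Phi_p) : \B\Phi \succeq \M{0},\ \B\Sigma - \B\Phi \succeq \M{0}\}$ is closed, and because $\B\Phi$ is diagonal the two PSD constraints force $0 \leq \Phi_i \leq \Sigma_{ii}$ for each $i$; hence $\mathcal{F}$ is compact. Since $\B\Phi \mapsto (\B\Sigma - \B\Phi)^q$ is continuous on $\mathcal{F}$ (the matrix $q$-power is continuous on the PSD cone), $\B\Phi \mapsto g_q(\M{W}, \B\Phi)$ is continuous on a compact set, so a minimizer exists (not necessarily unique). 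For concavity, for any feasible $\M{W}_1, \M{W}_2$ and $\alpha \in [0,1]$, linearity in the first argument yields $g_q(\alpha \M{W}_1 + (1-\alpha)\M{W}_2, \B\Phi) = \alpha g_q(\M{W}_1, \B\Phi) + (1-\alpha) g_q(\M{W}_2, \B\Phi) \geq \alpha G_q(\M{W}_1) + (1-\alpha) G_q(\M{W}_2)$ for every $\B\Phi \in \mathcal{F}$; taking the infimum over $\B\Phi$ on the left gives $G_q(\alpha \M{W}_1 + (1-\alpha)\M{W}_2) \geq \alpha G_q(\M{W}_1) + (1-\alpha) G_q(\M{W}_2)$.

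For the subgradient formula, fix $\M{W}$ and any minimizer $\widehat{\B\Phi} = \widehat{\B\Phi}(\M{W})$. Because $\widehat{\B\Phi}$ is feasible for the inner problem at any other $\M{W}'$, one has
$$G_q(\M{W}') \;\leq\; g_q(\M{W}', \widehat{\B\Phi}) \;=\; \tr\bigl(\M{W}'(\B\Sigma - \widehat{\B\Phi})^q\bigr) \;=\; G_q(\M{W}) + \bigl\langle (\B\Sigma - \widehat{\B\Phi})^q,\ \M{W}' - \M{W}\bigr\rangle,$$
where the last equality uses $G_q(\M{W}) = \tr\bigl(\M{W}(\B\Sigma - \widehat{\B\Phi})^q\bigr)$ (since $\widehat{\B\Phi}$ is optimal at $\M{W}$) and linearity of the trace. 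This is exactly the defining inequality for $(\B\Sigma - \widehat{\B\Phi})^q$ to be a supergradient of $G_q$ at $\M{W}$, which is the notion of ``sub-gradient'' for concave functions used in the statement.

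There is essentially no hard step here: once one sees that $g_q$ is affine in $\M{W}$ with $\M{W}$ and $\B\Phi$ entering separately, the result is a textbook envelope identity. The only mild subtleties are (i) establishing compactness of $\mathcal{F}$ to guarantee attainment, and (ii) recognizing that when the minimizer $\widehat{\B\Phi}(\M{W})$ is non-unique, each selection gives a valid supergradient, so the formula in \eqref{eq:subdiff1} should be read as identifying \emph{an} element of $\partial G_q(\M{W})$ rather than the full supergradient set.
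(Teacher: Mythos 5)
Your proposal is correct and follows the same route as the paper: concavity comes from $G_q$ being a pointwise infimum of functions linear in $\M{W}$, and the subgradient formula is the standard Danskin/envelope identity, which you simply write out explicitly (with the attainment and non-uniqueness caveats) rather than citing Danskin's theorem as the paper does.
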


In light of Proposition~\ref{prop:margin-1},  we present another reformulation of Problem~\eqref{obj-2-0}, as the following concave minimization problem:
\begin{equation}\label{eq:opt-margin1}
\begin{array}{rl}
\mini\limits_{} & \;\; G_{q}(\M{W})  \\
\sbt&  \;\; \M{I} \succeq \M{W} \succeq \M{0} \\
& \;\;  \tr(\M{W}) = p - r ,
\end{array}
\end{equation}
where  the function $G_{q}(\M{W})$ is differentiable if and only if  $\widehat{\B\Phi}(\M{W})$ is unique. 

Note that  by virtue of Proposition~\ref{lem:reform-1},
 Problems~\eqref{obj-2-0-margin} and~\eqref{eq:opt-margin1} are equivalent.
 It is natural to ask therefore, whether one formulaton  might be favored over the other from a computational perspective. 
 Towards this end, note that both Problems~\eqref{obj-2-0-margin} and~\eqref{eq:opt-margin1} involve the minimization of 
 a non-smooth objective function, over convex compact constraints.
  However,
the objective function of Problem~\eqref{obj-2-0-margin} is non-convex (for $q>1$) whereas the one
 in Problem~\eqref{eq:opt-margin1} is concave (for all $q\geq 1$). 
We will see in Section~\ref{sec:CG-method1} that, 
CG based algorithms can be applied to a concave minimization problem (even if the objective function is not differentiable)---CG  
applied to general non-smooth objective functions, however,  has limited convergence guarantees. Thus, formulation~\eqref{eq:opt-margin1}
is readily amenable to CG based optimization algorithms, unlike formulation~\eqref{obj-2-0-margin}. This seems to make Problem~\eqref{eq:opt-margin1} 
computationally more appealing than Problem~\eqref{obj-2-0-margin}.

\section{Upper Bounds}\label{sec:CG-method1}
This section presents a unified computational framework for the class of problems ($\textsc{CFA}_{q}$).
Problem~\eqref{FA-reform1-gen-q} is a non-convex smooth optimization problem and obtaining a stationary point is quite challenging.
 We propose iterative schemes based on the Conditional Gradient Algorithm~\citep{bertsekas-99}---a generalization of  
 the Frank Wolfe Algorithm~\citep{frank-wolfe} to obtain a stationary point of the problem. The appealing aspect of our framework is that  
every iteration of the algorithm requires solving a convex SDO problem, which is computationally tractable. 
 While off the shelf 
interior point algorithms can be used to solve the convex SDO problems, they typically do not scale well for large problems due to intensive memory requirements.
In this vein, first order algorithms have received a lot of attention~\citep{nest_05,nest-07,nesterov2004introductory} in convex optimization of late, due 
to their low cost per iteration, low-memory requirements 
 and ability to deliver solutions of moderate accuracy for large problems, within a modest time limit. 
 We use first order methods to solve the convex SDO problems. 
We present two schemes based on the CG algorithm: 

\smallskip\smallskip

\noindent {\bf Algorithm~\ref{algo:CG:factor1}:} This  
procedure, described in Section~\ref{sec:smooth-cg} applies CG to Problem~\eqref{FA-reform1-gen-q}, where the objective function 
$g_{q}(\M{W}, \B\Phi)$ is smooth. 

\smallskip\smallskip

\noindent {\bf Algorithm~\ref{algo:ao1-sk}:} This scheme, described in Section~\ref{sec:ao1}, applies CG on
the optimization Problem~\eqref{eq:opt-margin1}, where the
 function $G_{q}(\M{W})$ defined in~\eqref{eg:marg-fnw1} is concave (and possibly non-smooth).

\smallskip

To make notations simpler, we will 
use the following shorthand:
\begin{equation}\label{feas-W}
{\B \Psi}_{p,p- r} := \{ \M{W} : \M{I}\succeq \M{W} \succeq \M{0},\;\; \tr(\M{W})= p - r \}
\end{equation}
and $\F = \{\Ph: \s-\Ph\psd,\;\; \Ph = \diag(\phi_1,\ldots,\phi_p)\psd\}$.

\subsection{A CG based algorithm for formulation~\eqref{FA-reform1-gen-q}}\label{sec:smooth-cg}
Since Problem~\eqref{FA-reform1-gen-q} involves the minimization of a smooth function over a compact convex set, 
the CG method requires iteratively solving the following convex optimization problem:
\begin{equation}
\begin{aligned}\label{smooth-cg-bh1}
\mini_{} \;\;& \left \langle \nabla  g_{p}(\M{W}^{(k)}, \B{\Phi}^{(k)}), \begin{pmatrix} \M{W} \\ \B\Phi \end{pmatrix}   \right \rangle\\
\sbt \;\; & \M{W} \in {\B \Psi}_{p,p- r} \\
& \Ph\in\Fs,
\end{aligned}
\end{equation}
where $\nabla  g_{p}(\M{W}^{(k)}, \B{\Phi}^{(k)})$ is the derivative of $ g_{p}(\M{W}^{(k)}, \B{\Phi}^{(k)})$ at the 
current iterate $(\M{W}^{(k)},$ $ \B{\Phi}^{(k)})$. Note that  due to the separability of the constraints in $\M{W}$ and $\B\Phi$, 
Problem~\eqref{smooth-cg-bh1} splits into two independent optimization problems 
with respect to $\M{W}$ and $\B\Phi$.
The overall procedure is outlined in Algorithm~\ref{algo:CG:factor1}.
\begin{algorithm}[h]
 \begin{itemize}
\item[1] Initialize with $(\M{W}^{(1)}, \B{\Phi}^{(1)})$, feasible for Problem~\eqref{FA-reform1-gen-q} and repeat the following 
Steps~2-3 until the convergence criterion described in~\eqref{smooth-cg-convtol1} is met.
\item[2] Solve the linearized Problem~\eqref{smooth-cg-bh1}, which requires solving two separate (convex) SDO problems over $\M{W}$ and $\B\Phi$:
\begin{flalign}
\overline{\M{W}}^{(k+1)} &\in&  \argmin_{\M{W} \in \B\Psi_{p, p -r} } \;\;\; & \langle \M{W}, \nabla_{\M{W}} g_{p}(\M{W}^{(k)}, \B{\Phi}^{(k)}) \rangle \;\;\;\; &&&& \label{eq:cg-fa-updw}\\
\overline{\B\Phi}^{(k+1)} &\in& \argmin_{\B\Phi\in \Fs}\;\; &\;\;  \langle \B\Phi, \nabla_{\B\Phi} g_{p}(\M{W}^{(k)}, \B{\Phi}^{(k)}) \rangle  \;\;\;\; \label{eq:cg-fa-updphi}&&&& 
\end{flalign}
where $\nabla_{\M{W}} g_{p}(\M{W}, \B{\Phi})$  (and   $\nabla_{\B\Phi} g_{p}(\M{W}, \B{\Phi})$) is the partial derivative with respect to $\M{W}$ 
(respectively, $\B\Phi$). 
\item[3] Obtain the new iterates:
\begin{equation}
\begin{array}{l c l c l}
\M{W}^{(k+1)} &=& \M{W}^{(k)} &+&  \eta_k (\overline{\M{W}}^{(k+1)} - \M{W}^{(k)}), \\
\B{\Phi}^{(k+1)} &=& \B{\Phi}^{(k)} &+&  \eta_k (\overline{\B{\Phi}}^{(k+1)} - \B{\Phi}^{(k)}) .
 \end{array}
 \end{equation} 
 with $\eta_k \in [0,1]$ chosen via
 an Armijo type line-search rule~\citep{bertsekas-99}.
\end{itemize}
\caption{A CG based algorithm for Problem~\eqref{FA-reform1-gen-q}}\label{algo:CG:factor1}
\end{algorithm}

Since $\nabla_{\M{W}} g_{p}(\M{W}^{(k)}, \B{\Phi}^{(k)})=(\B\Sigma -  \B{\Phi}^{(k)})^{q}$,
the update for $\M{W}$ appearing in~\eqref{eq:cg-fa-updw} requires solving:
\begin{equation}\label{eq:cg-fa-updw-2}
\mini_{\M{W} \in \B\Psi_{p, p - r} } \;\;\;  \langle \M{W}, (\B\Sigma -  \B{\Phi}^{(k)})^{q} \rangle.
\end{equation}

Similarly, the update for $\B\Phi$ appearing 
in~\eqref{eq:cg-fa-updphi} requires solving:
\begin{equation}  \label{eq:cg-fa-updphi-2}
\begin{aligned}
\mini_{\B\Phi\in \Fs} \;\; \;\;   \sum_{i=1}^{p} \Phi_{i}\ell_{i},
\end{aligned}
\end{equation}
where the vector $(\ell_{1}, \ldots, \ell_{p})$  is given by
$\diag(\ell_1, \ldots, \ell_{p})  = -q\; \diag \left ( \M{W^{(k)}}(\B\Sigma - \B\Phi)^{q-1}  \right ),$
where $\diag(\M{A})$ is a diagonal matrix having the same diagonal entries as $\M{A}$.
The sequence $(\M{W}^{(k)},\B\Phi^{(k)})$ is recursively computed via Algorithm~\ref{algo:CG:factor1}, until a convergence criterion is met:
\begin{equation}\label{smooth-cg-convtol1}
 g_{p}(\M{W}^{(k)},\B\Phi^{(k)}) - g_{p}(\M{W}^{(k+1)},\B\Phi^{(k+1)})  \leq \mbox{\small TOL} \cdot g_{p}(\M{W}^{(k)},\B\Phi^{(k)}),
\end{equation}
for some user-defined tolerance $ \mbox{\small TOL} >0$.

A tuple $(\M{W}^*, \B\Phi^*)$ satisfies the first order stationarity conditions~\citep{bertsekas-99}  for Problem~\eqref{FA-reform1-gen-q}, if the following condition is satisfied:
\begin{equation}\label{fo-stat-11}
\begin{aligned}
\inf \;\;\;& \left \langle \nabla  g_{p}(\M{W}^{*}, \B{\Phi}^{*}), \begin{pmatrix} \M{W} - \M{W}^{*} \\ \B\Phi - {\B\Phi}^{*}  \end{pmatrix}   \right \rangle & \geq \quad&\quad\;\; 0 \\
\sbt \;\;\; & \M{W} \in {\B \Psi}_{p,p- r} \\
& \Ph\in\Fs.
\end{aligned}
\end{equation}
Note that $\B\Phi^*$ defined above also satisfies the first order stationarity conditions for Problem~\eqref{obj-2-0-margin}.

The following theorem presents a global convergence guarantee for Algorithm~\ref{algo:CG:factor1}:
\begin{theorem}\citep{bertsekas-99} \label{thm:cg-bert1}
Every limit point $(\M{W}^{(\infty)},\B\Phi^{(\infty)})$ of a sequence  $(\M{W}^{(k)},$ $\B\Phi^{(k)})$ produced by
Algorithm~\ref{algo:CG:factor1}
is a first order stationary point of the optimization Problem~\eqref{FA-reform1-gen-q}.
\end{theorem}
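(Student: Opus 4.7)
The plan is to verify that Algorithm~\ref{algo:CG:factor1} falls within the scope of the classical convergence theory for the Conditional Gradient method on a smooth objective over a compact convex set, as developed in~\cite{bertsekas-99}, and then to invoke that theory. The hypotheses to check are: (i) the feasible set $\B\Psi_{p,p-r}\times\F$ is nonempty, convex, and compact; (ii) the objective $g_{q}(\M{W},\B\Phi)=\tr(\M W(\B\Sigma-\B\Phi)^{q})$ is continuously differentiable on an open set containing this feasible region; (iii) the step-size rule ensures sufficient (Armijo-type) decrease at every non-stationary iterate.

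For (i), $\B\Psi_{p,p-r}$ is the intersection of the spectrahedra $\{\M0\preceq\M W\preceq\M I\}$ with the hyperplane $\tr(\M W)=p-r$, hence closed, bounded, and convex. The set $\F$ is the set of diagonal $\B\Phi$ with $\M0\preceq\B\Phi\preceq\B\Sigma$ (writing the second condition component-wise, $0\le\Phi_{i}\le\Sigma_{ii}$), which is a compact box. For (ii), on the feasible region $\B\Sigma-\B\Phi\succeq\M0$, and I would argue smoothness of $\B\Phi\mapsto(\B\Sigma-\B\Phi)^{q}$ via the spectral calculus---for integer $q$ this is obvious from $(\B\Sigma-\B\Phi)^{q}$ being a polynomial in the entries, while for general $q\ge 1$ one uses that the map $\M A\mapsto\M A^{q}$ is continuously differentiable on an open neighborhood of the PSD cone (with a smooth extension just past the boundary). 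The $\M W$-part of $g_{q}$ is linear, so $g_{q}$ is $C^{1}$ with a Lipschitz gradient on the compact feasible region.

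With (i)--(ii) in hand, the CG subproblem~\eqref{smooth-cg-bh1} admits a minimizer at every iterate, so $(\overline{\M W}^{(k+1)},\overline{\B\Phi}^{(k+1)})$ is well defined. Because $\nabla g_{q}$ is Lipschitz on the feasible set, the Armijo rule on the ray from $(\M W^{(k)},\B\Phi^{(k)})$ to $(\overline{\M W}^{(k+1)},\overline{\B\Phi}^{(k+1)})$ produces a step $\eta_{k}\in[0,1]$ yielding the standard sufficient-decrease inequality
\begin{equation*}
g_{q}(\M W^{(k+1)},\B\Phi^{(k+1)})\;\le\;g_{q}(\M W^{(k)},\B\Phi^{(k)})-\sigma\eta_{k}\,\Delta_{k},
\end{equation*}
where $\Delta_{k}:=-\bigl\langle\nabla g_{q}(\M W^{(k)},\B\Phi^{(k)}),(\overline{\M W}^{(k+1)}-\M W^{(k)},\overline{\B\Phi}^{(k+1)}-\B\Phi^{(k)})\bigr\rangle\ge 0$ is the Frank--Wolfe gap and $\sigma\in(0,1)$ is the Armijo constant. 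Since $\{g_{q}(\M W^{(k)},\B\Phi^{(k)})\}$ is non-increasing and bounded below on the compact feasible region, its differences are summable against $\eta_{k}\Delta_{k}$, giving $\eta_{k}\Delta_{k}\to 0$.

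Finally, I would conclude in the usual way: fix a limit point $(\M W^{(\infty)},\B\Phi^{(\infty)})$ and a subsequence converging to it. By continuity of $\nabla g_{q}$ and of the linear minimization oracle (a standard Berge maximum-theorem argument, using compactness of the feasible set), the Frank--Wolfe gap $\Delta_{k}$ passes to the limit and equals the gap at $(\M W^{(\infty)},\B\Phi^{(\infty)})$. A short case analysis on the Armijo step sizes---either $\eta_{k}$ stays bounded away from $0$ along the subsequence, in which case $\Delta_{k}\to 0$ directly, or $\eta_{k}\to 0$, in which case the Armijo rejection condition at the preceding trial step, combined with Lipschitzness of $\nabla g_{q}$, again forces $\Delta_{k}\to 0$---shows the limiting gap is zero. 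This is precisely the stationarity condition~\eqref{fo-stat-11}. I expect the main obstacle to be (ii), i.e., a careful justification that $\B\Phi\mapsto(\B\Sigma-\B\Phi)^{q}$ is $C^{1}$ for non-integer $q\ge 1$ at points where $\B\Sigma-\B\Phi$ is only PSD (not PD); this can be handled either by restricting to the generic case and perturbing, or by appealing to Daleckii--Krein-type formulas for derivatives of matrix functions on the PSD cone.
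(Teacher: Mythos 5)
Your proposal is correct and takes essentially the paper's own route: the paper asserts Theorem~\ref{thm:cg-bert1} by direct citation to the classical conditional-gradient convergence theory of \citet{bertsekas-99}, and your argument is precisely the standard verification of that theory's hypotheses (nonempty compact convex feasible set, continuously differentiable objective, Armijo-type sufficient decrease) followed by the textbook limit-point argument, including the honest caveat about differentiability of $\B\Phi\mapsto(\B\Sigma-\B\Phi)^{q}$ for non-integer $q$ at the boundary of the PSD cone, which the paper itself glosses over. One small correction: $\F$ is not the box $\{0\le\Phi_i\le\Sigma_{ii}\ \forall i\}$---the constraint $\B\Sigma-\B\Phi\succeq\M{0}$ is strictly stronger than its diagonal consequence---but $\F$ is nonetheless nonempty, convex, and compact, which is all your argument actually uses.
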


\subsection{A CG based algorithm for Problem~\eqref{eq:opt-margin1}}\label{sec:ao1}
The CG method for Problem~\eqref{eq:opt-margin1} 
requires solving a linearization of the concave objective function.
At iteration $k$, if $\M{W}^{(k)}$ is the current estimate of $\M{W}$, 
the new estimate $\M{W}^{(k+1)}$ is obtained by
\begin{equation} \label{upd-w-ao-1}
\begin{aligned}
\M{W}^{(k+1)} &\in&   \argmin_{\M{W} \in \B\Psi_{n, p-r} } \;\; \left\langle \nabla G_{q}(\M{W}^{(k)})  , \M{W} \right\rangle & = & \argmin_{\M{W} \in \B\Psi_{n, p-r} } \;\; \tr (\M{W} (\B\Sigma - \B\Phi^{(k)})^{q}) ,
 \end{aligned}
 \end{equation}
 where by Proposition~\ref{prop:margin-1}, $(\B\Sigma - \B\Phi^{(k)})^{q}$ is a sub-gradient of  $G_{q}(\M{W})$ at $\M{W}^{(k)}$ with
 $\B\Phi^{(k)}$ given by
\begin{equation}\label{upd-phi-ao-1}
\begin{array}{ll rl}
\B\Phi^{(k)} &\in& \argmin\limits_{\B\Phi\in\Fs}& \;\; \tr (\M{W}^{(k)} (\B\Sigma - \B\Phi)^{q})
\end{array}
\end{equation}
No explicit line search is necessary here, since the minimum will always be at the new point, i.e., $\B\Phi^{(k)}$ due to the concavity of the objective function.
The sequence $\M{W}^{(k)}$ is recursively computed via~\eqref{upd-w-ao-1}, until the  convergence criterion 
\begin{equation}\label{concave-cg-con-1}
G_{q}(\M{W}^{(k)}) -  G_{q}(\M{W}^{(k+1)}) \leq \mbox{\small TOL} \cdot G_{q}(\M{W}^{(k)}),
\end{equation}
for some user-defined tolerance $ \mbox{\small TOL} >0$, is met.
A short description of the procedure appears in Algorithm~\ref{algo:ao1-sk}.
\begin{algorithm} [htpb!]
{{\fontsize{12pt}{12pt}\selectfont
\begin{itemize}
\item[1] Initialize with $\M{W}^{(1)}$ ($k = 1$), feasible for Problem~\eqref{eq:opt-margin1} and repeat, for $k \geq 2$,  Steps~2-3 until 
convergence criterion~\eqref{concave-cg-con-1} is satisfied.
\item[2] Update $\B\Phi$ (with  $\M{W}$ fixed) by solving~\eqref{upd-phi-ao-1}.
\item[3] Update $\M{W}$ (with $\B\Phi$  fixed) by solving~\eqref{upd-w-ao-1}, to get $\M{W}^{(k+1)}$.
\end{itemize}
\caption{A CG based algorithm for formulation~\eqref{eq:opt-margin1}}\label{algo:ao1-sk}}}
\end{algorithm}

Before we present the convergence rate of Algorithm~\ref{algo:ao1-sk}, we will need to introduce some  notation. 
For any point $\overline{\M W}$ belonging to the feasible set of Problem~\eqref{eq:opt-margin1}
let us define $\Delta(\overline{\M{W}})$ as follows:
\begin{equation}\label{defn-delta-w-1}
\Delta(\overline{\M{W}}) : = \inf_{\M{W} \in \B\Psi_{p , p - r}} \langle \nabla G_{q}(\overline{\M{W}}) ,  \M{W} - \overline{\M{W}}  \rangle.
\end{equation}
$\M{W}^*$ satisfies the first order stationary condition for Problem~\eqref{eq:opt-margin1} if: 
$\M{W}^*$ is feasible for the problem and $\Delta(\M{W}^*) \geq 0$.

We now present Theorem~\ref{lem:conv-rate-1} establishing 
the rate of convergence and associated convergence properties of Algorithm~\ref{algo:ao1-sk}.

\begin{theorem}\label{lem:conv-rate-1}
If $\M{W}^{(k)}$ is a sequence produced by Algorithm~\ref{algo:ao1-sk}, then 
$ G_{q}({\M{W}}^{(k)})$ is a monotone decreasing sequence and
every limit point $\M{W}^{(\infty)}$ of the sequence $\M{W}^{(k)}$ is a stationary point of  Problem~\eqref{eq:opt-margin1}.
Furthermore, Algorithm~\ref{algo:ao1-sk} has a convergence rate of 
$O(1/K)$ (with $K$ denoting the iteration index) to a first order stationary point of Problem~\eqref{eq:opt-margin1}, i.e., 
\begin{equation}\label{complexity-rule1-adapt}
\min\limits_{i=1, \ldots, K}  \left \{ -\Delta(\M{W}^{(i)}) \right\}  \leq \frac{ G_{q}({\M{W}}^{(1)} )- G_{q}({\M{W}}^{(\infty)} )}{K}. 
\end{equation}
\end{theorem}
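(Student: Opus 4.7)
The plan is to exploit the concavity of $G_q$ to obtain both the monotone descent and the $O(1/K)$ rate in essentially one shot, and then handle stationarity of limit points by a semicontinuity argument on $\Delta(\cdot)$.

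First I would record the supergradient inequality that drops out of concavity: for any $\M{W}, \M{V} \in \B\Psi_{p,p-r}$, taking any subgradient $s(\M{W}) \in \partial G_q(\M{W})$ (by Proposition~\ref{prop:margin-1}, we may use $s(\M{W}) = (\B\Sigma - \widehat{\B\Phi}(\M{W}))^q$), we have
\begin{equation*}
G_q(\M{V}) \;\leq\; G_q(\M{W}) + \langle s(\M{W}), \M{V} - \M{W} \rangle.
\end{equation*}
Applying this with $\M{W} = \M{W}^{(k)}$ and $\M{V} = \M{W}^{(k+1)}$, and using that by \eqref{upd-w-ao-1} the new iterate minimizes $\langle s(\M{W}^{(k)}), \M{W} \rangle$ over $\B\Psi_{p,p-r}$, I obtain
\begin{equation*}
G_q(\M{W}^{(k+1)}) - G_q(\M{W}^{(k)}) \;\leq\; \Delta(\M{W}^{(k)}) \;\leq\; 0,
\end{equation*}
where the last inequality holds since $\M{W}^{(k)}$ itself is feasible in the defining infimum \eqref{defn-delta-w-1}. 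This proves monotone decrease of $G_q(\M{W}^{(k)})$. Since $G_q$ is continuous on the compact set $\B\Psi_{p,p-r}$ it is bounded below, so the sequence $G_q(\M{W}^{(k)})$ has a finite limit $G_q(\M{W}^{(\infty)})$.

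Next I would telescope the inequality above over $i = 1, \dots, K$ to get
\begin{equation*}
\sum_{i=1}^{K}\bigl(-\Delta(\M{W}^{(i)})\bigr) \;\leq\; G_q(\M{W}^{(1)}) - G_q(\M{W}^{(K+1)}) \;\leq\; G_q(\M{W}^{(1)}) - G_q(\M{W}^{(\infty)}).
\end{equation*}
Since every term on the left is nonnegative, bounding the minimum by the average gives exactly \eqref{complexity-rule1-adapt}. This also shows that along some subsequence (indexed, say, by $k_j$) one has $\Delta(\M{W}^{(k_j)}) \to 0$.

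The main obstacle is the last claim: that any limit point $\M{W}^{(\infty)}$ is first order stationary, i.e.\ $\Delta(\M{W}^{(\infty)}) \geq 0$. Here $G_q$ is merely concave (not smooth), so I cannot appeal to continuity of $\nabla G_q$; instead I would argue via upper semicontinuity of $\Delta$. Passing to a further subsequence on which $\M{W}^{(k_j)} \to \M{W}^{(\infty)}$, I would use that the argmin map $\M{W} \mapsto \widehat{\B\Phi}(\M{W})$ defined by \eqref{upd-phi-ao-1} is upper semicontinuous as a set-valued map (Berge's maximum theorem applied to the convex SDO parametrized continuously by $\M{W}$ on the compact feasible set $\F$). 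This lets me extract a cluster point $\widehat{\B\Phi}^{(\infty)}$ of $\widehat{\B\Phi}(\M{W}^{(k_j)})$ that is itself a minimizer defining a valid subgradient $s(\M{W}^{(\infty)}) = (\B\Sigma - \widehat{\B\Phi}^{(\infty)})^q \in \partial G_q(\M{W}^{(\infty)})$. Since the linear objective $\langle s(\M{W}^{(k_j)}), \M{W} - \M{W}^{(k_j)} \rangle$ converges uniformly on the compact set $\B\Psi_{p,p-r}$ to $\langle s(\M{W}^{(\infty)}), \M{W} - \M{W}^{(\infty)} \rangle$, the infima also converge, yielding
\begin{equation*}
0 \;=\; \lim_{j} \Delta(\M{W}^{(k_j)}) \;\geq\; \Delta(\M{W}^{(\infty)}),
\end{equation*}
which together with $\Delta(\M{W}^{(\infty)}) \leq 0$ gives equality and hence stationarity. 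This upper semicontinuity step is the only delicate point; everything else is direct telescoping and the concavity inequality.
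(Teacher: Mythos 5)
Your proof is correct and follows essentially the same route as the paper's: the concavity (supergradient) inequality combined with the fact that $\M{W}^{(k+1)}$ minimizes the linearization over $\B\Psi_{p,p-r}$ gives $G_q(\M{W}^{(k+1)})-G_q(\M{W}^{(k)})\leq\Delta(\M{W}^{(k)})\leq 0$, and telescoping yields \eqref{complexity-rule1-adapt}. The only difference is that your Berge-type semicontinuity argument for stationarity of limit points is more detailed than the paper's treatment, which simply asserts $\Delta(\M{W}^{(\infty)})\geq 0$ once the objective values converge.
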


\subsection{Solving the convex SDO Problems}\label{sec:subprob1}
Both the variants of CG---Algorithms~\ref{algo:CG:factor1} and~\ref{algo:ao1-sk} presented in Sections~\ref{sec:smooth-cg} and~\ref{sec:ao1}---require sequentially solving convex SDO problems in $\M{W}$ and $\B\Phi$. 
We describe herein, how these subproblems can be solved efficiently.
\subsubsection{Solving the SDO problem with respect to $\M{W}$}\label{sec:subprob-w1}
A generic SDO problem associated with Problems~\eqref{upd-w-ao-1} and~\eqref{eq:cg-fa-updw-2} requires updating $\M{W}$ as:
\begin{equation}\label{eq:gen-upd-w1}
\widehat{\M{W}}  \;\;\; \in \;\;\; \argmin_{\M{W} \in \B\Psi_{p, p-r}} \;\;\; \langle \M{W}, \widetilde{\M{W}}  \rangle,
\end{equation}
for some fixed symmetric $\widetilde{\M{W}}_{p \times p}$, depending upon the algorithm and the choice of $q$, described as follows.
For Algorithm~\ref{algo:CG:factor1}  the update for $\M{W}$ at iteration $k$ 
for solving Problem~\eqref{eq:cg-fa-updw-2} corresponds to $\widetilde{\M{W}}=(\B\Sigma -  \B{\Phi}^{(k)})^{q}$.
For Algorithm~\ref{algo:ao1-sk}  the update in $\M{W}$ at iteration $k$ for Problem~\eqref{upd-w-ao-1},
corresponds to $\widetilde{\M{W}}= (\B\Sigma - \B\Phi^{(k+1)})^{q}$.

A solution to Problem~\eqref{eq:gen-upd-w1}  is given by 
$\widehat{\M{W}} = \sum_{i = r +1 }^{p} \M{u}_{i} \M{u}'_{i} $, where $\M{u}_1, \ldots, \M{u}_{p}$ are the eigenvectors of the matrix 
$\widetilde{\M{W}}$, corresponding to the eigenvalues 
$\lambda_{1}(\widetilde{\M{W}}) , \ldots, \lambda_{p}(\widetilde{\M{W}})$.

\subsubsection{Solving the SDO problem with respect to $\B\Phi$}\label{sec:subprob-phi1}
The SDO problem arising from the update of $\B\Phi$ is
not as straightforward as the update with respect to $\M{W}$.
Before presenting the general case, it helps to consider a few special cases of ($\textsc{CFA}_{q}$).

For $q=1$  the objective function of Problem~\eqref{FA-reform1-gen-q}
\begin{equation}\label{eq:g_1-c-d-1} 
 g_{1}(\M{W}, \B\Phi) =  \langle \M{W}, \B\Sigma \rangle - \sum_{i=1}^{p} w_{ii} \Phi_{i}
\end{equation}
is linear in $\B\Phi$ (for fixed $\M{W}$). For $q=2$, the objective function of Problem~\eqref{FA-reform1-gen-q}
\begin{equation}\label{line1-simpli}
g_{2}(\M{W}, \B\Phi) =  \tr (\M{W} \B{\Sigma}^2) + \sum_{i=1}^{p} (w_{ii}\Phi_{i}^2 -  2\langle \M{w}_{i},\B\sigma_{i} \rangle \Phi_{i})  
\end{equation}
is a convex quadratic in $\B\Phi$ (for fixed $\M{W}$).

While solving Problem~\eqref{smooth-cg-bh1} (as a part of Algorithm~\ref{algo:CG:factor1}) all subproblems with respect to $\B\Phi$ have a linear objective function. 
For Algorithm~\ref{algo:ao1-sk}, the partial minimizations with respect to $\B\Phi$, 
for $q=1$ and $q=2$,  
require minimizing Problems~\eqref{eq:g_1-c-d-1} and~\eqref{line1-simpli}, respectively. 

 Various instances of optimization problems with respect to $\B{\Phi}$ appearing in Algorithms~\ref{algo:CG:factor1} and~\ref{algo:ao1-sk},  
can be viewed as special cases of  the following family of SDO problems:
\begin{equation}\label{gen-quad-upd-phi-2}
\begin{array} {l r l}
& \mini\limits_{\B\Phi\in\Fs} \;\; & \sum\limits_{i=1}^{p}  \left ( c_{i} \Phi^2_i + d_i \Phi_i  \right)
\end{array}
\end{equation}
where $c_{i}\geq 0$ and $d_{i}$ for $i = 1, \ldots, p$ are problem parameters
that depend upon the choice of Algorithm~\ref{algo:CG:factor1} or~\ref{algo:ao1-sk} and $q$. 
We now present a first order convex optimization scheme for solving~\eqref{gen-quad-upd-phi-2}.
 \subsubsection*{A First Order Scheme for Problem~\eqref{gen-quad-upd-phi-2}}
 With the intention of providing a
 simple and scalable algorithm for the convex SDO problem,
 we use the Alternating Direction Method of Multipliers~\citep{bertsekas-99,boyd-admm1} (\textsc{\small{ADMM}}). 
 We introduce a splitting variable $\B\Lambda=\B\Sigma - \B\Phi$ and rewrite Problem~\eqref{gen-quad-upd-phi-2} in the following equivalent form:
\begin{equation}\label{FA-reform1-phi-1}
\begin{aligned}
\mini_{\B{\Phi},\B\Lambda}\;\; \;&  \sum_{i=1}^{p}  ( c_i \Phi^2_i + d_i \Phi_i  ) \\
\sbt \;\;\; & \B{\Phi}=\mathrm{diag}(\Phi_1, \ldots, \Phi_p) \succeq \M{0} \\
  & \B\Lambda \succeq \M{0} \\
  & \B\Lambda =  \B\Sigma - \B\Phi .
 \end{aligned}
\end{equation}
The Augmented Lagrangian for the above problem is:
\begin{equation}\label{FA-reform1-phi-lag}
{\cal L}_\rho ( \B\Phi, \B\Lambda, \B\nu) := 
  \sum_{i=1}^{p}  ( c_i \Phi^2_i + d_i \Phi_i  )  + \Big \langle \B\nu, \B\Lambda - \left ( \B\Sigma - \B\Phi \right) \Big\rangle + \frac{\rho}{2}\| \B\Lambda - ( \B\Sigma - \B\Phi) \|_2^2,
\end{equation}
where $\rho>0$ is a scalar, $\langle \cdot , \cdot \rangle$ denotes the standard trace inner product.  
\textsc{\small{ADMM}} involves the following three updates:
\begin{align}
\B\Phi^{(k+1)} \in \;\;\;& \argmin_{\B{\Phi}=\mathrm{diag}(\Phi_1, \ldots, \Phi_p) \succeq \M{0}}  \;\;\; {\cal L}_\rho ( \B\Phi, \B\Lambda^{(k)}, \B\nu^{(k)}),  \label{line-1-phi}&&&& \\
\B\Lambda^{(k+1)} \in \;\;\;& \argmin_{\B\Lambda \succeq \M{0}}  \;\;\; {\cal L}_\rho ( \B\Phi^{(k+1)}, \B\Lambda, \B\nu^{(k)}) ,\label{line-2-phi} &&&& \\
\B\nu^{(k+1)} =  \;\;\;& \B\nu^{(k)} + \rho(\B\Lambda^{(k+1)} - (\B\Sigma - \B\Phi^{(k+1)}) ),&\label{line-3-phi} &&&&
\end{align}
and produces a
sequence $\{(\B\Phi^{(k)},\B\lambda^{(k)},\B\nu^{(k)})\}, k \geq 1$ ---
the convergence properties of  the algorithm are quite well known~\citep{boyd-admm1}.

Problem~\eqref{line-1-phi} can be solved in closed form as:
\begin{equation}
\begin{aligned}
\text{i.e}, \;\;\; & \Phi^{(k+1)}_{i} =& \frac{\rho}{\rho + 2c_{i}} \max \left \{  (\sigma_{ii} - \lambda^{(k)}_{ii})  - \frac{(d_{i} + \nu^{(k)}_{ii})}{\rho} , 0 \right \},\;\;\; i = 1,\ldots, p. 
\end{aligned}
\end{equation}
The update with respect to $\B\Lambda$ in~\eqref{line-2-phi} requires an eigendecomposition:
\begin{equation}\label{upd-fa-lam-1}
\begin{aligned}
\B\Lambda^{(k+1)}=& \argmin_{\B{\Lambda} \succeq \M{0}} \;\; \left \| \B{\Lambda} - ( \B\Sigma - \B\Phi^{(k+1)} -   \frac{1}{\rho} \B\nu^{(k)} ) \right \|_2^2 \\
=& {\cal P}_{S^{+}_p}\left ( \B\Sigma - \B\Phi^{(k+1)} -  \frac{1}{\rho} \B\nu^{(k)}  \right ),
\end{aligned}
\end{equation}
where, the operator ${\cal P}_{S^{+}_p}(\M{A})$ denotes the projection of a symmetric matrix $\M{A}$ onto the cone of PSD matrices of dimension $p \times p$: 
${\cal P}_{S^{+}_p}(\M{A}) = \M{U}_{A}\diag\bigg(\max \left\{ \lambda_{1}, 0 \right \} , \ldots, \max \left\{ \lambda_{p}, 0 \right\}\bigg) \M{U} _A',$ where,
$\M{A}= \M{U}_{A}\diag(\lambda_1, \ldots, \lambda_{p}) \M{U}'_{A}$ is the eigendecomposition of $\M{A}$.

\paragraph{Stopping criterion:}
The ADMM iterations~\eqref{line-1-phi}---\eqref{line-3-phi} are continued till the values of $\|\B\Lambda^{(k+1)} - (\B\Sigma - \B\Phi^{(k+1)}) \|_2$
 and the relative change in the objective values of Problem~\eqref{gen-quad-upd-phi-2} become 
smaller than a certain threshold, say, $\mathrm{TOL}\times\alpha$, where $\alpha \in \{ 10^{-1} , \ldots, 10^{-3} \}$ --- this 
is typically taken to be smaller than the convergence threshold for the CG iterations ($\mathrm{TOL}$).

\paragraph{Computational cost of Problem~\eqref{gen-quad-upd-phi-2}:}

The most intensive computational stage in the \textsc{\small{ADMM}} procedure is in performing the projection 
operation~\eqref{upd-fa-lam-1}---requiring $O(p^3)$ due to the associated eigendecomposition. This needs to be done for as many \textsc{\small{ADMM}} steps, until convergence. 

Since Problem~\eqref{gen-quad-upd-phi-2} is embedded inside iterative procedures like  Algorithms~\ref{algo:CG:factor1} and~\ref{algo:ao1-sk}, 
the estimates of $(\B\Phi, \B{\Lambda}, \B\nu)$ obtained by solving Problem~\eqref{gen-quad-upd-phi-2} for a iteration index (of the CG algorithm) 
 provides a good warm-start for the Problem~\eqref{gen-quad-upd-phi-2} in the subsequent CG iteration.  This is often found to decrease the number of iterations required by the ADMM algorithm to  converge to a prescribed level of accuracy. 

\subsection{Computational cost of Algorithms~\ref{algo:CG:factor1} and~\ref{algo:ao1-sk}}\label{sec:comp-complex-1}
For both Algorithms~\ref{algo:CG:factor1} and~\ref{algo:ao1-sk} the computational bottleneck is in performing 
the eigendecomposition of a $p \times p$ matrix: the $\M{W}$ update requires performing a low-rank eigendecomposition of a 
$p \times p$ matrix and the $\B\Phi$ update requires solving a problem of the form~\eqref{gen-quad-upd-phi-2}, which 
also costs $O(p^3)$.  Since eigendecompositions can easily be done for $p$ of the order of a few thousands, the proposed 
algorithms can be applied to that scale.

Note that most existing popular algorithms for FA, belonging to Category (B) (see Section~\ref{sec:categories}) also perform an 
eigendecomposition with cost $O(p^3)$. Thus it appears that Category (B) and the algorithms proposed herein have the same computational complexity 
and hence these two estimation approaches are equally scalable. 

While both Algorithms~1 and~2 can be used for $q \in \{1, 2\}$, for arbitrary $q \geq 1$ we recommend using Algorithm~1 due to the simple 
linear SDO problems in $\M{W}$ and $\B\Phi$ that needs to be solved at every iteration.


\section{Certificates of Optimality via Lower Bounds}
\label{sec:alg}

In this section, we outline our approach to computing lower bounds to ($\textsc{CFA}_{q}$). In particular, we focus on the case when $q=1$. We begin with an overview of the method. We then discuss initialization parameters for the method as well as several algorithms for solving subproblems. We also discuss branching rules and other refinements employed in our approach. For notational simplicity, we will always use $\ph\in\R^p$ to denote the vector with $\ph = \diag(\Ph)$ and we will often omit $\diag(\cdot)$ notation when clear. For example, instead of a constraint such as $\s-\Ph\psd$ we will instead write $\s-\ph\psd$ to suggest that we have constraints $\s-\Ph\psd$ and $\Ph=\diag(\ph)$ is a diagonal matrix.

\subsection{Overview of Method}\label{ssec:methodOverview}

Our primary problem of interest is to provide lower bounds to ($\textsc{CFA}_{1}$), i.e.,
\begin{equation}\label{eqn:FAequiv}
\mini_{\Ph\in\F}\sum_{i>r}\sigma_i(\s-\Ph),
\end{equation}
or equivalently using the results of Section \ref{sec:method1},
\begin{equation}\label{eqn:FAauxVars}
\mini_{\substack{\W\in\cW\\\Ph\in\F}}\langle \W,\s-\Ph\rangle.
\end{equation}
We begin with a definition and result from convex analysis.

\begin{definition}
For a function $f:\Gamma\sub\R^n\to\R$ we define its convex envelope on $\Gamma$, denoted $\convenv_\Gamma(f)$, to be the largest convex function $g$ with $g\leq f$. In symbols,
$$g = \sup\{h:\Gamma\to\R \;|\; h \text{ convex on $\Gamma$ and } h\leq f\}.$$
\end{definition}

The convex envelope of certain functions is well-known. One such function of principle interest here is described in the following theorem. This is in some contexts referred to as a McCormick hull and is widely used throughout the nonconvex optimization literature \citep{floudas,tawarBook}.

\begin{theorem}[\citealp{tawarBook,akf}]\label{thm:mcHulls}
If $f:\Gamma=[0,1]\times[\ell,u]\to\R$ is defined by $f(x,y) = -xy$, then the convex envelope of $f$ on $\Gamma$ is precisely
$$\convenv_\Gamma(f)(x,y)  = \max\{-ux,\ell - \ell x - y\}.$$
In particular, if $|u-\ell|\to0$, then $\convenv_\Gamma(f)\to f$.
\end{theorem}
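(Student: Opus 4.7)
My plan is to establish the formula in three steps: first show $g(x,y) := \max\{-ux,\ \ell - \ell x - y\}$ is a convex underestimator of $f$ on $\Gamma$; then show every convex underestimator of $f$ on $\Gamma$ is dominated by $g$; and finally verify the convergence claim.

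For the first step, convexity of $g$ is immediate as the pointwise maximum of two affine functions. The inequality $-ux \leq -xy$ rearranges to $x(y-u) \leq 0$, which holds because $x \geq 0$ and $y \leq u$; similarly $\ell - \ell x - y \leq -xy$ rearranges to $(1-x)(\ell - y) \leq 0$, which holds because $x \leq 1$ and $y \geq \ell$. Hence $g(x,y) \leq f(x,y)$ pointwise on $\Gamma$, so $g \leq \convenv_\Gamma(f)$.

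For the second step, the idea is to exhibit, for each $(x,y) \in \Gamma$, a representation of $(x,y)$ as a convex combination of vertices of $\Gamma$ for which the corresponding convex combination of $f$-values equals $g(x,y)$ exactly. A short calculation shows that the two affine pieces of $g$ agree precisely along the diagonal $y = \ell + (u-\ell)x$, which splits $\Gamma$ into a ``lower'' triangle with vertices $(0,\ell), (1,\ell), (1,u)$ (where $\ell - \ell x - y$ dominates) and an ``upper'' triangle with vertices $(0,\ell), (0,u), (1,u)$ (where $-ux$ dominates). On each triangle I would solve the $3\times 3$ barycentric system to express $(x,y) = \sum_i \lambda_i p_i$ in the vertices $p_i$, and then verify that $\sum_i \lambda_i f(p_i)$ simplifies exactly to $-ux$ or $\ell - \ell x - y$, respectively (using $f(0,\ell) = f(0,u) = 0$, $f(1,\ell) = -\ell$, $f(1,u) = -u$). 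Once this identity is in hand, for any convex $h$ with $h \leq f$ on $\Gamma$, convexity gives $h(x,y) \leq \sum_i \lambda_i h(p_i) \leq \sum_i \lambda_i f(p_i) = g(x,y)$, so $h \leq g$ on $\Gamma$. Combined with the first step, this yields $g = \convenv_\Gamma(f)$.

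For the third step, the same triangulation gives closed-form expressions for the gap: on the upper triangle $f - g = -xy - (-ux) = x(u-y)$, and on the lower triangle $f - g = -xy - (\ell - \ell x - y) = (1-x)(y-\ell)$. Both are bounded above by $u - \ell$ on $\Gamma$, so $\|f - g\|_\infty \leq u - \ell$, giving uniform convergence $\convenv_\Gamma(f) \to f$ as $|u-\ell| \to 0$. The main obstacle is the algebraic bookkeeping in step two: setting up the barycentric coordinates on each triangle correctly, checking that together they cover $\Gamma$, and verifying the $f$-value identity. None of this is deep, and the result is classical (the cited McCormick hull construction), but doing the two triangle cases cleanly is where the care is required.
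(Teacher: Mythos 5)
Your proof is correct. Note that the paper itself does not prove this statement---it is quoted from the global optimization literature (the McCormick/Al-Khayyal--Falk bilinear envelope, cited to \citealp{tawarBook,akf})---so there is no in-paper argument to compare against; your argument is the standard one and it closes. The underestimation step is exactly $x(y-u)\le 0$ and $(1-x)(\ell-y)\le 0$ as you say, and the vertex computation you defer does work out: on the triangle with vertices $(0,\ell),(1,\ell),(1,u)$ the barycentric weights are $\bigl(1-x,\;x-\tfrac{y-\ell}{u-\ell},\;\tfrac{y-\ell}{u-\ell}\bigr)$ and $\sum_i\lambda_i f(p_i)=-\ell\bigl(x-\tfrac{y-\ell}{u-\ell}\bigr)-u\tfrac{y-\ell}{u-\ell}=\ell-\ell x-y$, while on the triangle with vertices $(0,\ell),(0,u),(1,u)$ the weights are $\bigl(\tfrac{u-y}{u-\ell},\;1-x-\tfrac{u-y}{u-\ell},\;x\bigr)$ and $\sum_i\lambda_i f(p_i)=-ux$; these are the pieces of $g$ active on the respective triangles, so any convex $h\le f$ satisfies $h\le g$, and together with step one $g=\convenv_\Gamma(f)$. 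Your gap bounds $f-g=x(u-y)$ and $f-g=(1-x)(y-\ell)$ on the two triangles give $0\le f-\convenv_\Gamma(f)\le u-\ell$ uniformly, which yields the limiting claim (the degenerate case $u=\ell$ is trivial since $f$ is then affine in $x$ and equals its own envelope).
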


Using this result we are now ready to describe our approach for computing lower bounds to \eqref{eqn:FAauxVars}. Here for a fixed $\bell,\uu\in\R^p$ with $\bell\leq \uu$ we will repeatedly solve a problem of the form
\begin{equation}\label{eqn:bbLB}
\begin{array}{lc}
\ds\mini_{\W,\Ph,\zz}&\langle \W,\s\rangle-\ds\sum_iz_i\\
\st &\ph\in\F\\
&\bell\leq\ph\leq\uu\\
&\W\in\cW\\
&z_i \leq \phi_i+\ell_iW_{ii}-\ell_i\;\forall i\\
&z_i\leq u_iW_{ii}\;\forall i.
\end{array}
\end{equation}
Here $\zz$ is used as an auxiliary variable to represent convex envelopes; namely, for $\phi_i\in[\ell_i,u_i]$ and $W_{ii}\in[0,1]$, then $\convenv(-W_{ii}\phi_{i})(W_{ii},\phi_i) = \max\{\ell_i - \ell_iW_{ii}-\phi_i,-u_iW_{ii}\}$. This maximum can then be represented with auxiliary variables. Using this problem setup our method is displayed in pseudocode in Algorithm \ref{alg:bb}. The key ingredient is the use of McCormick hulls for the products $-W_{ii}\phi_i$, embedded in a branch and bound scheme.

More concretely, Algorithm \ref{alg:bb} involves treating a given node $\n=[\bell,\uu]$, which represents bounds on $\ph$, namely, $\bell\leq\ph\leq\uu$. Here we solve \eqref{eqn:bbLB} with the upper and lower bounds $\uu$ and $\bell$, respectively, and see whether the resulting new feasible solution is better (lower in objective value) than the best known incumbent solution encountered thus far. We then see if the the bound for such $\n$ (as in \eqref{eqn:bbLB}) is better than the currently known best feasible solution; if it is not at least the current best feasible solution's objective value (up to some numerical tolerance), then we must further branch on this node, generating two new nodes $\n_1$ and $\n_2$ which partition the existing node $\n$. Throughout, we keep track of the worst lower bound encountered, which allows for us to terminate the algorithm early while still having a provable suboptimality guarantee on the best feasible solution $\ph_f\in\F$ found thus far.

In light of Theorem \ref{thm:mcHulls}, we have as an immediate corollary the following theorem.

\begin{theorem}\label{thm:bbAlgWorks}
Given numerical tolerance $\tol>0$, Algorithm \ref{alg:bb} terminates in finitely many iterations and solves \eqref{eqn:FAequiv} to within an additive optimality gap of at most \tol. Further, if Algorithm \ref{alg:bb} is terminated early (i.e., before $\nodes=\emptyset$), then the best feasible solution $\ph_f$ at termination is guaranteed to be within an additive optimality gap of $z_f-z_\text{lb}$.
\end{theorem}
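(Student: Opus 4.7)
My proof plan has three parts: validity of the node relaxation, finite termination, and the suboptimality bound at (early or full) termination.

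First, I would establish that the subproblem \eqref{eqn:bbLB} is a valid lower bounding relaxation of \eqref{eqn:FAauxVars} restricted to the box $\bell \le \ph \le \uu$. This reduces to checking that for every feasible $(\W,\ph)$ of the restricted problem, setting $z_i := W_{ii}\phi_i$ gives a feasible point in \eqref{eqn:bbLB} with equal objective: the constraints $z_i \le \phi_i + \ell_i W_{ii} - \ell_i$ and $z_i \le u_i W_{ii}$ are exactly $W_{ii}\phi_i \ge \max\{-u_iW_{ii}, \ell_i - \ell_i W_{ii} - \phi_i\} \cdot (-1)$ flipped in sign, which is precisely Theorem~\ref{thm:mcHulls} applied coordinate-wise (noting $W_{ii}\in[0,1]$ since $\M{0}\preceq \W \preceq \M{I}$, and $\phi_i\in[\ell_i,u_i]$). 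Conversely, any optimizer $(\W,\ph,\zz)$ of \eqref{eqn:bbLB} satisfies $-\sum_i z_i \le -\sum_i W_{ii}\phi_i = \langle \W, -\Ph\rangle$, so its value lower bounds $\langle \W, \s-\Ph\rangle$ on the restricted feasible set.

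Next, I would verify the global-lower-bound invariant maintained by the algorithm: at every iteration, $z_{\text{lb}}$ is a valid lower bound on the global optimum of \eqref{eqn:FAequiv}. This holds because the union of (i) the boxes still in \nodes{} and (ii) all previously pruned boxes covers the original feasible region $\F$ (bounded initially using, e.g., $0 \le \phi_i \le \sigma_{ii}$, which follows from $\s - \Ph \psd$ and $\Ph\psd$). Pruned nodes all had relaxation value at least $z_f - \tol$ at prune time, and each live node contributes its own relaxation value to $z_{\text{lb}}$; taking the minimum over live and over $z_f$ preserves the invariant. Combining with the first part, the incumbent $\ph_f$ satisfies $z_f - z_{\text{lb}} \ge z_f - z^* $, where $z^*$ is the true optimum, giving the stated early-termination bound.

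For finite termination, the main obstacle, I would use the uniform convergence in Theorem~\ref{thm:mcHulls}: on any box $[\bell,\uu]$ with $\max_i (u_i - \ell_i) \le \delta$, the McCormick envelope of $-W_{ii}\phi_i$ lies within $O(\delta)$ of the true bilinear expression uniformly in $(\W,\ph)\in \cW\times [\bell,\uu]$. Assuming a bisection-style branching rule (split the coordinate with largest $u_i-\ell_i$ in half), every infinite descending chain of nested boxes produced by the algorithm has widths shrinking to zero. Then the gap between the relaxation value \eqref{eqn:bbLB} on such a box and the true restricted optimum tends to $0$; once it drops below $\tol$, that node cannot be split further because the pruning test $z_{\text{lb-node}} \ge z_f - \tol$ succeeds (since $z_f$ is always at least the true restricted optimum on that box). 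Because $\cW\times\F$ is compact and the bisection tree has bounded depth before all live boxes satisfy the pruning test (by a standard covering/compactness argument on $[0,\max_i\sigma_{ii}]^p$), the total number of nodes ever generated is finite, so the algorithm terminates.

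Finally, at full termination $\nodes=\emptyset$, every box has been either pruned (relaxation $\ge z_f - \tol$) or explored to produce an incumbent update, so $z_{\text{lb}} \ge z_f - \tol$, i.e., $z_f - z^* \le \tol$, which is the claimed additive optimality guarantee. The early-termination statement is immediate from the invariant established above. Combining Proposition~\ref{lem:reform-1} (equivalence of \eqref{eqn:FAequiv} and \eqref{eqn:FAauxVars}) completes the proof.
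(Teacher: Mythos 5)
Your proposal is correct and takes essentially the same route as the paper, which offers no separate proof beyond asserting the result as an immediate corollary of Theorem \ref{thm:mcHulls}: your three steps (validity of the McCormick relaxation \eqref{eqn:bbLB}, the covering invariant that keeps $z_\text{lb}$ a global lower bound, and envelope convergence plus compactness for finite termination) are exactly the standard branch-and-bound details that assertion implicitly invokes. The one caveat worth noting is that your finite-termination argument assumes a bisection-style exhaustive branching rule, whereas Algorithm \ref{alg:bb} permits an arbitrary split point $\alpha\in(\ell_i^c,u_i^c)$ (and the rule of Section \ref{ssec:branching} branches at a point derived from $\phi_i^*$); some such exhaustiveness hypothesis is genuinely needed for rigor and is glossed over in the paper, so making it explicit is a sensible repair rather than a flaw.
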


The algorithm we have considered here omits some important details. After discussing properties of \eqref{eqn:bbLB}, we will discuss various aspects of Algorithm \ref{alg:bb}. In Section \ref{ssec:params}, we detail how to choose input $\uu^0$. In Section \ref{ssec:subSolve} we give a variety of algorithms which can be used to solve \eqref{eqn:bbLB} (line 5 in Algorithm \ref{alg:bb}). We then turn our attention to branching (lines 13 and 14) in Section \ref{ssec:branching}. In Section \ref{ssec:weyl}, we use results from matrix analysis coupled with ideas from the modern practice of discrete optimization to make tailored refinements to Algorithm \ref{alg:bb}. In Section \ref{ssec:nodeSelect}, we include a discussion of node selection strategies. Finally, in Section \ref{ssec:globOptSOA} we discuss how our approach relates to state-of-the-art methods for globlal optimization.

\begin{algorithm}[h!]
 \begin{enumerate}
\item Given optimality tolerance \tol, a feasible solution $\ph_f$, and upper bounds $\uu^0$ so that $\phi_i\leq u_i^0$ for all $\ph\in\F$, initialize $z_f \leftarrow \sum_{i>r} \sigma_i(\s-\ph_f)$ and  $\nodes= \{ ([\mb 0,\uu^0],+\infty)\}$, $z_\text{lb} = -\infty$.

\item While $\nodes\neq\emptyset$, remove some node $([\bell^c,\uu^c],z^c)\in\nodes$.

\item Solve \eqref{eqn:bbLB} with $\bell=\bell^c$ and $\uu=\uu^c$. Let $\ph$ be an optimal solution to \eqref{eqn:bbLB}, with $z$ the optimal objective value to \eqref{eqn:bbLB}, and set $z_u\leftarrow \sum_{i>r} \sigma_i(\s-\ph)$.

\item If $z_u < z_f$ (i.e., a better feasible solution is found), update the best feasible solution found thus far ($\ph_f$) to be $\ph$ and update the corresponding value ($z_f$) to $z$.

\item If $z_u < z_f - \tol$ (i.e., a \tol--optimal solution has not yet been found), then pick some $i\in\{1,\ldots,p\}$ and some $\alpha\in(\ell_i^c,u_i^c)$. Then let add two new nodes to \nodes:
$$\left(\prod_{j<i} [\ell_j^c,u_j^c]\times [\ell_i^c,\alpha]\times \prod_{j>i}[\ell_j^c,u_j^c],z\right)\text{\quad and \quad}\left( \prod_{j<i} [\ell_j^c,u_j^c]\times [\alpha,u_i^c]\times \prod_{j>i}[\ell_j^c,u_j^c],z\right) .$$
Update the best lower bound $z_\text{lb}$ to be $\ds\min_{\substack{([\bell^c,\uu^c],z)\in\nodes }}z$ and return to Step 2.

\end{enumerate}
\caption{Branch and bound scheme to solve \eqref{eqn:FAauxVars}. Here the input upper bounds $\uu^0$ are such that for any $i$ and any $\ph\in\F$, we have $\phi_i\leq u_i^0$. We discuss in Section \ref{ssec:params} how to choose such input.}\label{alg:bb}
\end{algorithm}

\subsubsection{Duality and Properties}

We now examine \eqref{eqn:bbLB}, the main subproblem of interest. Observe that \eqref{eqn:bbLB} is a linear SDO, and therefore we can consider its dual, namely

\begin{equation}\label{eqn:dual}
\begin{array}{ll}
\ds\maxi_{\substack{q,\f_u,\f_\ell,\bk,\\\bsig,\mm,\nn,\pp}}&(p-r)q -\uu'\f_u - \Tr(\nn)-\langle \pp,\s-\diag(\bell)\rangle\\
\st&\bk+\bsig = \mb 1\\
&\diag(\pp) +\f_u-\f_\ell-\bk = \mb 0\\
&\s-\diag(\uu)+\mm+\nn+\diag(\diag(\uu-\bell)\bk)-q\mb I=\mb 0\\
& \f_u,\f_\ell,\bk,\bsig\geq\mb0\\
&\mm,\nn,\pp\psd.
\end{array}
\end{equation}

\begin{observation}\label{obs:PDprops}
We now include some remarks about structural properties of \eqref{eqn:bbLB} and its dual \eqref{eqn:dual}.
\begin{enumerate}

\item If $\rank(\s)=p$ then the Slater condition \citep{BV2004} holds and hence there is strong duality, so we can work with \eqref{eqn:dual} instead of \eqref{eqn:bbLB} as an exact reformulation.

\item There exists an optimal solution to the dual with $\f_u = \mb0$. This is a variable reduction which is not immediately obvious. Note that $\f_u$ appears as the multiplier for the constraints in the primal of the form
$$\ph\leq \mb u.$$
To claim that we can set $\mb \f_u=\mb0$ it suffices to show that the corresponding constraints in the primal can be ignored. Namely, if $(\W^*,\Ph^*)$ solves \eqref{eqn:bbLB} with the constraints $\phi_i\leq u_i\;\forall i$ omitted, then the pair $(\W^*,\widetilde{\Ph})$ is feasible and optimal to \eqref{eqn:bbLB} with all the constraints included, where $\widetilde{\Ph}$ is defined by
$$\widetilde{\phi}_i = \min\{\phi_i^*,u_i\}.$$
One can show by example that the constraints $\ph\geq\bell$ cannot be omitted, and therefore this behavior is particular to the upper bounds on $\ph$. Hereinafter we set $\f_u=\mb0$ and omit the constraint $\ph\leq\uu$ (with the minor caveat that, upon solving a problem and identifying some $\ph^*$, we must instead work with $\min\{\ph^*,\uu\}$, taken entrywise).

\end{enumerate}
\end{observation}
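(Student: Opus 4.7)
My plan is to address the two claims in turn.

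For part~(1), I would verify Slater's condition by exhibiting a strictly feasible primal point for \eqref{eqn:bbLB}. When $\rank(\s)=p$ one has $\s\pd$, so choosing each $\phi_i$ strictly positive and small enough yields $\s-\ph\pd$; taking $\W=\tfrac{p-r}{p}\I$ (for $0<r<p$) gives $\tr(\W)=p-r$ together with $\M0\prec\W\prec\I$; and finally $\ph$ can be chosen in the interior of the box $[\bell,\uu]$, with each $z_i$ strictly below both McCormick upper bounds. This provides a Slater point for the conic program, and strong duality between \eqref{eqn:bbLB} and \eqref{eqn:dual} follows from standard conic duality.

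For part~(2), I would follow the strategy outlined in the statement. The idea is to show that the \emph{relaxed} primal obtained from \eqref{eqn:bbLB} by dropping the constraints $\ph\leq\uu$ has the same optimal value as \eqref{eqn:bbLB}. Granting this equality, Slater also holds for the relaxed primal (the strictly feasible point constructed above witnesses it, since removing the constraint $\ph\leq\uu$ only eliminates an already-slack inequality), so its dual attains its optimum by strong duality; but that dual is exactly \eqref{eqn:dual} with $\f_u$ absent, i.e.\ with $\f_u=\M0$ fixed. Extending this dual optimum by setting $\f_u=\M0$ yields a point that is feasible for the full dual \eqref{eqn:dual} and whose objective equals the common primal optimum, hence is dual optimal.

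The equality of optimal values is established by the construction given in the statement: let $(\W^*,\Ph^*,\zz^*)$ be optimal for the relaxed primal and set $\widetilde\phi_i:=\min\{\phi_i^*,u_i\}$. Then $(\W^*,\widetilde\Ph)$ is feasible for \eqref{eqn:bbLB}, because $\widetilde\Ph\preceq\Ph^*$ entrywise gives $\s-\widetilde\Ph=(\s-\Ph^*)+(\Ph^*-\widetilde\Ph)\psd$, and by construction $\widetilde\phi_i\in[\ell_i,u_i]$. What remains is to produce $\widetilde\zz$ with $\sum_i\widetilde z_i\geq\sum_i z_i^*$ satisfying both McCormick inequalities at $(\W^*,\widetilde\Ph)$, which I would handle coordinatewise: if $\phi_i^*\leq u_i$ nothing changes, while if $\phi_i^*>u_i$ the inequality $\phi_i^*>u_i\geq\ell_i+(u_i-\ell_i)W_{ii}^*$ (valid since $W_{ii}^*\leq1$) rearranges to $\phi_i^*+\ell_i(W_{ii}^*-1)>u_iW_{ii}^*$, so the binding McCormick bound at the relaxed optimum must be $z_i^*=u_iW_{ii}^*$, and this same value is still attainable after replacing $\phi_i^*$ by $u_i$. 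This coordinatewise case analysis is the step I would be most careful about and the main obstacle: one must rule out the possibility that the bound $z_i\leq\phi_i+\ell_i(W_{ii}-1)$ is binding at the relaxed optimum when $\phi_i^*>u_i$, since otherwise truncating $\phi_i^*$ to $u_i$ would strictly lower the achievable $z_i$ and the equality of optimal values would fail. Once this is handled via the inequality above, both claims reduce to routine conic duality.
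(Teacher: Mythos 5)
Your proof is correct and follows essentially the same route as the paper's inline justification: a Slater point for part~(1), and the truncation $\widetilde{\phi}_i=\min\{\phi_i^*,u_i\}$ combined with the PSD decomposition $\s-\widetilde{\Ph}=(\s-\Ph^*)+(\Ph^*-\widetilde{\Ph})\psd$ for part~(2). Your coordinatewise analysis of the auxiliary variables $z_i$ --- in particular showing that when $\phi_i^*>u_i$ the binding McCormick bound must be $z_i^*=u_iW_{ii}^*$, which remains attainable after truncation --- correctly fills in the one step the paper leaves implicit.
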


\subsection{Input Parameters} 
\label{ssec:params}

In solving the root node of Algorithm \ref{alg:bb}, we must begin with some choice of $\uu^0=\uu$. An obvious first choice for $u_i$ is $u_i = \Sigma_{ii}$, but one can do better. Let us optimally set $u_i$, defining it as
\begin{equation}\label{eqn:udefinition}
u_i:=\begin{array}{lc}
\ds\maxi_{x\in\R}&x\\
\text{s.t.}&\s-x\mb e_i^{}\mb e_i'\psd.
\end{array}
\end{equation}
Here $\{\mb e_i\}_{i=1}^p$ is the canonical basis for $\R^p$. These bounds are useful because if $\ph\in\F$, i.e., $\ph\geq\mb0$ with $\s-\ph\psd$, then $\phi_i\leq u_i$. Note that the problem in \eqref{eqn:udefinition} is a linear SDO for which strong duality holds. Its dual is precisely
\begin{equation}\label{eqn:udefinitionDual}
u_i=\begin{array}{ll}
\ds\mini_{\mb M\in\R^{p\times p}}& \langle \mb M,\s\rangle\\
\st& M_{ii}=1\\
&\mb M\psd,
\end{array}
\end{equation}
a linear SDO in standard form with a single equality constraint. Results on the rank structure of solutions to semidefinite optimization problems \citep{barvinok,pataki} imply that there exists a rank $\hat r$ solution to \eqref{eqn:udefinitionDual}, where $\hat r$ is the largest integer so that $\hat r(\hat r+1)\leq 2\kappa$, where $\kappa$ is the number of equality constraints. Here $\kappa=1$, and so $\hat r=1$. Therefore, there exists a rank one solution to \eqref{eqn:udefinitionDual}. This implies that \eqref{eqn:udefinitionDual} can actually be solved as a convex quadratic program:
$$\begin{array}{ll}
\ds\mini_{\mb M}& \langle \mb M,\s\rangle\\
\st& M_{ii}=1\\
&\mb M\psd,
\end{array}= \begin{array}{ll}
\ds\mini_{\mb m\in\R^p}& \mb m'\s\mb m\\
\st& m_{i}^2=1.
\end{array}$$
This can be written exactly as a convex quadratic problem with only a single linear equality constraint:
\begin{equation}\label{eqn:uQPform}
u_i=\begin{array}{ll}
\ds\mini_{\mb m}& \mb m'\s\mb m\\
\st& m_{i}=1.
\end{array}
\end{equation}
This formulation given in \eqref{eqn:uQPform} is computationally inexpensive to solve (given a large number of specialized convex quadratic problem solvers), in contrast to both formulations \eqref{eqn:udefinition} and \eqref{eqn:udefinitionDual}.\footnote{In the case when $\s\succ\mb0$, it is not necessary to use quadratic optimization problems to compute $\mb u$. In this case once can apply a straightforward Schur complement argument \citep{BV2004} to show that $\mb u$ can be computed by solving for the inverses of $p$ different $(p-1)\times(p-1)$ matrices.}

\subsection{Solving Subproblems}\label{ssec:subSolve}

We now detail how to solve a problem of the form \eqref{eqn:bbLB}. Given the large variety of solution techniques for SDOs \citep{sdpSurvey,monteiroSurvey}, we consider two primary options here: classical interior point methods and first order methods.

\subsubsection{Interior Point Methods}

We first consider solving \eqref{eqn:bbLB} via interior point methods (IPMs) \citep{BV2004}. There are a large variety of implementations of IPMs for SDOs, from academic code such as \texttt{SDPT3} \citep{sdpt3a} and \texttt{Yalmip} \citep{yalmip} to recently developed commercial code in \texttt{MOSEK} \citep{mosek}. For the problem sizes of interest in this paper, IPMs perform reasonably in terms of accuracy and speed; however, for problems on the order of $p=100$, IPMs as used to solve \eqref{eqn:bbLB} can already be prohibitively slow (especially when it is necessary to solve many times). The major drawback of interior point methods is that they do not accommodate warm starts, which are crucial for us as we are repeatedly solving similarly structured SDOs. Indeed, warm starts are well-known to perform quite poorly when incorporated into IPMs -- often they can perform worse than cold starts \citep{ws1,ws2}. For this reason in what follows we also consider another possible methods for solving subproblems of the form \eqref{eqn:bbLB}.

\subsubsection{First-Order Methods}

In stark contrast to IPMs, first-order methods (FOMs) have at their core a focus on speed and the ability to reliably leverage warm starts. The price one pays in using FOMs is generally a loss in accuracy relative to IPMs. Because speed and the need for warmstarts are not only desirable but necessary for our purposes, we choose to use a FOM approach to solving \eqref{eqn:bbLB}.

Observe that we cannot solve the primal form of \eqref{eqn:bbLB} within the branch-and-bound framework unless we solve it to optimality. Therefore, we instead choose to work with its dual as in \eqref{eqn:dual}. In this way, we apply FOMs to solve \eqref{eqn:dual} and find reasonably accurate, \emph{feasible} solutions for this dual problem, which guarantees that we have a lower bound to \eqref{eqn:bbLB}. In this way, we maintain the provable optimality properties of Algorithm \ref{alg:bb} without needing to solve nodes in the branch-and-bound tree exactly to optimality via IPMs.

The FOM we apply is an off-the-shelf solver \texttt{SCS} \citep{scs} which is based on ADMM \citep{bertsekasLag}. In our experience, this method performs approximately two orders of magnitude faster than IPMs for solving \eqref{eqn:bbLB}, with very little loss in accuracy. At the same time, it accommodates warm starts very well and these make a substantive difference in solve time as compared to using first-order methods without any warm starts. Further, \texttt{SCS} gives an approximately feasible primal solution $\ph$, which is useful as we consider branching, described next.\footnote{Note that in Algorithm \ref{alg:bb}, we can replace the best incumbent solution if we find a new $\ph$ which has better objective value $\sum_{i>r} \sigma_i(\s-\ph)$. Because $\ph$ may not be feasible ($\ph\in\F$), we take care here. Namely, compute $t=\sum_{i>r} \lambda_i(\s-\ph)$, where $\lambda_1(\s-\ph)\geq\cdots\geq\lambda_p(\s-\ph)$ are the sorted eigenvalues of $\s-\ph$. If $t < z_f$, then we perform an iteration of CG scheme for finding feasible solutions (outlined in Section \ref{sec:CG-method1}) to find a feasible $\bar\ph\in\F$. We then use this as a possible candidate for replacing the incumbent.}

\subsection{Branching}\label{ssec:branching}

Here we detail two methods for branching. The rule we select for our computational experiments is simple and straightforward. The problem of branching is as follows: having solved \eqref{eqn:bbLB} for some particular $\n=[\bell,\uu]$, we must choose some $i\in\{1,\ldots,p\}$ and split the interval $[\ell_i,u_i]$ to create two new subproblems. We begin with a simple branching rule. Given a solution $(\W^*,\Ph^*,\zz^*)$ to \eqref{eqn:bbLB}, compute
$$i\in\ds\operatorname{argmax}_i \left|z_i^* - W_{ii}^*\phi_i^*\right|$$
and branch on variable $\phi_i$, generating two new subproblems with the intervals
$$\prod_{j<i} [\ell_j,u_j]\times [\ell_i,\phi_i^*]\times \prod_{j>i}[\ell_j,u_j]\text{\quad and \quad}\prod_{j<i} [\ell_j,u_j]\times [\phi_i^*,u_i]\times \prod_{j>i}[\ell_j,u_j].$$
Observe that, so long as $\max_i |z_i^*-W_{ii}^*\phi_i^*|>0$, the solution $(\W^*,\Ph^*,\zz^*)$ is not optimal for either of the subproblems created. 

We now briefly describe an alternative rule which we employ instead of the simple rule. We again pick the branching index $i$ as 
$$i\in\ds\operatorname{argmax}_i \left|z_i^* - W_{ii}^*\phi_i^*\right|,$$
but now the two new nodes we generate are
$$\prod_{j<i} [\ell_j,u_j]\times [\ell_i,(1-\epsilon)\phi_i^*+\epsilon\ell_i]\times \prod_{j>i}[\ell_j,u_j]\text{\quad and \quad}\prod_{j<i} [\ell_j,u_j]\times [(1-\epsilon)\phi_i^*+\epsilon\ell_i,u_i]\times \prod_{j>i}[\ell_j,u_j],$$
where $\epsilon\in[0,1)$ is some parameter. For the computational results (Section \ref{sec:comp}), we set $\epsilon=0.4$.

Such an approach, which lowers the location of the branch in the $i$th interval $[\ell_i,u_i]$ from $\phi_i^*$, serves to improve the objective value from the first node, while hurting the objective value from the second node (here by objective value, we mean the objective value of the optimal solution to the two new subproblems). In this way, it spreads out the distance between the two, and so it is more likely that the first node may have an objective value that is higher than $z_f-\tol$ than before, and hence, this would mean there are fewer nodes necessary to consider to solve for an additive gap of \tol. While this heuristic explanation is only partially satisfying, we have observed throughout a variety of numerical experiments that this rule, even though simple, performs better across a variety of example classes than the basic branching rule outlined. Recent work on the theory of branching rules supports such a heuristic rule \citep{lebodic}. In Section \ref{sec:comp}, we give evidence on the impact of the use of the modified branching rule.

\subsection{Weyl's Method---Pruning and Bound Tightening}\label{ssec:weyl}

In this subsection, we develop another method for lower bounds for the factor analysis problem. While we use it to supplement our approach detailed throughout Section \ref{sec:alg}, it is of interest as a standalone method, particularly for its computational speed and simplicity. In Section \ref{sec:comp}, we discuss the performance of this approach in both contexts.

The central problem of interest in factor analysis involves the spectrum of a symmetric matrix ($\s-\Ph$), which is the difference of two other symmetric matrices ($\s$ and $\Ph$). The spectrum of the sum of real symmetric matrices is an extensively studied problem \citep{hornjohnson,bhatia}.

Therefore, it is natural to inquire how such results carry over to our setting. We discuss the implications of some well-known results from this literature, primarily \emph{Weyl's inequality}. Let us begin by recalling the result.

\begin{theorem}[Weyl's inequality, \citealp{hornjohnson}]
For symmetric matrices $\mb A,\mb B\in\R^{p\times p}$ with sorted eigenvalues
\[\lambda_1(\mb A)\geq \lambda_2(\mb A)\geq\cdots \geq\lambda_p(\mb A)\text{ \quad and \quad}\lambda_1(\mb B)\geq \lambda_2(\mb B)\geq\cdots \geq\lambda_p(\mb B)$$
one has for any $k\in\{1,\ldots,p\}$ that
$$\lambda_k(\mb A+\mb B) \geq \lambda_{k+j}(\mb A) +\lambda_{p-j}(\mb B)\;\forall j=0,\ldots,p-k.
\]
\end{theorem}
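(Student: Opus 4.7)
The plan is to prove Weyl's inequality using the Courant--Fischer min--max characterization of eigenvalues, combined with a dimension-counting argument. Recall that for any symmetric $\mb M \in \R^{p\times p}$,
\[
\lambda_k(\mb M) \;=\; \max_{\substack{V \le \R^p \\ \dim V = k}} \;\min_{\substack{x \in V \\ \|x\|=1}} x'\mb M x,
\]
and that if $V$ is the span of eigenvectors of $\mb M$ corresponding to its top $m$ eigenvalues, then for every unit vector $x \in V$ one has $x'\mb M x \ge \lambda_m(\mb M)$.

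The first step is to construct two carefully chosen subspaces. Let $V_A \le \R^p$ be the $(k+j)$-dimensional span of eigenvectors of $\mb A$ associated with the top $k+j$ eigenvalues $\lambda_1(\mb A), \ldots, \lambda_{k+j}(\mb A)$. By the remark above, every unit vector $x \in V_A$ satisfies $x'\mb A x \ge \lambda_{k+j}(\mb A)$. Similarly, let $V_B \le \R^p$ be the $(p-j)$-dimensional span of eigenvectors of $\mb B$ associated with the top $p-j$ eigenvalues, so every unit vector $x \in V_B$ satisfies $x'\mb B x \ge \lambda_{p-j}(\mb B)$.

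The key step is a dimension count: since
\[
\dim(V_A \cap V_B) \;\ge\; \dim V_A + \dim V_B - p \;=\; (k+j) + (p-j) - p \;=\; k,
\]
we can choose a $k$-dimensional subspace $V \subseteq V_A \cap V_B$. For every unit vector $x \in V$, the two quadratic-form bounds apply simultaneously, yielding
\[
x'(\mb A + \mb B)x \;=\; x'\mb A x + x'\mb B x \;\ge\; \lambda_{k+j}(\mb A) + \lambda_{p-j}(\mb B).
\]
Applying the max--min characterization to $\mb A + \mb B$ with this particular $k$-dimensional $V$ then gives
\[
\lambda_k(\mb A + \mb B) \;\ge\; \min_{\substack{x \in V \\ \|x\|=1}} x'(\mb A + \mb B)x \;\ge\; \lambda_{k+j}(\mb A) + \lambda_{p-j}(\mb B),
\]
which is the desired inequality. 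The range constraint $j \in \{0,\ldots,p-k\}$ is exactly what is needed so that both subspace dimensions $k+j$ and $p-j$ lie in $\{1,\ldots,p\}$, and the argument goes through for each admissible $j$.

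The only delicate point in the proof is bookkeeping the indices in the two subspaces so that the dimension count produces a $k$-dimensional intersection (rather than something smaller), and this is where the specific indexing on the right-hand side of Weyl's inequality comes from. Beyond that, the argument is a routine application of Courant--Fischer and does not require any additional machinery.
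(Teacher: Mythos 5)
Your proof is correct: the Courant--Fischer max--min characterization combined with the dimension count $\dim(V_A\cap V_B)\geq (k+j)+(p-j)-p=k$ is exactly what is needed, the index bookkeeping matches the stated inequality (it is the dual Weyl bound $\lambda_{i+j'-p}(\mb A+\mb B)\geq\lambda_i(\mb A)+\lambda_{j'}(\mb B)$ with $i=k+j$, $j'=p-j$), and the restriction $j\in\{0,\ldots,p-k\}$ correctly keeps both subspace dimensions in range. Note that the paper does not prove this result at all---it is quoted with a citation to Horn and Johnson---so there is no in-paper argument to compare against; your argument is the standard textbook proof, essentially the one given in the cited reference, and it supplies the details the paper omits.
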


Now observe that there does not exist a $\T$ with $\rank(\T)\leq r$, $\T = \s-\Ph\psd$, and $\Ph=\diag(\phi_1,\ldots,\phi_p)\psd$ if and only if $z^*>0$, where
$$\begin{array}{lll}
z^* :=&\mini& \sigma_{r+1}(\s-\ph)\\
&\st& \ph\in\F\\
&&\ph\geq\mb0.
\end{array}$$
For any vector $\mb x\in\R^p$ we let $\{x_{(i)}\}_{i=1}^p$ denote sorted $\{x_i\}_{i=1}^p$ with
$$x_{(1)}\geq x_{(2)}\geq \cdots\geq x_{(p)}.$$
Using this notation, we arrive at the following theorem.

\begin{theorem}\label{thm:weylLBImpl}
For any  $\bar\ph\in\R^p$ one has that
$$\ds\mini_{\ph\in\F} \sum_{i>r}(\s-\ph)\geq \mini_{\ph\in\F} \ds\left(\sum_{i>r}\max_{j=0,\ldots,p-i}\left\{\lambda_{i+j}(\s-\bar\ph)+ \left(\bar\ph-\ph\right)_{(p-j)},0\right\}\right).$$
\end{theorem}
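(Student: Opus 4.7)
The plan is to apply Weyl's inequality pointwise in $\ph\in\F$ to the decomposition $\s-\ph = (\s-\bar\ph)+(\bar\ph-\ph)$, then sum and minimize. The key observation is that $\bar\ph-\ph$ is a diagonal (hence symmetric) matrix whose eigenvalues are simply its diagonal entries, so $\lambda_{p-j}(\bar\ph-\ph)=(\bar\ph-\ph)_{(p-j)}$ in the sorted-vector notation used in the statement.

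Concretely, for any fixed $\ph\in\F$, I would write $\mb A = \s-\bar\ph$ and $\mb B = \bar\ph-\ph$, both real symmetric $p\times p$ matrices with $\mb A+\mb B = \s-\ph$. Weyl's inequality then gives, for every $k\in\{1,\ldots,p\}$ and every $j\in\{0,\ldots,p-k\}$,
\[
\lambda_k(\s-\ph) \;\geq\; \lambda_{k+j}(\s-\bar\ph) + \lambda_{p-j}(\bar\ph-\ph) \;=\; \lambda_{k+j}(\s-\bar\ph) + (\bar\ph-\ph)_{(p-j)}.
\]
Because $\ph\in\F$ forces $\s-\ph\psd$, we also have $\lambda_k(\s-\ph)\geq 0$. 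Combining these two lower bounds and maximizing over the free index $j$ yields, for each $i\in\{r+1,\ldots,p\}$,
\[
\lambda_i(\s-\ph) \;\geq\; \max_{j=0,\ldots,p-i}\Bigl\{\lambda_{i+j}(\s-\bar\ph)+(\bar\ph-\ph)_{(p-j)},\;0\Bigr\}.
\]

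Summing this pointwise inequality over $i=r+1,\ldots,p$ gives
\[
\sum_{i>r}\lambda_i(\s-\ph) \;\geq\; \sum_{i>r}\max_{j=0,\ldots,p-i}\Bigl\{\lambda_{i+j}(\s-\bar\ph)+(\bar\ph-\ph)_{(p-j)},\;0\Bigr\},
\]
which holds for \emph{every} $\ph\in\F$. Taking $\min_{\ph\in\F}$ on both sides (and using the elementary fact that $f\geq g$ on a common domain implies $\min f\geq \min g$) produces exactly the claimed inequality.

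There is no real obstacle here: the argument is a direct application of Weyl's inequality, and the only small points to verify are (i) that $\bar\ph-\ph$ is diagonal so its ordered eigenvalues coincide with the ordered entries $(\bar\ph-\ph)_{(\cdot)}$, and (ii) that the nonnegativity floor inside the $\max$ is justified by the PSD constraint $\s-\ph\psd$ active throughout $\F$. Both are immediate from the definition of $\F$ in Section~\ref{sec:alg}.
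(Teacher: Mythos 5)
Your proof is correct and is essentially the same argument as the paper's: both apply Weyl's inequality to the split $\s-\ph=(\s-\bar\ph)+(\bar\ph-\ph)$, use that the eigenvalues of the diagonal matrix $\bar\ph-\ph$ are its sorted entries, and invoke $\s-\ph\psd$ (from $\ph\in\F$) to justify the zero inside the maximum. Your write-up simply spells out the summation over $i>r$ and the final minimization over $\ph\in\F$, which the paper leaves implicit.
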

\begin{proof}
Apply Weyl's inequality with $\mb A =\s-\diag(\bar\ph)$ and $\mb B = \diag(\bar\ph-\ph)$, and use the fact that $\ph\in\F$ so $\s-\diag(\ph)\psd$.
\end{proof}

\subsubsection{Fast Method}

The new lower bound introduced in Theorem \ref{thm:weylLBImpl} is a non-convex problem in general. We begin by discussing one situation in which Theorem \ref{thm:weylLBImpl} provides computationally tractable (and fast) lower bounds;  we deem this \emph{Weyl's method}, detailed as follows:
\begin{enumerate}
\item Compute bounds $u_i$ as in \eqref{eqn:udefinition}, so that for all $\ph\geq\mb0$ with $\s-\ph\psd$, one has
$$\phi_i \leq u_i\;\forall i.$$
\item For each $r\in\{1,\ldots,p\}$, one can compute the lower bound to \eqref{eqn:FAequiv} with a given $r$ of
$$\sum_{i>r} \max\{\lambda_i(\s-\uu),0\},$$
by taking Theorem \ref{thm:weylLBImpl} with $\bar\ph =\uu$.
\end{enumerate}

As per the above remarks, computing $\uu$ in Step 1 of Weyl's method can be carried out efficiently. Step 2 only relies on computing the eigenvalues of $\s-\diag(\uu)$. Therefore, this lower bounding procedure is quite fast and simple to carry out. What is perhaps surprising is that this simple lower bounding procedure is effective as a fast, standalone method. We describe such results in Section \ref{sec:comp}. We now turn our attention to how Weyl's method can be utilized within the branch and bound tree as described in Algorithm \ref{alg:bb}.

\subsubsection{Pruning}

We begin by considering how Weyl's method can be used for \emph{pruning}. The notion of pruning for branch and bound trees is grounded in the theory and practice of discrete optimization. In short, pruning in the elimination of nodes from the tree without actually solving them. We make this precise in our context.

Consider some point in the branch and bound process in Algorithm \ref{alg:bb}, where we have some collection of nodes,
$$\left([\bell^c,\uu^c],z^c\right)\in\nodes.$$
Recall that $z^c$ is the objective value of the parent node of $\n=[\bell^c,\uu^c]$. As per Weyl's method, we know \emph{a priori}, without solving the node subproblem
\begin{equation*}
\begin{array}{llc}
z_\n:=&\ds\mini_{\W,\Ph,\zz}&\langle \W,\s\rangle-\ds\sum_iz_i\\
&\text{s.t.}&\ph\in\F\\
&&\bell^c\leq\ph\leq\uu^c\\
&&\W\in\cW\\
&&z_i \leq \phi_i+\ell_iW_{ii}-\ell_i\;\forall i\\
&&z_i\leq u_iW_{ii}\;\forall i,
\end{array}
\end{equation*}
that
$$\mini_{\substack{\ph\in\F\\\bell^c\leq\ph\leq\uu^c}} \sum_{i>r} \sigma_i(\s-\ph) \geq \sum_{i>r} \max\{\lambda_i(\s-\uu^c),0\}.$$
Hence, if $z_f -\tol< \sum_{i>r} \max\{\lambda_i(\s-\uu^c),0\}$, where $z_f$ is as in Algorithm \ref{alg:bb}, then node $\n$ can be discarded, i.e., there is no need to actually compute $z_\n$ or further consider this branch. This is because if we were to solve $z_\n$, and then branch again, solving further down this branch to optimality, then the final lower bound obtained would necessarily be at least as large as the best feasible objective already found (with tolerance \tol).

In this way, because Weyl's method is fast, this provides a simple method for pruning. In the computational results detailed in Section \ref{sec:comp}, we always use Weyl's method to discard nodes which are not fruitful to consider.

\subsubsection{Bound Tightening}

We now turn our attention to another way in which Weyl's method can be used to improve the performance of Algorithm \ref{alg:bb} -- \emph{bound tightening}. In short, bound tightening is the use of implicit constraints to strengthen bounds obtained for a given node. We detail this with the same node notation as pruning. Namely, consider a given node $\n=[\bell^c,\uu^c]$. Fix some $j\in\{1,\ldots,p\}$ and let $\alpha\in(\ell_j^c,u_j^c)$. If we have that
$$z_f-\tol < \sum_{i>r} \max\{\lambda_i(\s-\tilde\uu),0\},$$
where $\tilde\uu$ is $\uu^c$ with the $j$th entry replaced by $\alpha$, then we can replace the node $\n$ with the ``tightened'' node
$$\tilde\n = [\tilde\bell,\uu^c],$$
where $\tilde\bell$ is $\bell^c$ with the $j$th entry replaced by $\alpha$.

We consider why this is valid. Suppose that one were to compute $z_\n$ and choose to branch on index $j$ at $\alpha$. Then one would create two new nodes:
$$[\bell^c,\tilde \uu]\text{ and } [\tilde\bell,\uu^c].$$
We would necessarily then prune away the node $[\bell^c,\tilde \uu]$ as just described; hence, we can replace $[\bell^c,\uu^c]$ without loss of generality with $[\tilde\bell,\uu^c]$. Note that here $\alpha\in(\ell_j^c,u_j^c)$ and $j\in\{1,\ldots,p\}$ were arbitrary. Hence, for each $j$, one can choose the largest such $\alpha_j\in(\ell_j^c,u_j^c)$ (if one exists) so that $z_f-\tol < \sum_{i>k} \max\{\lambda_i(\s-\tilde\uu),0\}$, and then replace $\bell^c$ by $ \bs\alpha$.\footnote{For completeness, let us note how one would find such an $\alpha$. An obvious choice is a grid-search-based bisection method. For simplicity we use a linear search on a grid instead of resorting to the bisection method.}

Such a procedure is somewhat expensive (because of its use of repeated eigenvalue calculations), but can be thought of as ``optimal'' pruning via Weyl's method. In our experience the benefit of bound tightening does not warrant its computational cost when used at every node in the branch-and-bound tree except in a small number of problems. For this reason, in the computational results in Section \ref{sec:comp} we only employ bound tightening at the root node $\n=[\mb 0,\uu]$.

\subsection{Node Selection}\label{ssec:nodeSelect}

In this section, we briefly describe our method of node selection. The problem of node selection has been considered extensively in discrete optimization and is still an active area of research. Here we describe a simple node selection strategy.

To be precise, consider some point in Algorithm \ref{alg:bb} were we have a certain collection of nodes,
$$\left([\bell^c,\uu^c],z^c\right)\in\nodes.$$
The most obvious node selection strategy is to pick the node $\n=([\bell^c,\uu^c],z^c)$ for which $z^c$ is smallest among all nodes in \nodes. In this way, the algorithm is likely to improve the gap $z_f-z_\text{lb}$ at every iteration. Such greedy selection strategies tend to not perform particularly well in general global optimization problems.

For these reasons, we employ a slightly modified greedy selection strategy which utilizes Weyl's method. For a given node $\n$, we also consider its corresponding lower bound $w^c$ obtained from Weyl's method, namely,
$$w^c := \sum_{i>r} \max\{\lambda_i(\s-\uu^c),0\}.$$
For each node, we now consider $\max\{z^c,w^c\}$. There are two cases to consider:
\begin{enumerate}

\item With probability $\beta$, we select the node with smallest value of $\max\{z^c,w^c\}$.

\item In the remaining scenarios (occurring with probability $1-\beta$), we choose randomly between selecting the node with smallest value of $z^c$ and the node with smallest value of $w^c$. To be precise, let $Z$ be the minimum of $z^c$ over all nodes and likewise let $W$ be the minimum of $w^c$ over all nodes. Then with (independent) probability $\beta$, we choose the node with worst $z^c$ \emph{or} $w^c$ (i.e., with $\min\{z^c,w^c\} = \min\{Z,W\}$); with probability $1-\beta$, if $Z<W$ we choose a node with $w^c = W$, and if $Z>W$ we choose a node with $z^c=Z$.

\end{enumerate}

In this way, we allow for the algorithm to switch between trying to make progress towards improving the convex envelope bounds and making progress towards improving the best of the two bounds (the convex envelope bounds along with the Weyl bounds). We set $\beta=0.9$ for all computational experiments. It is possible that a more dynamic branching strategy could perform substantially better; however, the method here has a desirable level of simplicity.\footnote{While the node selection strategy we consider here appears na\"ive, it is not necessarily so simple. Improved node selection strategies from discrete optimization often take into account some sort of duality theory. Weyl's inequality is at its core a result from duality theory (principally Wielandt's minimax principle, \citealp{bhatia}), and therefore our strategy, while simple, does in some way incorporate a technique which is a level of sophistication beyond worst-bound greedy node selection.}

\subsection{Global Optimization: State of the Art}\label{ssec:globOptSOA}

We close this section by discussing similarities between the branch-and-bound approach we develop here and existing methods in non-convex optimization. Our approach is very similar in spirit to approaches to global optimization \citep{floudas}, and in particular for (non-convex) quadratic optimization problems, quadratically-constrained convex optimization problems, and bilinear optimization problems \citep{hansen,bst,tsExt,trx, costaliberti, anstreicherburer,misenerglobal}. The primary similarity is that we work within a branch and bound framework using successively better convex lower bounds. However, while global optimization software for a variety of non-convex problems with underlying vector variables are generally well-developed (as evidenced by both academic software and commercial implementations like \texttt{BARON}, \citealp{baron}), this is not the case for problems with underlying matrix variables and semidefinite constraints.

The presence of semidefinite structure presents several substantial computational challenges. First and foremost, algorithmic implementations for solving linear SDOs are not nearly as advanced as those which exist for linear optimization problems. Therefore, each subproblem, which is itself a linear SDO, carries a larger computational cost than the usual corresponding linear program which typically arises in other global optimization problems with vector variables. Secondly, a critical component of the success of global optimization software is the ability to quickly resolve multiple instances of subproblems which have similar structure (for example, the technique of dual simplex is often relevant here). Corresponding methods for SDOs, as solved via interior point methods, are generally not well-developed. Finally, semidefinite structure complicates the traditional process of computing convex envelopes. Such computations are critical to the success of modern global optimization solvers like \texttt{BARON}.

There are a variety of other approaches to computing lower bounds to \eqref{eqn:FAequiv}. One such approach is the method of moments \citep{momentsBook}. However, for problems of the size we are considering, such an approach is likely not computationally feasible, so we do not make a direct comparison here. There is also recent work in complementarity constraints literature \citep{mitchellcomp} which connects rank-constrained optimization problems to copositive optimization \citep{cop1}. In short, such an approach turns \eqref{eqn:FAequiv} into an equivalent convex problem; despite the transformation, the new problem is not particularly amenable to computation. For this reason, we do not consider the copositive optimization approach here.

\section{Computational experiments}\label{sec:compute}

In this section, we perform various computational experiments to study the  
properties of our different algorithmic proposals for ($\textsc{CFA}_{q}$), for $q\in \{1,2\}$. Using a variety of statistical measures, we compare our
methods with existing popular approaches for FA, as
implemented in standard \texttt{R} 
statistical packages {\texttt{psych}}~\citep{psych}, {\texttt{nFactors}}~\citep{r-nfactors-13}, and {\texttt{GPArotation}}~\citep{gparotation-r}. We then turn our attention to certificates of optimality as described in Section \ref{sec:alg} for ($\textsc{CFA}_{1}$).

\subsection{Synthetic examples}\label{sec:synthetic-eg1}
For our synthetic experiments, we considered distinctly different groups of examples. Classes $A_1$ and $A_2$ have subspaces of the low-rank common factors which are random and the values
 of $\Phi_{i}$ are taken to be equally spaced.  The underlying matrix corresponding to the common factors in type $A_1$ is exactly low-rank, while this is not the case in type $A_2$.

\noindent {\bf {Class $A_1(R/p)$.}}
We generated a matrix $\M{L}_{p \times R}:=((L_{ij}))$ (with $R < p$) with $L_{ij}\stackrel{\text{iid}}{\sim} N(0,1)$. 
The \emph{unique variances} $\Phi_1, \ldots, \Phi_{p}$,
are taken to be proportional to $p$ equi-spaced values on the interval 
$[\lambda_{R}(\M{L}'\M{L}),\lambda_{1}(\M{L}'\M{L})]$ such that
$\Phi_{i} = \bar{\phi} \cdot \left( \lambda_{1}(\M{L}'\M{L})  + ( \lambda_{R}(\M{L}'\M{L})  - \lambda_{1}(\M{L}'\M{L}) )\frac{i-1}{p} \right),$ for $1 \leq i \leq p.$
$\bar{\phi}$, which controls the ratio of the variances between the \emph{uniquenesses} and the common  latent  factors 
 is chosen such that 
 $\sum_{i=1}^{p} \Phi_{i} = \tr (\M{L} \M{L}')$, i.e.,  the contribution to the total variance from the common factors matches 
 that from
the uniqueness factors. 
 The covariance matrix is thus given by:
 $\B\Sigma = \M{L}\M{L}' + \B\Phi$.  

\medskip 

\noindent {\bf {Class $A_2(p)$.}}
Here $\mb L_{p\times p}$ is generated as $L_{ij}\stackrel{\text{iid}}{\sim} N(0,1)$.
We did a full singular value decomposition on $\M{L}$ --- let $\M{U}_{L}$ denote the set of 
$p$ (left) singular vectors. We created a positive definite matrix with exponentially decaying eigenvalues
as follows:
$\widetilde{\M{L}}\widetilde{\M{L}}' = \M{U}_{L} \diag(\lambda_{1}, \ldots, \lambda_{p})\M{U}_{L}',$ where the eigenvalues were chosen as 
$\lambda_{i} = 0.8^{i/2}, i = 1, \ldots, p.$
We chose the diagonal entries of $\B\Phi$  (like data Type-$A_1$), as a scalar multiple ($\bar{\phi}$) 
of a uniformly spaced grid in
$[\lambda_{p}, \lambda_{1}]$ and $\bar{\phi}$  was chosen such that
$\sum_{i} \Phi_{i} = \tr (\widetilde{\M{L}}\widetilde{ \M{L}}')$. \\

In contrast, classes $B_1$, $B_2$, and $B_3$ are qualitatively different from the aforementioned ones---the subspaces corresponding to the
common factors are more structured, and hence different from the coherence-like assumptions on the eigenvectors which are  necessary for nuclear norm based methods~\citep{venkat-2012-factor} to work well.

{\noindent \bf {Class $B_1(R/p)$.} }
We set $\bs\Theta = \mb L \mb L'$, where $\mb L_{p\times R}$ is given by
$$L_{ij} = \left\{\begin{array}{ll}
1,& i\leq j\\
0,& i>j.
\end{array}\right.$$

\medskip

{\noindent \bf {Class $B_2(r/R/p)$.} }
Here we set $\T = \LL\LL'$, where $\LL_{p\times R}$ is such that
$$L_{ij} = \left\{\begin{array}{ll}
1,& i,j = 1,\ldots,r\\
\sim N(0,1), & i>r, j=1,\ldots,R\\
0,& i=1,\ldots,r,\; j>r.
\end{array}\right.$$
\medskip

{\noindent \bf {Class $B_3(r/R/p)$.} }
Here we define $\T = \LL\LL'$, where $\LL_{p\times R}$ is such that
$$L_{ij} = \left\{\begin{array}{ll}
1,& j=1,\ldots,r,\;i\leq j\\
\sim N(0,1),& j > r,\; i =1,\ldots,R\\
0,& i>j, \;j=1,\ldots,r.
\end{array}\right.$$

In all the $B$ classes, we generated 
$\Phi_{ii}=\phi_i \sim \text{abs}(N(0,1))$ for $i = 1, \ldots, p$ and the covariance matrix $\B\Sigma$ was taken to be
$\B\Sigma = \T + \alpha\B\Phi$, where $\alpha$ is so that $\Tr(\T) = \alpha \Tr(\Ph)$. Across all examples, we work with the corresponding correlation matrix $\mb D\s\mb D$, where $\mb D$ is a diagonal matrix such that all diagonal entries of $\mb D\B\Sigma\mb D$ are equal to one. This allows for comparision across examples.

\subsubsection*{Comparisons with other FA methods}
We performed a suite of experiments using Algorithms 1 and 2  for the cases $q \in \{1, 2\}$.
 We compared our proposed algorithms  with the following popular FA estimation procedures as  described in Section~\ref{sec:related-work}:

1. \textsc{\small MINRES}: minimum residual factor analysis

2. \textsc{\small WLS}: weighted least squares method
with weights being the \emph{uniquenesses}

3. \textsc{\small  PA}: this is the principal axis factor analysis method 

4. \textsc{\small{MTFA}}: constrained minimum trace factor analysis---formulation~\eqref{orig-mtfa1} 

5. \textsc{\small PC}: The  method of principal component factor analysis

6. \textsc{\small MLE}: this is the maximum likelihood  estimator (MLE)

7. \textsc{\small GLS}: the generalized least squares method \\

For \textsc{\small MINRES},  \textsc{\small  WLS}, \textsc{\small  GLS}, 
and  \textsc{PA}, we used the implementations available in the \texttt{R} package {\texttt{psych}}~\citep{psych} available from \texttt{CRAN}. 
For \textsc{\small  MLE} we investigated the methods 
{\texttt {factanal}} from R-package {\texttt {stats}} and
the {\texttt fa} function from \texttt{R} package {\texttt{psych}}. 
The estimates obtained by the the MLE implementations were quite similar.

For \textsc{\small{MTFA}}, we used our own implementation by adapting the \textsc{ADMM} algorithm (Section~\ref{sec:subprob-phi1}) to solve Problem~\eqref{orig-mtfa1}.
For the experiments in Section~\ref{sec:synthetic-eg1}, we took the convergence thresholds for 
Algorithms~1 and 2 as $\mathrm{TOL}=10^{-5}$, and ADMM as $\mathrm{TOL}\times \alpha = 10^{-9}$. 
For the PC method we followed the description in~\cite{bai-ng-2008large-review} (as described in Section~\ref{sec:related-work})---the $\B\Phi$ estimates were thresholded at zero if they became negative.

Note that  all the methods considered in the experiments apart from~\textsc{\small{MTFA}},
allow the user to specify the desired number of factors in the problem.
Since standard implementations of \textsc{\small MINRES},  \textsc{\small  WLS} and  \textsc{\small  PA} 
require $\B\Sigma$ to be a correlation matrix, we standardized the covariance matrices to correlation matrices at the onset. 

\subsubsection{Performance measures}\label{sec:perf-meas1}
We consider the following 
measures of ``goodness of fit''~(See for example~\cite{Bartholomew_Knott_Moustaki_2011} and references therein)
to assess the performances of the different  FA estimation procedures.

\paragraph{Estimation error in $\B\Phi$:}
We use the following measure to assess the quality of an estimator for $\B\Phi$:
\begin{equation}\label{accu-phi-est1}
\text{Error}(\B\Phi) : = \sum_{i=1}^{p} (\widehat{\Phi}_{i} - \Phi_{i})^2.
\end{equation} 
The estimation of $\B\Phi$ plays an important role in FA---given a good estimate  $\widehat{\B\Phi}$, the $r$-common factors can be obtained by a
rank-r eigendecomposition on the residual covariance matrix $\B\Sigma - \widehat{\B\Phi}$.

\paragraph{Proportion of variance explained and semi-definiteness of $(\B\Sigma - \B\Phi)$:}
A fundamental objective in FA  
lies in understanding how well the $r$-common factors explain 
the residual covariance, i.e., $(\B\Sigma - \widehat{\B\Phi})$---a direct analogue of what is done in PCA, as explained in Section~\ref{sec:intro1}.
For a given $r$, the proportion of variance explained by the $r$ common factors is given by
\begin{equation}\label{pro-var}
\text{Explained Variance}= \sum_{i=1}^{r} \lambda_{i}(\widehat{\B\Theta})/\sum_{i=1}^{p} \lambda_{i}(\B\Sigma - \widehat{\B\Phi}).
\end{equation}
As $r$ increases, the explained variance increases to one. This trade-off between $r$ and ``Explained Variance", plays an 
important role in exploratory FA and in particular, the choice of $r$. 
For the expression~\eqref{pro-var} to be meaningful, it is desirable to have $\B\Sigma -  \widehat{\B\Phi} \succeq \M{0}$.
Note that our framework ($\textsc{CFA}_{q}$) and in particular, \textsc{\small{MTFA}} estimates $\B\Phi$
under a PSD constraint on $\B\Sigma - {\B\Phi}$. 
 However, as seen in our experiments  $(\widehat{\B\Phi},\widehat{\B\Theta})$ estimated by the remaining methods \textsc{\small MINRES}, \textsc{\small PA}, 
\textsc{\small WLS}, \textsc{\small GLS}, \textsc{\small MLE} and others 
 often violate the PSD condition on $\B\Sigma - \widehat{\B\Phi}$ for some choices of $r$, thereby rendering the interpretation of  ``Explained Variance'' troublesome.

For the \textsc{\small{MTFA}} method, with estimator $\widehat{\B\Theta}$, 
the measure~\eqref{pro-var} applies only for the value of $r = \rnk(\widehat{\B\Theta})$ and the explained variance is one.

For the methods we have included in our comparisons, we report the 
values of ``Explained Variance'' as delivered by the {\texttt R}-package implementations.

\paragraph{Proximity between $\widehat{\B\Theta}$ and $\B\Theta$:}
A relevant measure of the  proximity between
$\B\Theta$ and its estimate ($\widehat{\B\Theta}$) is given by
\begin{equation}\label{prox-thetar}
\text{Error}(\B\Theta) := \|\widehat{\B\Theta}- \B\Theta_{r}\|^2_2,
\end{equation}
where $\B\Theta_{r}$ is the best rank-$r$ approximation to $\B\Theta$ and can be 
viewed  as the natural ``oracle'' counterpart of $\widehat{\B\Theta}$.
Note that \textsc{\small{MTFA}} does not incorporate any constraint on 
$\rnk(\widehat{\B\Theta})$ in its formulation. Since the estimates obtained by this procedure 
satisfy $\widehat{\B\Theta} = {\B\Sigma} - \widehat{\B\Phi}$, $\rnk(\widehat{\B\Theta})$ may be quite different from $r$.

\paragraph{Discussion of experimental results.}
We next discuss our findings from the numerical experiments for the synthetic datasets. 

Table~\ref{tab:type1}  shows the performances of the various methods for different choices of $p,R,r$ for class $A_{1}$.
For the problems~$(\textsc{CFA}_{q}) , q \in \{1, 2\}$, we present the results of Algorithm 2. Results obtained by Algorithm~1 were similar.
In all the examples, with the exception of $\textsc{\small{MTFA}},$ we set the number of factors to be $r = (R-1)$. 
For $\textsc{\small{MTFA}},$
the rank of $\widehat{\B\Theta}$ was computed as the number of eigenvalues of $\widehat{\B\Theta}$ larger than 
$10^{-5}$. 
$\textsc{\small{MTFA}}$ and $(\textsc{CFA}_{q}), q \in \{1,2\}$ estimate 
$\B\Phi$ with zero error---significantly better than competing methods. 
$\textsc{\small{MTFA}}$ and $(\textsc{CFA}_{q}), q \in \{1,2\}$ result in estimates such that 
$\B\Sigma - \widehat{\B\Phi}$ is PSD, other methods however fail to do so---the discrepancy can often be quite large.
$\textsc{\small{MTFA}}$ performs poorly in terms of estimating $\B\Theta$ since the estimated $\B\Theta$ has rank different than $r$. 
In terms of the proportion of variance explained $(\textsc{CFA}_{q})$ performs significantly better than all other methods.
The notion of ``Explained Variance'' by $\textsc{\small{MTFA}}$ for $r = (R-1)$ is not applicable 
since the rank of the estimated $\B\Theta$ is larger than $r$.

\begin{table}[!htpb]
\centering
\scalebox{0.82}{\begin{tabular}{| l | c c c c c c c c | c |}\hline
Performance &    \multicolumn{8}{|c|}{Method Used} & Problem Size\\
 measure & $(\textsc{CFA}_{1})$ & $(\textsc{CFA}_{2})$ &$\textsc{\small{MTFA}}$ & \textsc{\small MINRES} & \textsc{\small WLS} & \textsc{\small PA}  & \textsc{\small PC} & \textsc{\small MLE} & $(R/p)$ \\ 
  \hline
Error($\Phi$) & 0.0 & 0.0 & 0.0 & 699.5 & 700.1 & 700.0 & 639.9 & 699.6 & 3/200  \\ 
  Explained Var. & 0.6889 & 0.6889 & - & 0.2898 & 0.2898 & 0.2898 & 0.2956 & 0.2898 & 3/200  \\ 
  $\lambda_{\min}(\B\Sigma - \widehat{\B\Phi})$ & 0.0 & 0.0 & 0.0 & -0.6086 & -0.6194 & -0.6195 & -0.6140 & -0.6034& 3/200  \\ 
   Error($\B\Theta$)  & 0.0 & 0.0  & 689.68 & 0.1023  & 0.1079  & 0.1146   & 0.7071  & 0.1111 & 3/200  \\ \hline
Error($\Phi$) & 0.0 & 0.0 & 0.0 & 197.18 & 197.01 & 196.98 & 139.60 & 197.18 & 5/200  \\ 
  Explained Var. & 0.8473 & 0.8473 & - & 0.3813 & 0.3813 & 0.3814 & 0.3925 & 0.3813 & 5/200 \\ 
  $\lambda_{\min}(\B\Sigma - \widehat{\B\Phi})$ & 0.0 & 0.0 & 0.0 & -0.4920 & -0.5071 & -0.5081 & -0.4983 & -0.4919& 5/200  \\ 
   Error($\B\Theta$)   & 0.0  & 0.0  & 190.26 & 0.0726  & 0.0802  & 0.0817  & 1.1638  & 0.0725  & 5/200  \\ \hline
Error($\Phi$) & 0.0 & 0.0 & 0.0 & 39.85 & 39.18 & 39.28 & 2.47 & 40.04 & 10/200 \\ 
  Explained Var. & 0.9371 & 0.9371 & - & 0.4449 & 0.4452 & 0.4452 & 0.4686 & 0.4449& 10/200 \\ 
  $\lambda_{\min}(\B\Sigma - \widehat{\B\Phi})$ & 0.0 & 0.0 & 0.0 & -0.2040 & -0.2289 & -0.2329 & -0.2060 & -0.2037& 10/200 \\ 
   Error($\B\Theta$)   & 0.0& 0.0& 35.94 & 0.0531 & 0.0399& 0.0419 & 2.47 & 0.0560& 10/200 \\ 
     \hline \hline
Error($\Phi$) & 0.0 & 0.0 & 0.0 & 3838 & 3859 & 3858 & 3715 & 3870 &2/500\\ 
  Explained Var. & 0.7103 & 0.7103 & - & 0.3036 & 0.3033 & 0.3033 & 0.3056 & 0.3031 &2/500\\ 
  $\lambda_{\min}(\B\Sigma - \widehat{\B\Phi})$ & 0.0 & 0.0 & 0.0 & -0.7014 & -0.7126 & -0.7125 & -0.7121 & -0.7168&2/500 \\ 
   Error($\B\Theta$)   & 0.0 & 0.0 & 3835.6 & 0.0726 & 0.0770 & 0.0766 & 0.5897 & 0.1134&2/500 \\ \hline
Error($\Phi$) & 0.0 & 0.0 & 0.0 & 1482.9 & 1479.1 & 1478.9 & 1312.9 & 1485.1 &5/500\\ 
  Explained Var. & 0.8311 & 0.8311 & - & 0.3753 & 0.3754 & 0.3754 & 0.3798 & 0.3752 &5/500\\ 
  $\lambda_{\min}(\B\Sigma - \widehat{\B\Phi})$ & 0.0 & 0.0 & 0.0 & -0.5332 & -0.5433 & -0.5445 & -0.5423 & -0.5379&5/500 \\ 
   Error($\B\Theta$)   & 0.0 & 0.0 & 1459.5 & 0.1176 & 0.0752 & 0.0746 & 1.1358 & 0.1301& 5/500 \\ \hline 
Error($\Phi$) & 0.0 & 0.0 & 0.0 & 301.94 & 301.08 & 301.04 & 160.29 & 302.1  & 10/500 \\ 
  Explained Var. & 0.9287 & 0.9287 & - & 0.4443 & 0.4444 & 0.4444 & 0.4538 & 0.4443 & 10/500 \\ 
  $\lambda_{\min}(\B\Sigma - \widehat{\B\Phi})$ & 0.0 & 0.0 & 0.0 & -0.3290 & -0.3280 & -0.3281 & -0.3210 & -0.3302 & 10/500 \\ 
   Error($\B\Theta$)   & 0.0 & 0.0 & 291.44 & 0.0508 & 0.0415 & 0.0412 & 2.3945 & 0.0516& 10/500 \\ 
     \hline \hline
Error($\Phi$) & 0.0 & 0.0 & 0.0 & 19123 & 19088 & 19090 & 18770 & 19120 & 2/1000 \\ 
  Explained Var. & 0.6767 & 0.6767 & - & 0.2885 & 0.2886 & 0.2886 & 0.2898 & 0.2885& 2/1000 \\ 
  $\lambda_{\min}(\B\Sigma - \widehat{\B\Phi})$ & 0.0 & 0.0 & 0.0 & -0.6925 & -0.7335 & -0.7418 & -0.7422 & -0.6780& 2/1000 \\ 
   Error($\B\Theta$)   & 0.0 & 0.0 & 19037 & 50.76 & 0.3571 & 0.0984 & 0.8092 & 18.5930& 2/1000 \\ 
\hline
Error($\Phi$) & 0.0 & 0.0 & 0.0 & 6872 & 6862 & 6861 & 6497 & 6876 & 5/1000\\ 
  Explained Var. & 0.8183 & 0.8183 & - & 0.3716 & 0.3717 & 0.3717 & 0.3739 & 0.3716 & 5/1000\\ 
  $\lambda_{\min}(\B\Sigma - \widehat{\B\Phi})$ & 0.0 & 0.0 & 0.0 & -0.6202 & -0.6785 & -0.6797 & -0.6788 & -0.6209& 5/1000 \\ 
   Error($\B\Theta$)   & 0.0 & 0.0 & 6818.1 & 0.1706 & 0.0776 & 0.0780 & 1.1377 & 0.1696 & 5/1000\\ 
\hline
Error($\Phi$) & 0.0 & 0.0 & 0.0 & 1682 & 1681 & 1681 & 1311 & 1682& 10/1000 \\ 
  Explained Var. & 0.9147 & 0.9147 &- & 0.4360 & 0.4360 & 0.4360 & 0.4408 & 0.4360& 10/1000 \\ 
  $\lambda_{\min}(\B\Sigma - \widehat{\B\Phi})$ & 0.0 & 0.0 & 0.0 & -0.2643 & -0.2675 & -0.2676 & -0.2631 & -0.2643& 10/1000 \\ 
   Error($\B\Theta$)   & 0.0 & 0.0 & 1654.3 & 0.0665 & 0.0568 & 0.0570 & 2.4202 & 0.0665 & 10/1000\\ 
  \hline
\end{tabular}}
\caption{\small{Comparative performances of the various FA methods for data of type $A_1$, for different choices of 
$R$ and $p$. 
In all the above methods (apart from \textsc{\small{MTFA}}) $r$ was taken to be $(R -1)$.  In all the cases $\rnk(\widehat{\B\Theta})$
obtained by \textsc{\small{MTFA}} is seen to be $R$. 
The ``-'' symbol implies that the notion of explained variance is not meaningful for \textsc{\small{MTFA}} for $r=R-1$. 
No method in Category (B) satisfies $\B\Sigma - \B\Phi \succeq \M{0}$.
Methods proposed in this paper seem to significantly outperform their competitors, across the different performance measures. }}\label{tab:type1}
\end{table}

Figure~\ref{fig:one} displays results for type $A_2$.
Here we present the results for $(\textsc{CFA}_{q}), q \in \{1, 2\}$ using Algorithm 2 (results obtained by Algorithm~1 were very similar). 
For all the methods (with the exception of \textsc{\small{MTFA}}) we 
computed estimates of $\B\Theta$ and $\B\Phi$ for
a range of values of $r$. 
 \textsc{\small{MTFA}} and  $(\textsc{CFA}_{1})$ 
do a perfect job in estimating $\B\Phi$ and 
both deliver PSD matrices $\B\Sigma - \widehat{\B\Phi}$. 
\textsc{\small{MTFA}} computes solutions ($\widehat{\B\Theta}$) with 
a higher numerical rank and with large errors in estimating $\B\Theta$ (for smaller values of $r$).
Among the four performance measures corresponding to \textsc{\small{MTFA}}, $\text{Error}(\B\Theta)$ 
is the only one that varies with different $r$ values. Each of the other three measures deliver a single value  corresponding to $r=\rnk(\widehat{\B\Theta})$.
Overall, it appears that $(\textsc{CFA}_{q})$ is significantly better than all other methods.

Figure~\ref{fig:rev-types478} shows the results for classes $B_{1}, B_{2}, B_{3}$. 
We present the results for $(\textsc{CFA}_{q})$ for  $q \in \{1, 2\}$ using Algorithm 2, as before. 
Figure~\ref{fig:rev-types478} shows the performance of the different methods in terms of four different metrics: 
error in $\B\Phi$ estimation, proportion of variance explained, violation of the PSD constraint on $\B\Sigma - \B\Phi$ and error in 
$\B\Theta$ estimation. For the case of $B_{1}$  we see that the proportion of explained variance for $(\textsc{CFA}_{q})$ 
reaches one at a rank smaller than that of $\textsc{\small{MTFA}}$---this shows that the non-convex criterion $(\textsc{CFA}_{q})$
provides smaller estimates of the rank than its convex relaxation $\textsc{\small{MTFA}}$ when one seeks a model that explains the 
full proportion of residual variance. This result is qualitatively different from 
the behavior seen for $A_{1}, A_{2}$, where the benefit of $(\textsc{CFA}_{q})$ over 
$\textsc{\small{MTFA}}$ was mainly due to its flexibility to control the rank of $\B\Theta$.
Algorithms in Category (A) do an excellent job in estimating $\B\Phi$. All other 
competing methods perform poorly in estimating $\B\Phi$ for small/moderate values of $r$.
We observe that none of the methods apart from $(\textsc{CFA}_{q})$ and $\textsc{\small{MTFA}}$ lead to 
PSD estimates of $\B\Sigma - \widehat{\B\Phi}$ (unless $r$ becomes sufficiently large which corresponds to a model with a saturated fit).
In terms of 
the proportion of variance explained, our proposal performs much better than the 
competing methods. We see that the error in $\B\Theta$ estimation incurred by $(\textsc{CFA}_{q})$, increases
marginally as soon as the rank $r$ becomes larger than a certain value for $B_{1}$. 
Note that around the same values of $r$, the proportion of explained variance reaches one in both these cases, thereby suggesting
that this is possibly not a region of statistical interest.  In summary, Figure~\ref{fig:rev-types478} suggests that 
$(\textsc{CFA}_{q})$ performs very well compared to all its competitors.

\begin{figure}[h!]
\centering
\resizebox{1.05\textwidth}{0.33\textheight}{\begin{tabular}{cc}
\includegraphics[width=0.45\textwidth, trim = 0cm 2.2cm 0cm 2cm, clip = true ]{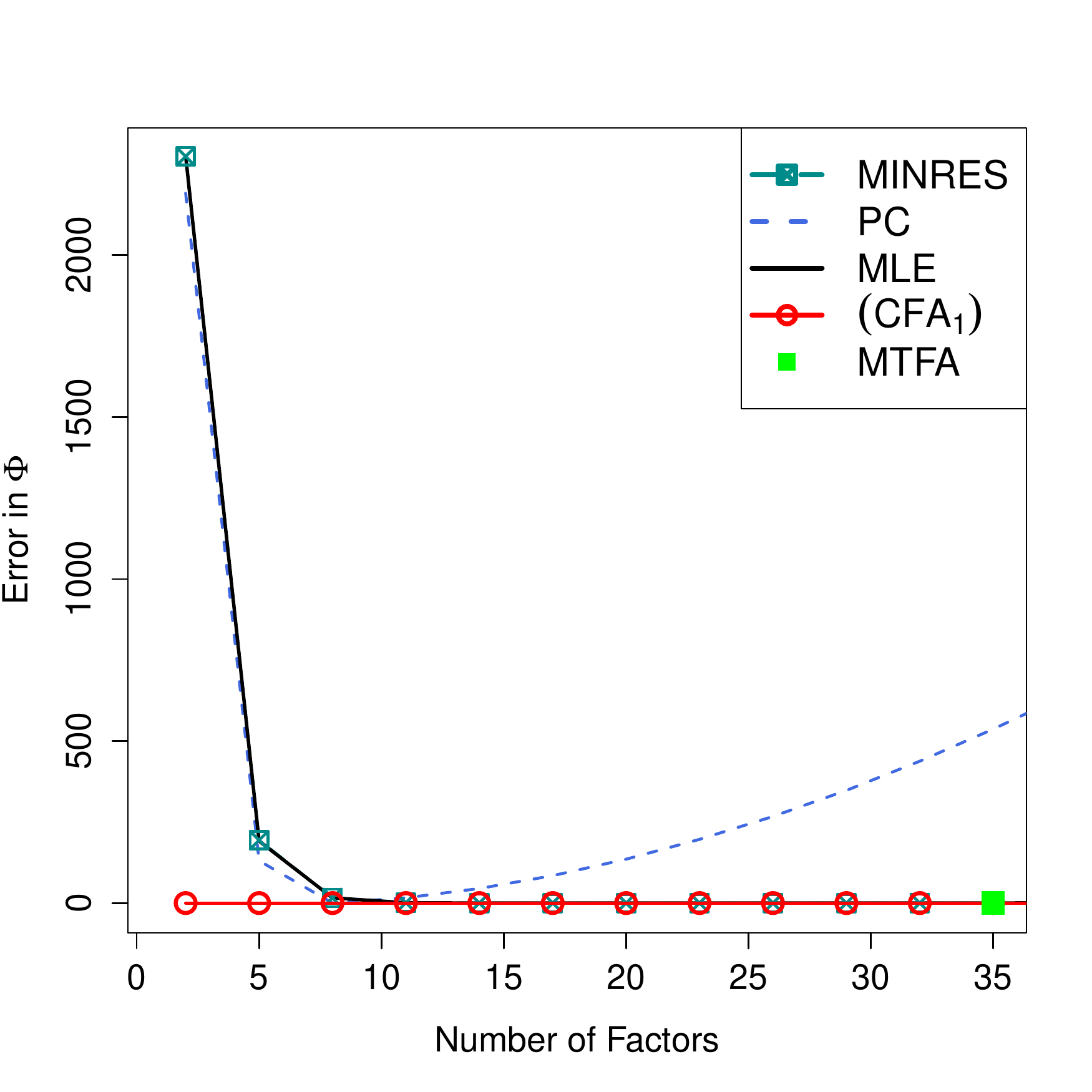}&
\includegraphics[width=0.45\textwidth,  trim = 0cm 2.2cm 0cm 2cm, clip = true ]{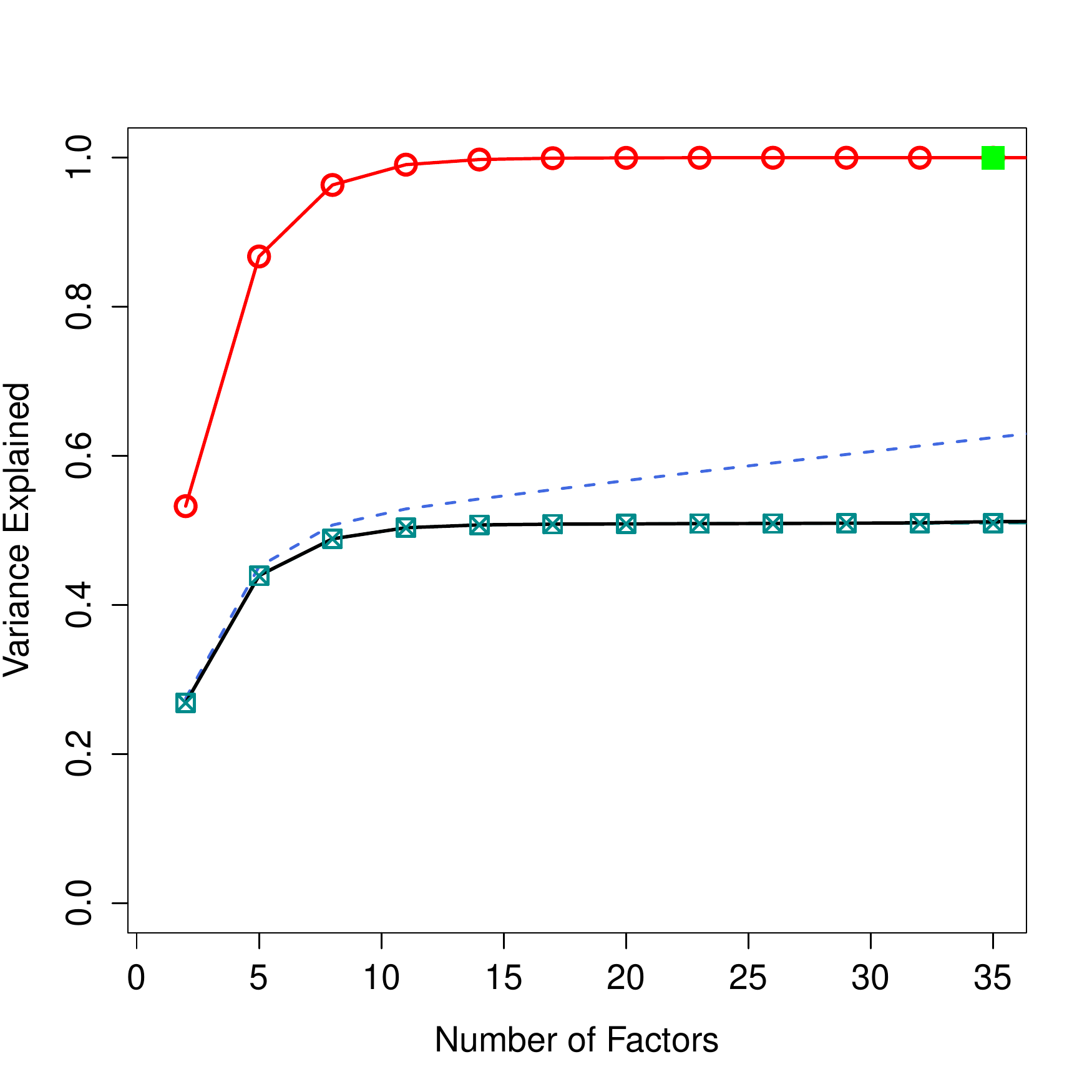}  \\
\includegraphics[width=0.45\textwidth,  trim = 0cm 2.2cm 0cm 2cm, clip = true ]{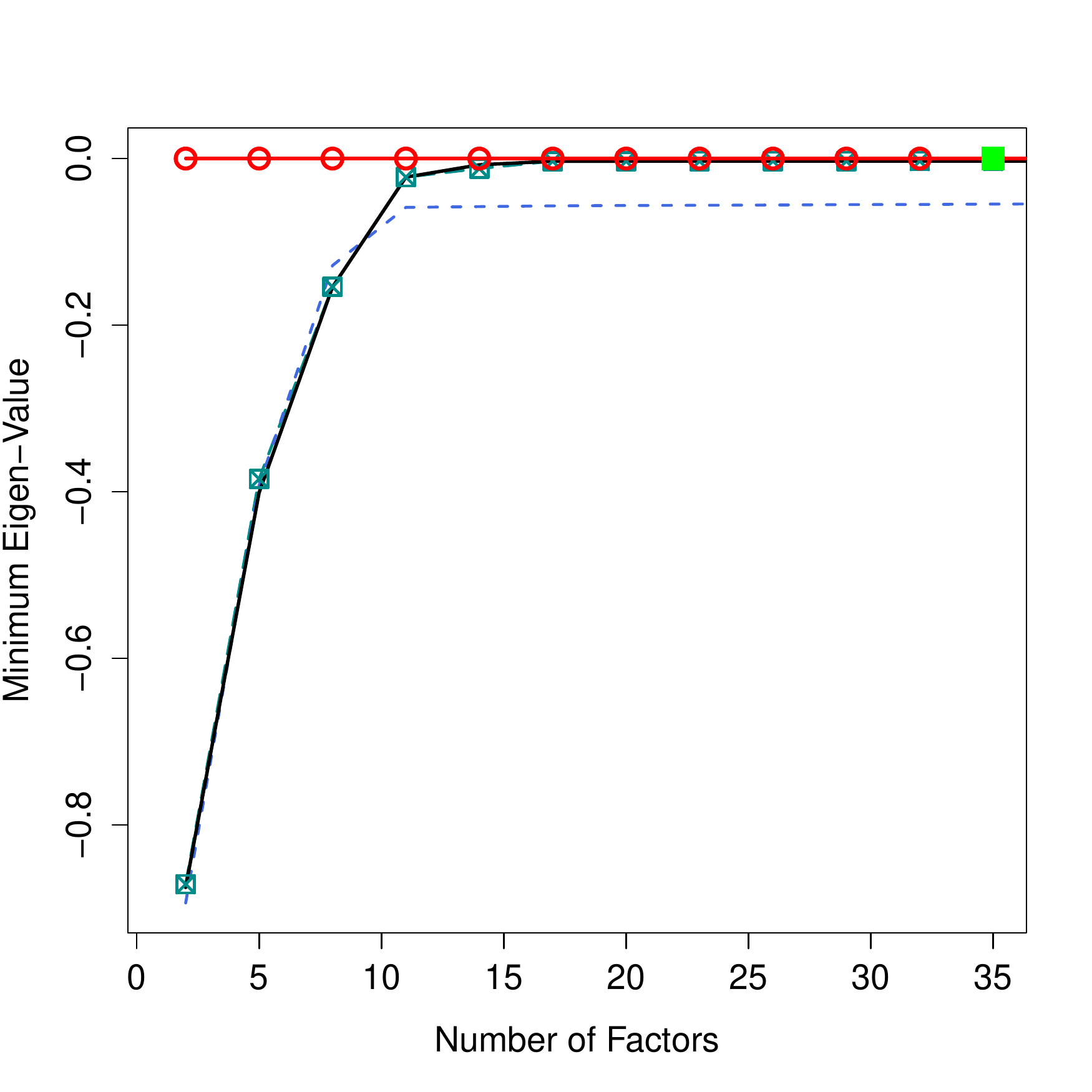}&
\includegraphics[width=0.45\textwidth, trim = 0cm 2.2cm 0cm 2cm, clip = true ]{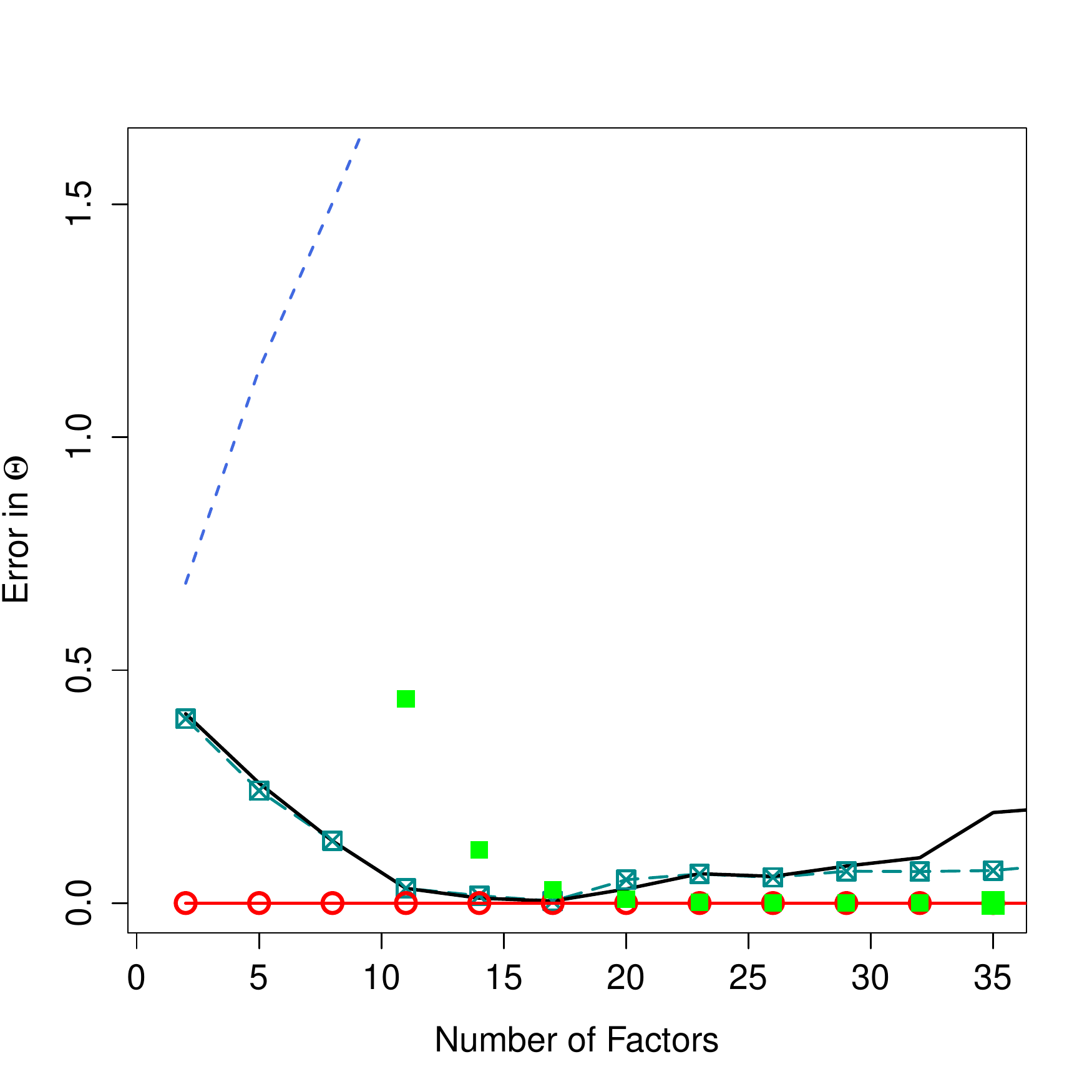} \\
 \sf \scriptsize Number of factors  &\sf \scriptsize Number of factors  \\
 \end{tabular}} \caption{{\small{Figure showing performances of various FA methods for examples from class $A_2(p=200)$ as a function of the number of factors.
The y-axis label ``Minimum Eigen Value'' refers to $\lambda_{\min}(\B\Sigma - \widehat{\B\Phi})$. We present the results of $(\textsc{CFA}_{1})$, as obtained via 
Algorithm~\ref{algo:ao1-sk} --- the results of $(\textsc{CFA}_{2})$ were similar, and hence omitted from the plot.
Our methods seems to perform better than all the other competitors. For large values of $r$, as the fits saturate, the methods become similar. The methods (as available from the {\texttt{R}} package implementations) that experienced convergence difficulties do not appear in the plot. }} }\label{fig:one}
\end{figure}

\begin{figure}
\centering
\resizebox{1.05\textwidth}{0.48\textheight}{\begin{tabular}{rccc}
& \sf  \scriptsize Type-$B_{1}$  &\sf \scriptsize Type-$B_{2}$ & \sf \scriptsize Type-$B_{3}$  \medskip \\

\rotatebox{90}{~~~~~~~~~~~\sf {\scriptsize{Error in $\Phi$}} }&\includegraphics[width=0.33\textwidth, trim =1cm 2.2cm 0cm 2cm, clip = true ]{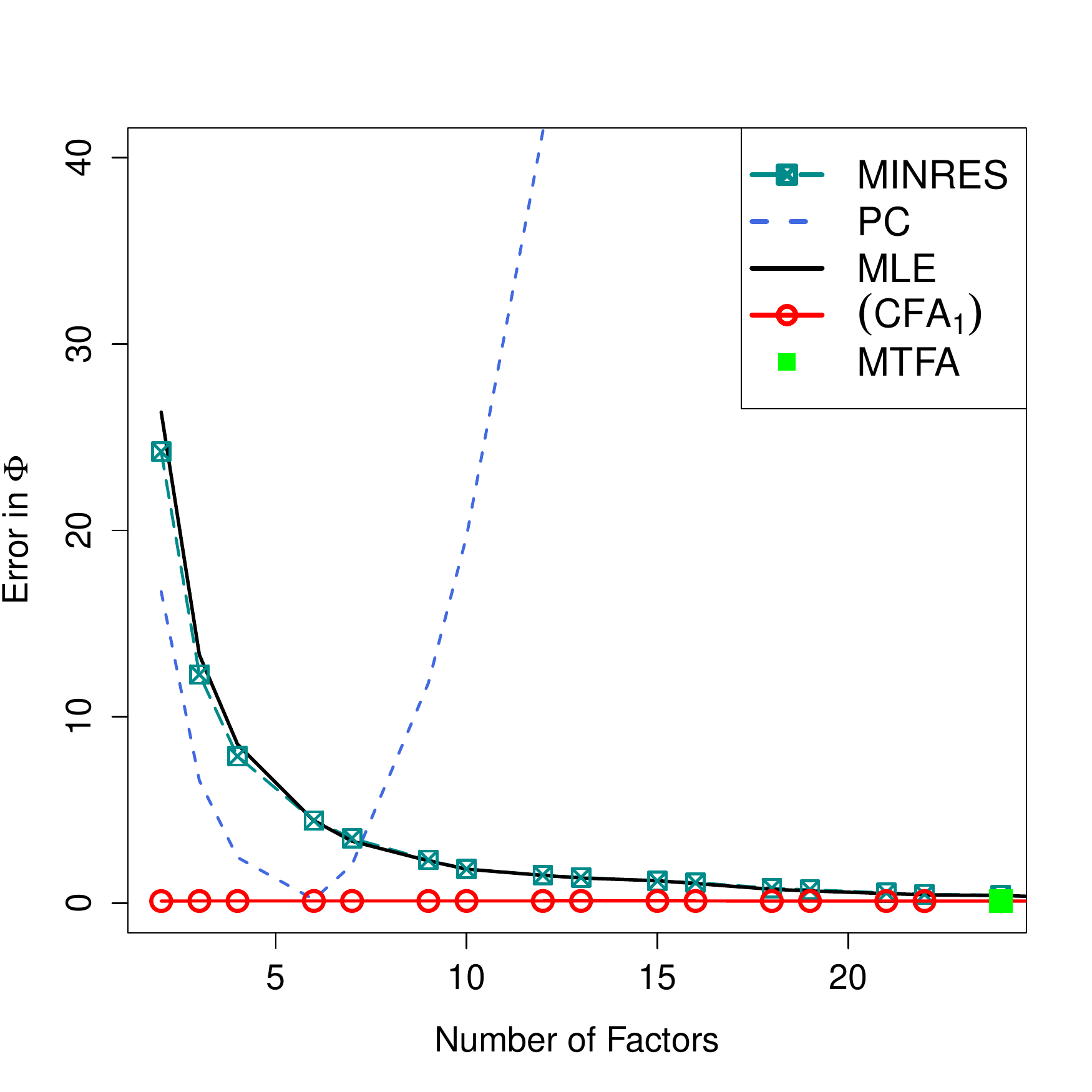}&
\includegraphics[width=0.33\textwidth, trim =  1cm 2.2cm 0cm 2cm, clip = true ]{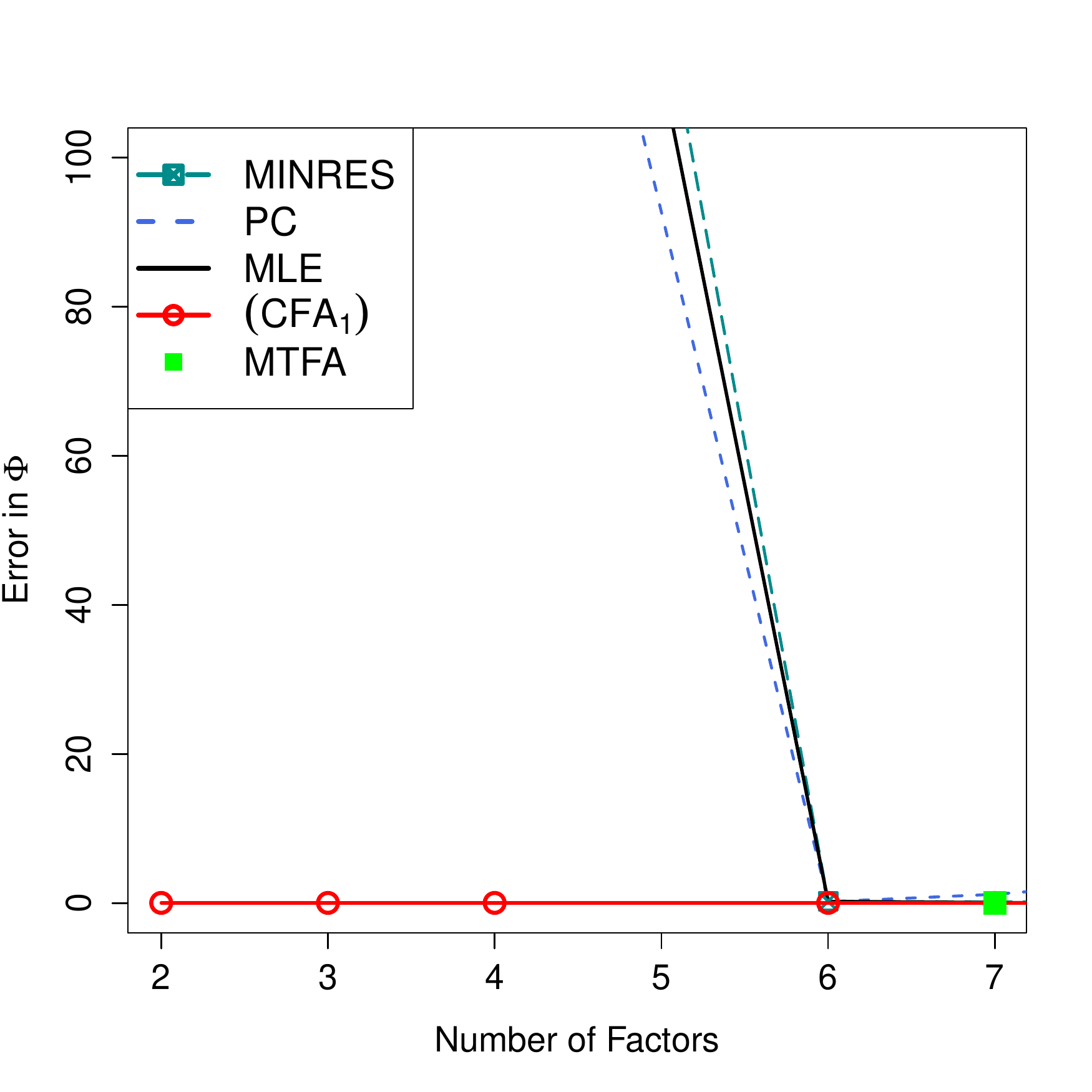}&
\includegraphics[width=0.33\textwidth, trim =  1cm 2.2cm 0cm 2cm, clip = true ]{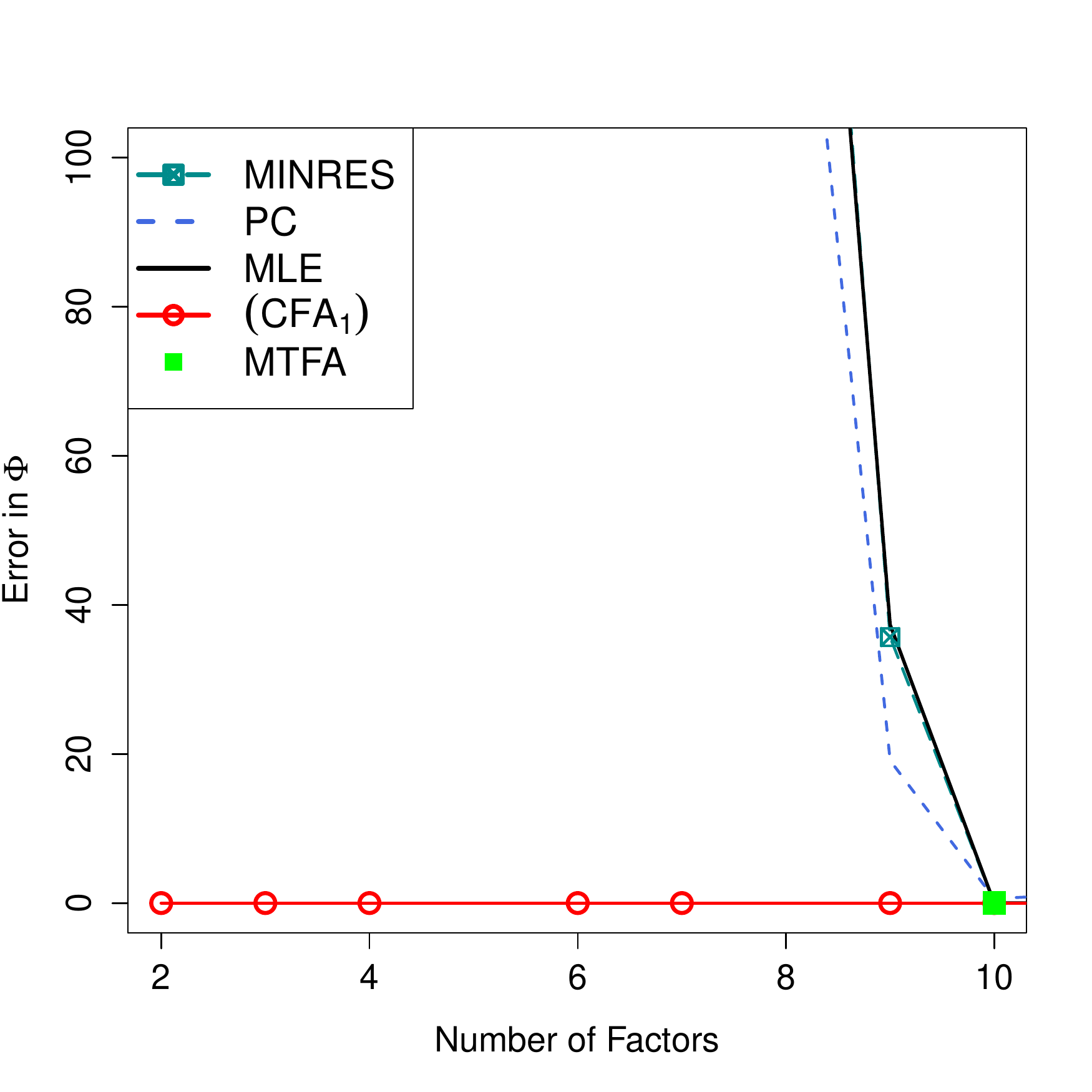}\\

\rotatebox{90}{ ~~~~~~\sf {\scriptsize{Variance Explained}} }&\includegraphics[width=0.33\textwidth,   trim = 1cm 2.2cm 0cm 1.8cm, clip = true ]{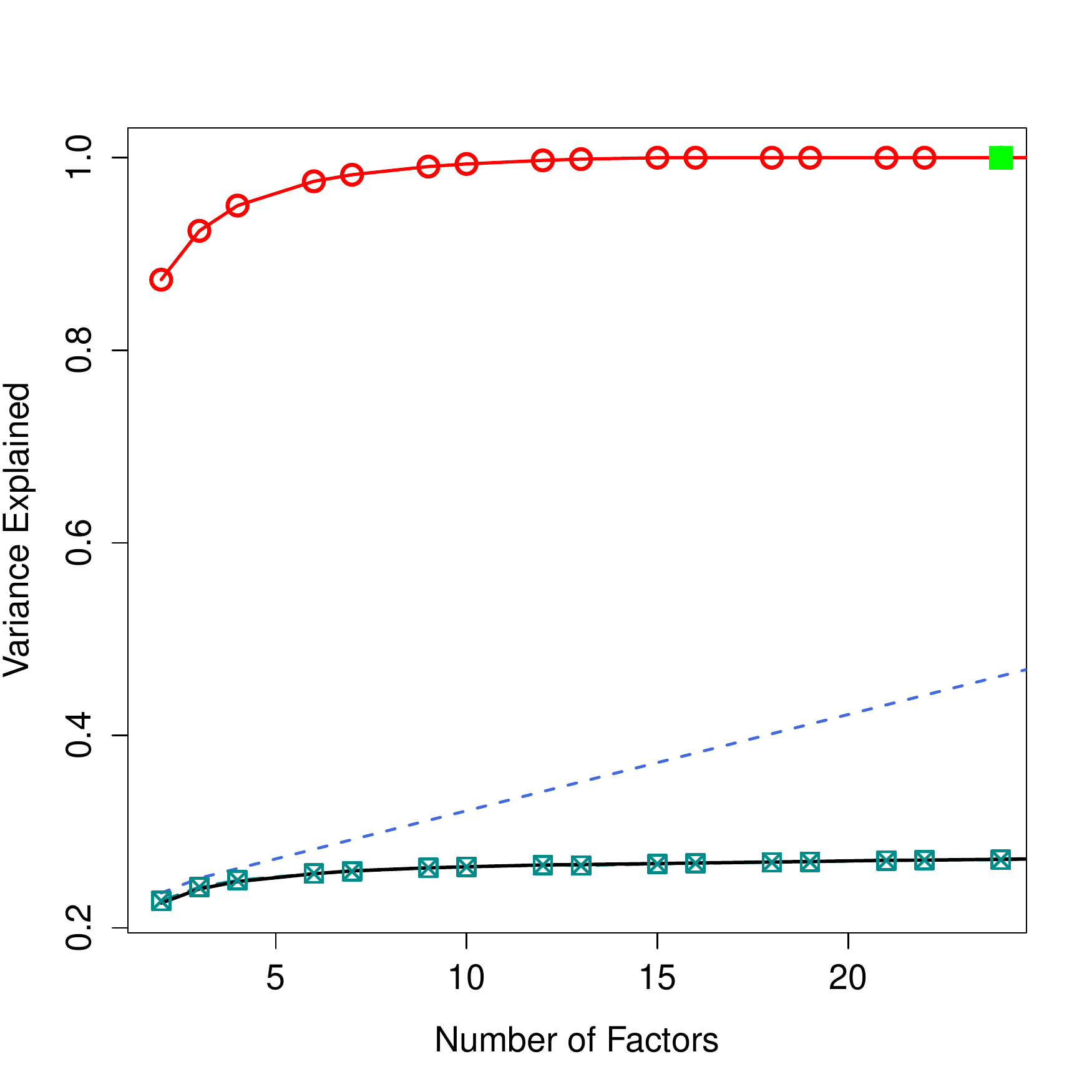}&
\includegraphics[width=0.33\textwidth,   trim =  1cm 2.2cm 0cm 1.8cm, clip = true ]{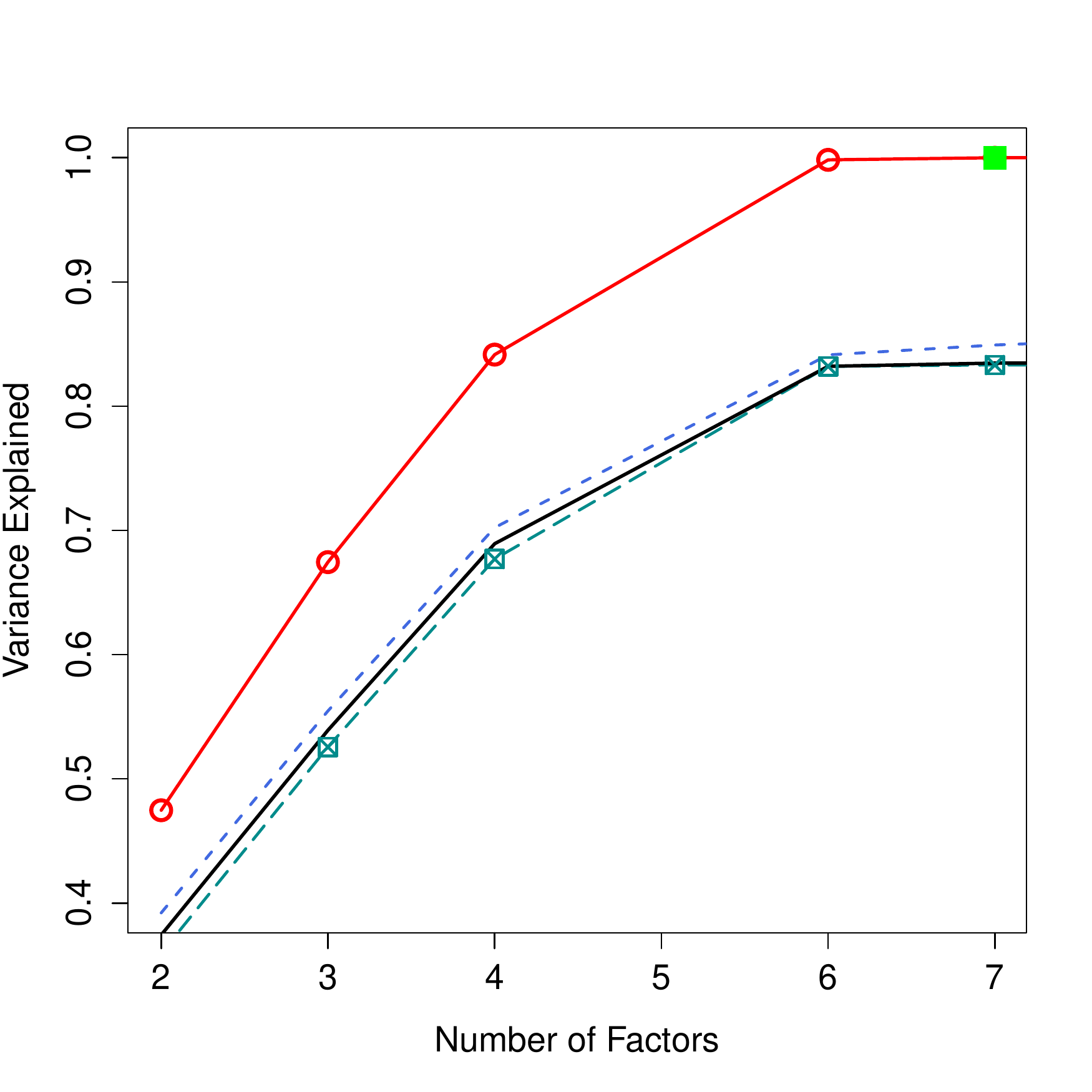}&
\includegraphics[width=0.33\textwidth,   trim =  1cm 2.2cm 0cm 1.8cm, clip = true ]{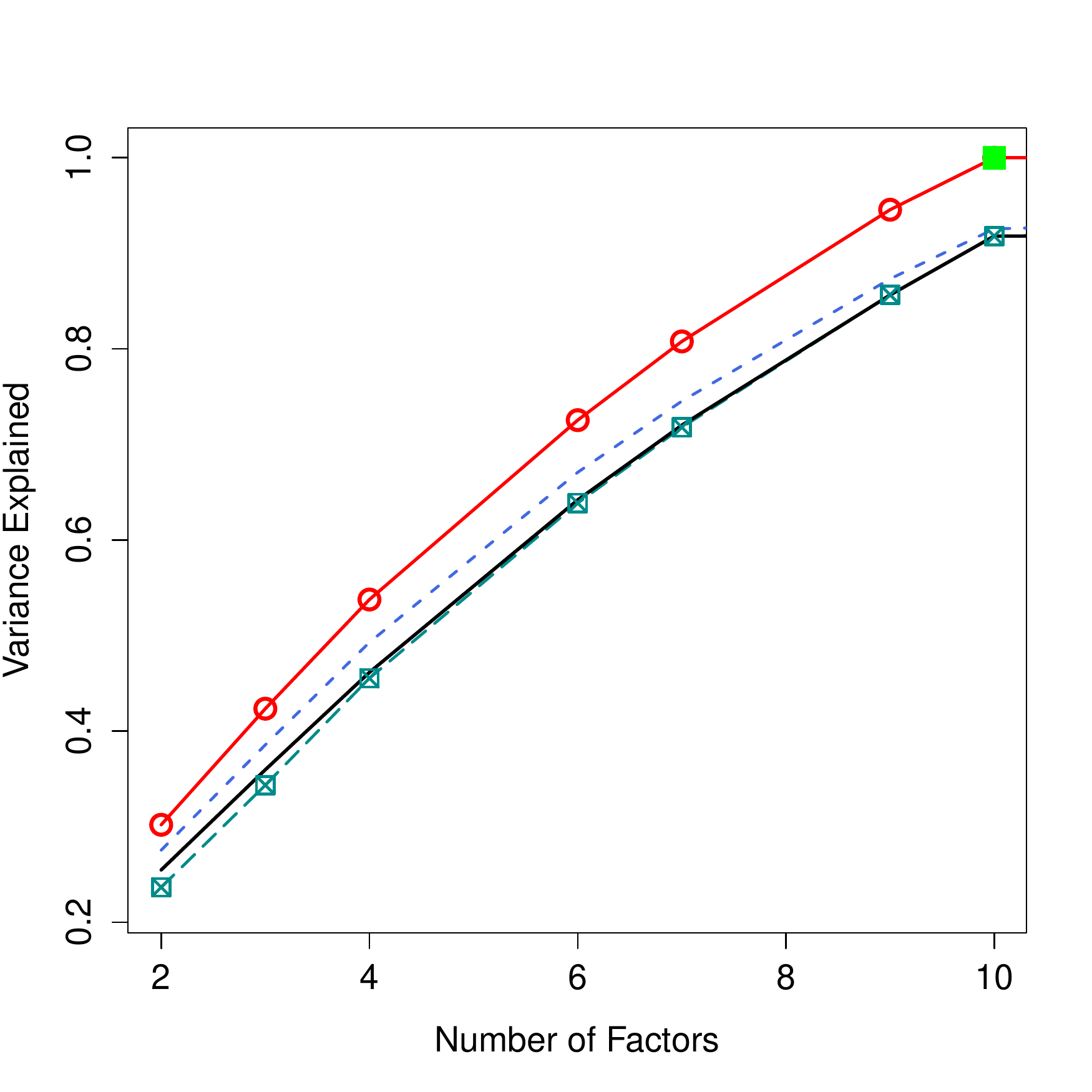} \\

\rotatebox{90}{~~~~~~~~~\sf {\scriptsize{$\lambda_{\min}(\B\Sigma - \hat{\B\Phi})$ }}}&\includegraphics[width=0.33\textwidth, trim = 1cm 2.2cm 0cm 1.8cm, clip = true ]{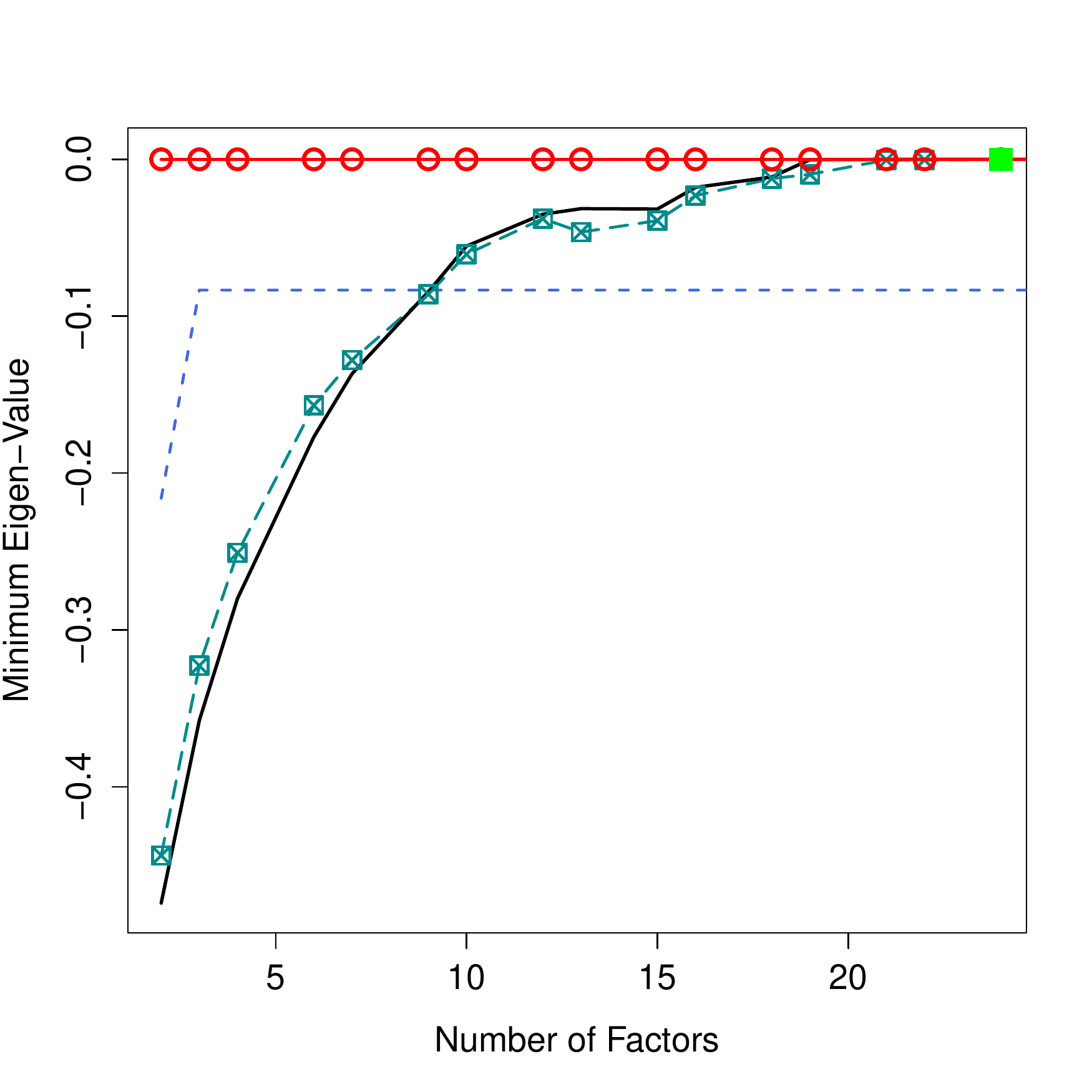}&
\includegraphics[width=0.33\textwidth, trim =  1cm 2.2cm 0cm 1.8cm, clip = true ]{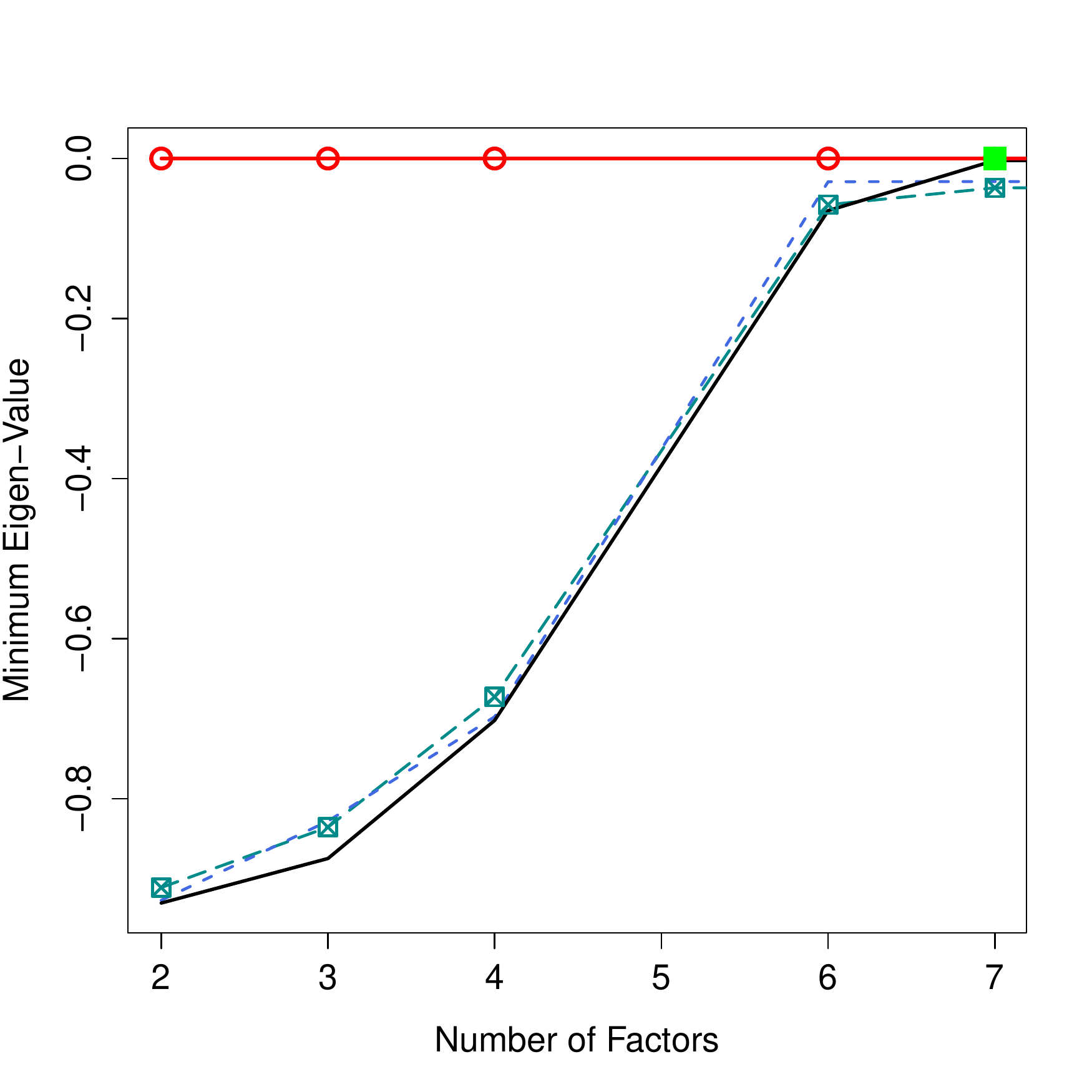}&
\includegraphics[width=0.33\textwidth, trim =  1cm 2.2cm 0cm 1.8cm, clip = true ]{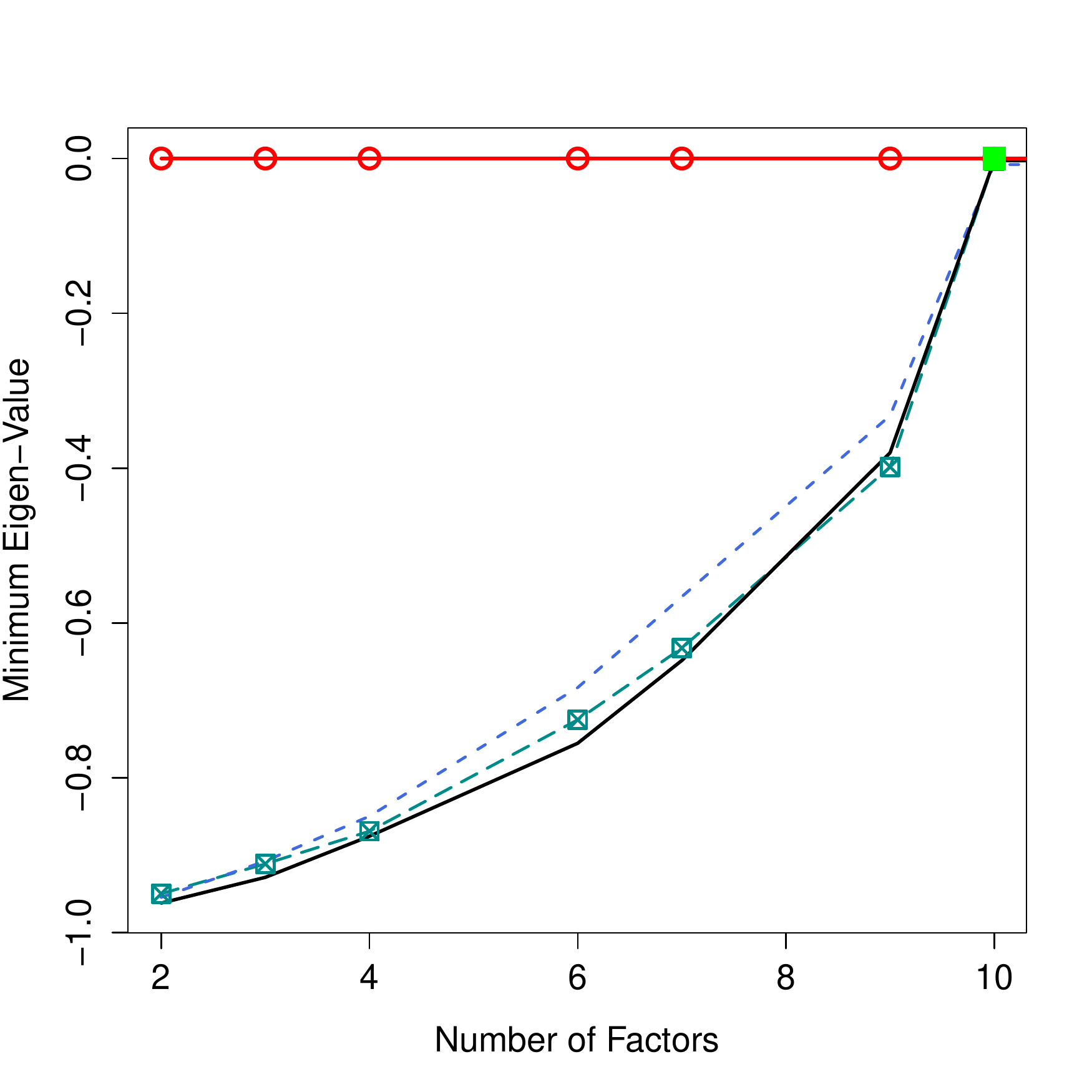} \\

\rotatebox{90}{\sf {\scriptsize{~~~~~~~~~~~~~~Error in $\B\Theta$}}}&\includegraphics[width=0.33\textwidth, trim = 1cm 2.2cm 0cm 1.8cm, clip = true ]{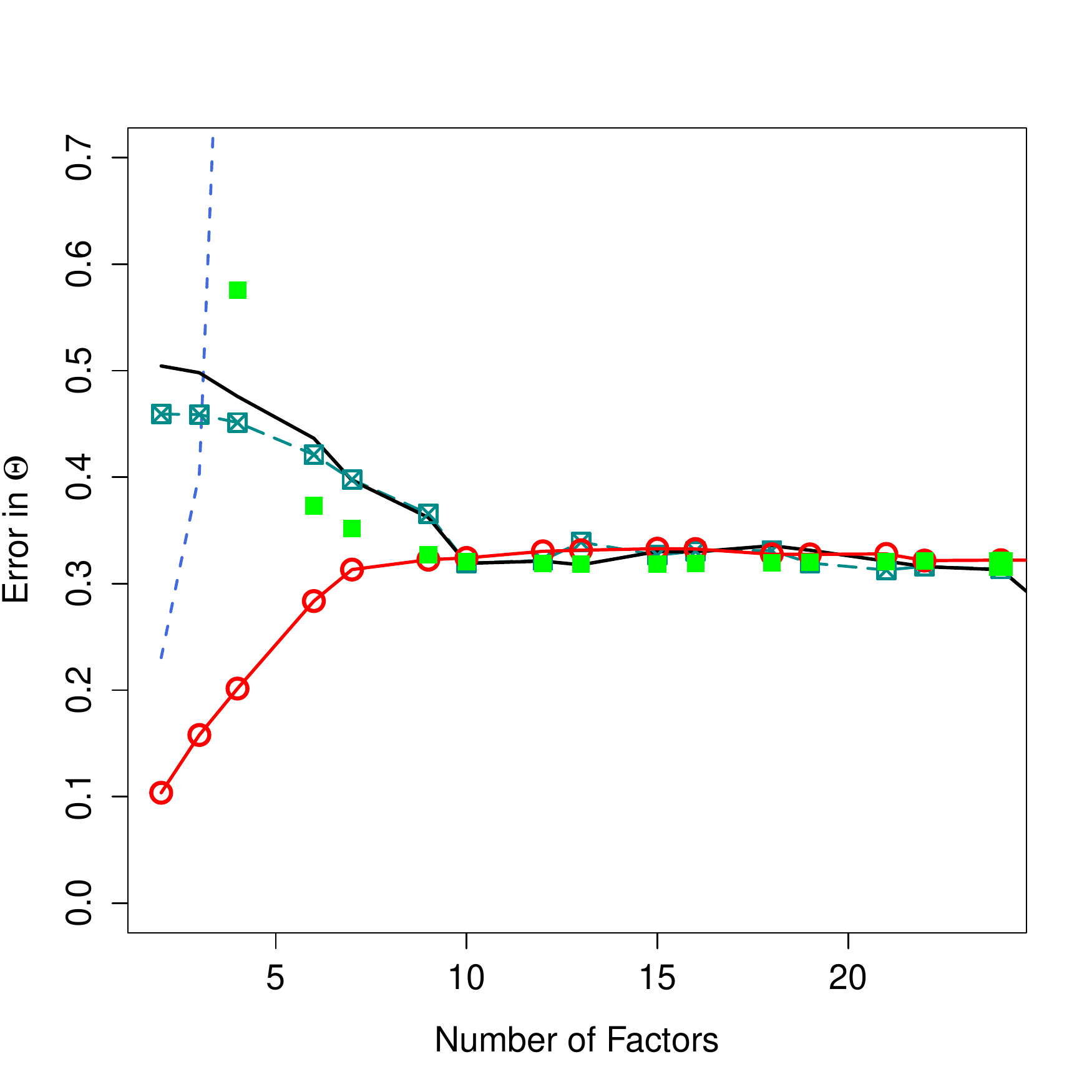}&
\includegraphics[width=0.33\textwidth, trim =  1cm 2.2cm 0cm 1.8cm, clip = true ]{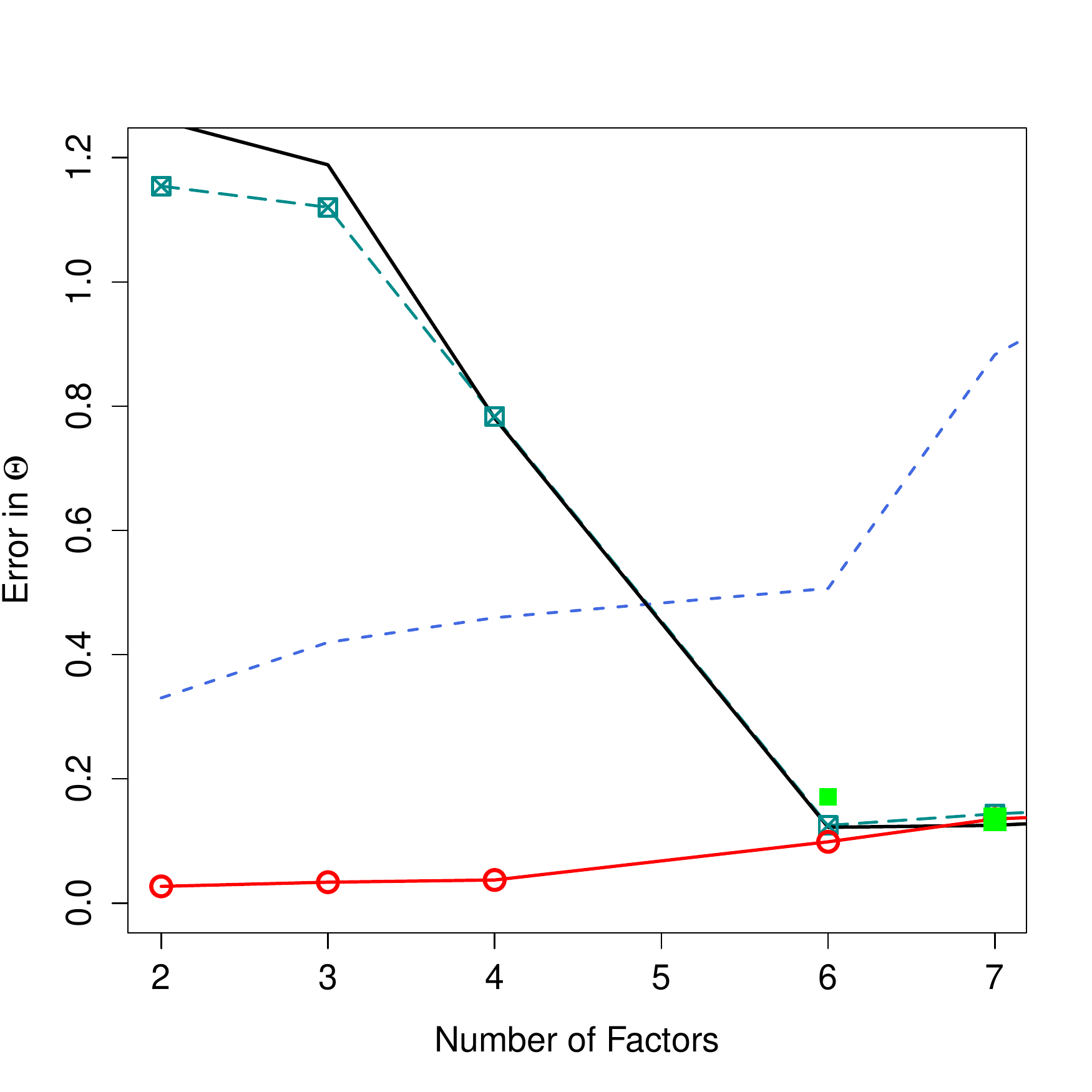}&
\includegraphics[width=0.33\textwidth, trim =  1cm 2.2cm 0cm 1.8cm, clip = true ]{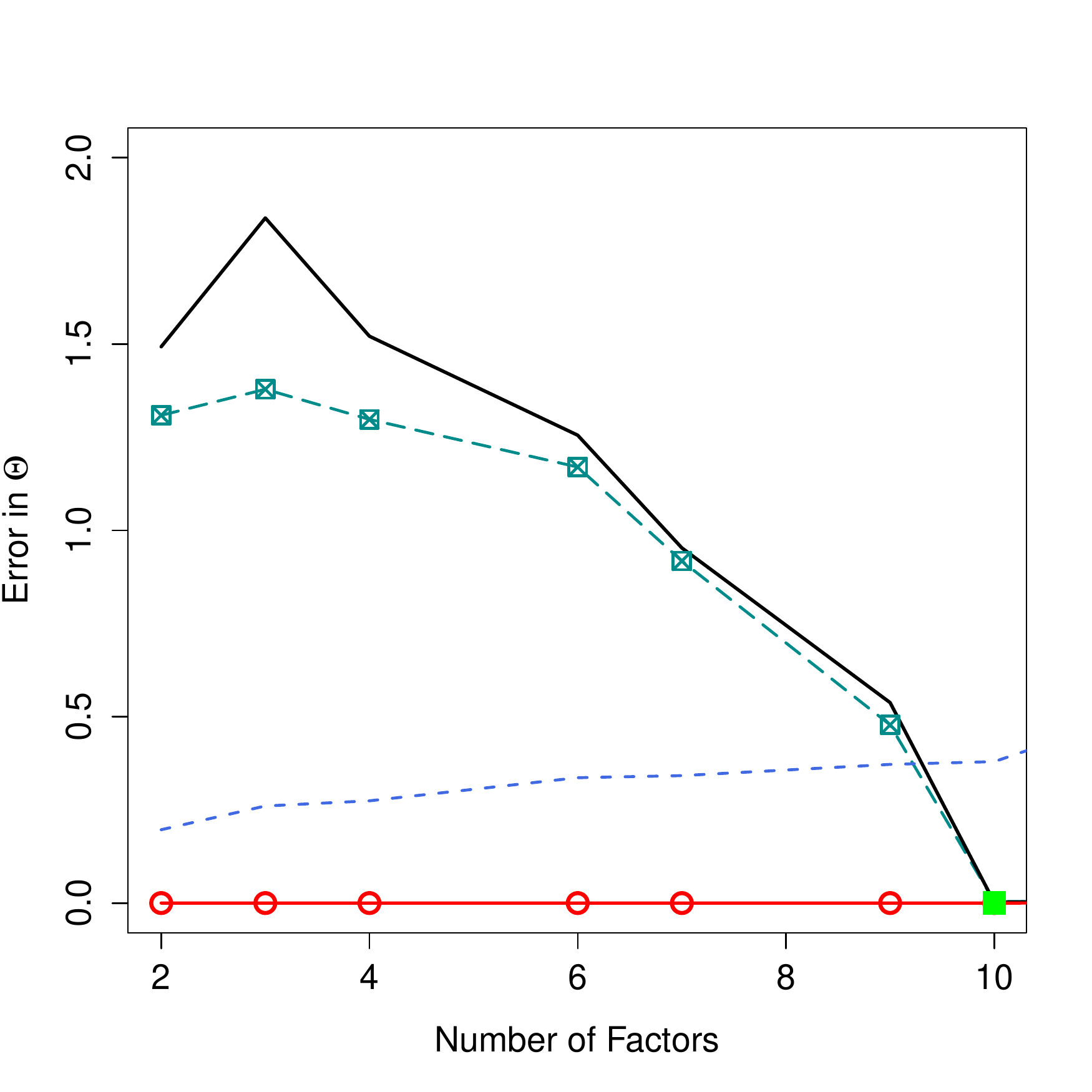} \\

& \sf \scriptsize Number of factors  &\sf \scriptsize Number of factors  &\sf \scriptsize Number of factors  \\
 \end{tabular}}
\caption{{\small{Performances of different methods, for instances of $B_1(100)$, $B_2(5/10/100)$, and $B_3(5/10/100)$.
We see that $(\textsc{CFA}_{1})$ exhibits very good performance across all instances, significantly outperforming the competing methods (the results of $(\textsc{CFA}_{2})$ were similar).}} }\label{fig:rev-types478}
\end{figure}

\begin{figure}
\centering
\resizebox{1.02\textwidth}{0.14\textheight}{\begin{tabular}{c c c c}
&\scriptsize{\textsf{bfi}}  &\scriptsize{ \textsf{Neo} }& \scriptsize{\textsf{Harman}} \\
\rotatebox{90}{ ~~~~~~\sf {\scriptsize{Variance Explained}} }&\includegraphics[width=0.3\textwidth, trim = 1cm 1.6cm 0cm 2cm, clip = true ]{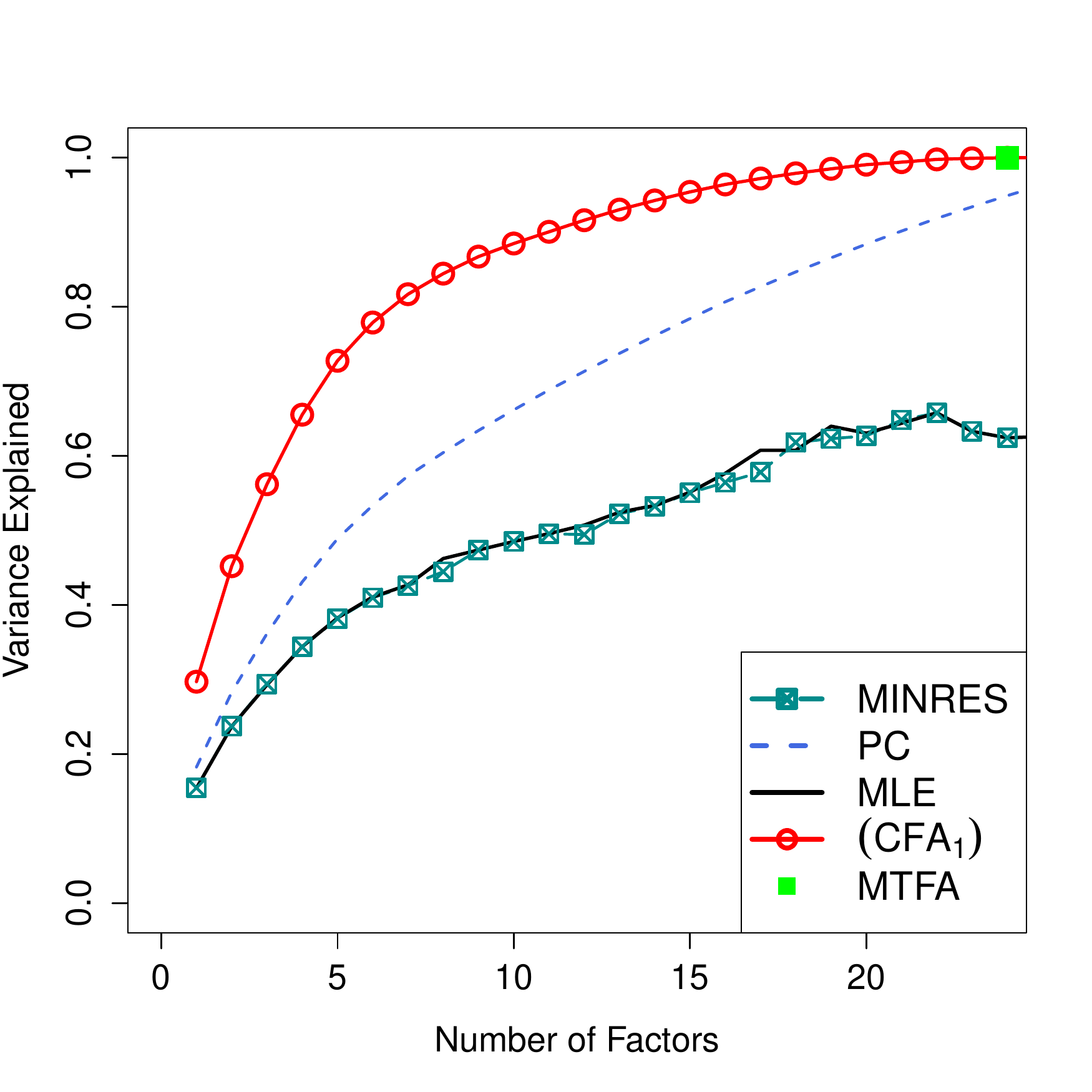}&
\includegraphics[width=0.3\textwidth,   trim = 1cm 1.6cm 0cm 2cm, clip = true ]{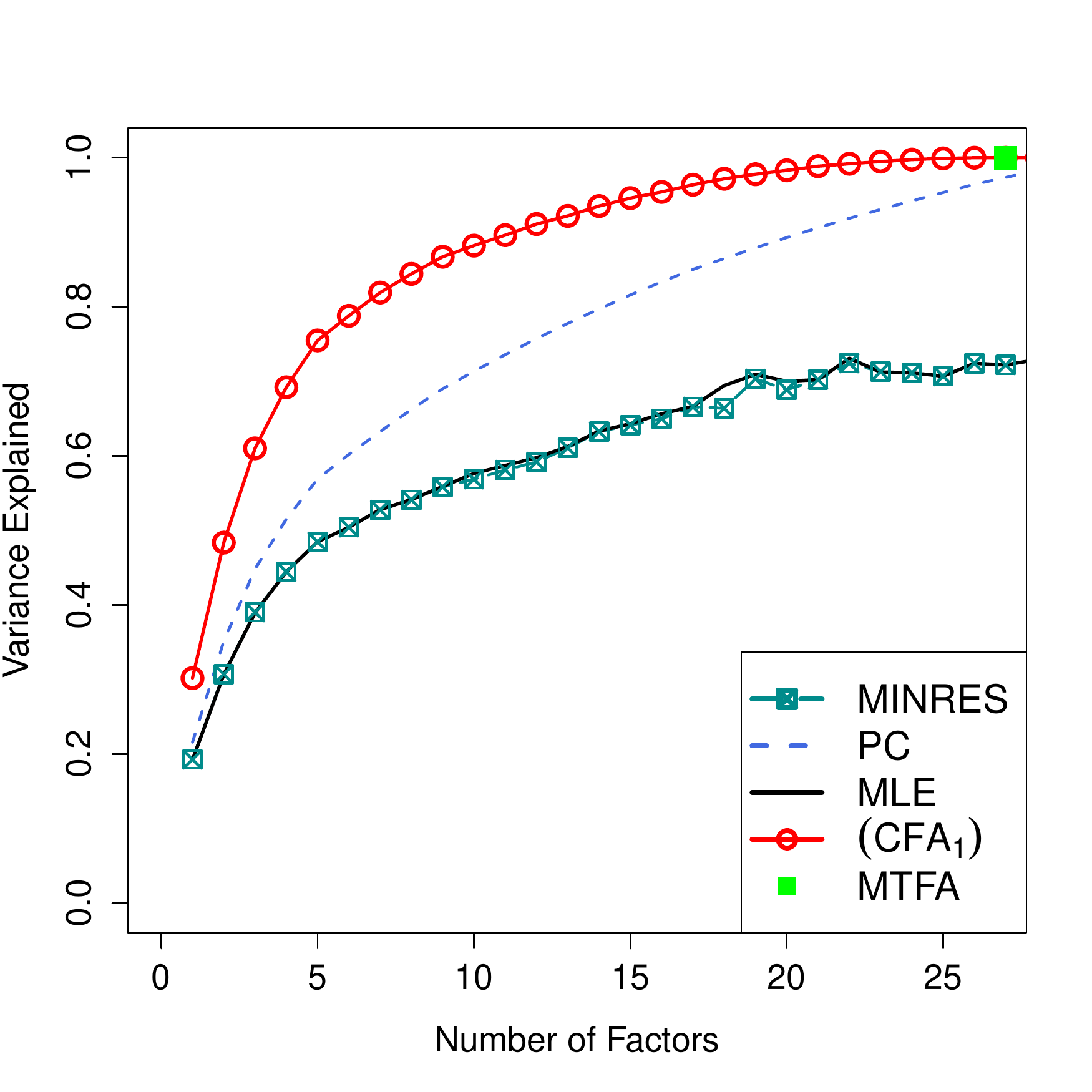}& 
\includegraphics[width=0.3\textwidth,  trim = 1cm 1.6cm 0cm 2cm, clip = true ]{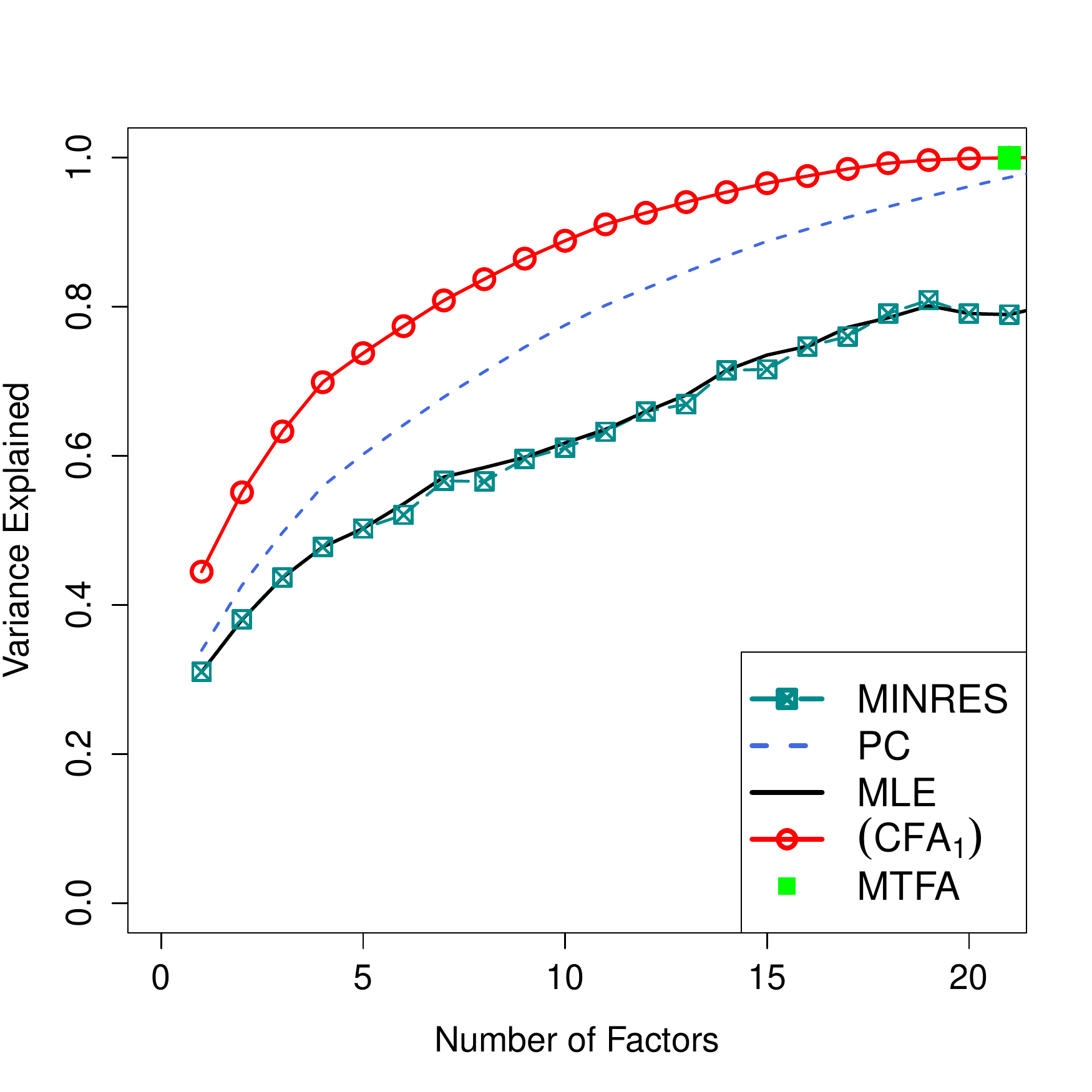} \\
& \sf \scriptsize Number of factors  &\sf \scriptsize Number of factors  &\sf \scriptsize Number of factors  \\
\end{tabular}}
\caption{{\small{Figure showing proportion of variance explained by different methods for real-data examples. 
We see that in terms of the proportion of explained variance, $(\text{CFA}_{1})$ 
delivers the largest values for different values of $r$, which is indeed desirable. $(\text{CFA}_{1})$ also shows nice flexibility in delivering different models with varying $r$, in contrast 
to \textsc{\small{MTFA}} which delivers one model with proportion of variance explained equal to one. The results of $(\text{CFA}_{2})$ were found to be similar to $(\text{CFA}_{1})$, and are thus not reported in the display.} } }\label{fig:real-one}
\end{figure}

\paragraph{Summary:} All methods of Category (B) (see Section~\ref{sec:categories}) 
used in the experimental comparisons perform worse than Category (A) in terms of  measures $\text{Error}(\B\Phi)$, $\text{Error}(\B\Theta)$ and Explained Variance for small/moderate values of $r$. They also lead to indefinite estimates of $\B\Sigma - \widehat{\B\Phi}$.
  \textsc{\small{MTFA}} performs well in estimating $\B\Phi$ but fails in estimating $\B\Theta$ mainly due to the lack in flexibility of imposing a rank constraint; in some cases 
 the trace heuristic falls short of doing a good job in approximating the rank function when compared to its non-convex counterpart $(\text{CFA}_{q})$. The estimation methods proposed herein, 
have a significant edge over existing methods in producing high quality solutions, across various performance metrics.

 \subsection{Real data examples}\label{sec:real-data-1}
This section describes the performance of different FA methods on some real-world 
benchmark real datasets popularly used in the context of FA.
We considered a few benchmark datasets used commonly in the case of factor analysis to compare our procedure with existing methods. These examples can be founded in the \texttt{R} libraries \texttt{datasets} \citep{R}, \texttt{psych} \citep{psych}, and \texttt{FactoMineR} \citep{FactoMineR} and are as follows:

\begin{itemize}
\item The {\sf bfi} data-set
has 2800 observations on 28 variables (25 personality self reported items and 3 demographic variables).
\item The {\sf neo} data-set has 1000 measurements for $p=30$ dimensions.  
\item {\sf Harman} data-setis a  correlation matrix of 24 psychological tests given to 145
     seventh and eight-grade children.
     
     \item The \textsf{geomorphology} data set is a collection of geomorphological data collected across $p=10$ variables and $75$ observations.\footnote{The \textsf{geomorphology} data set originally has $p=11$, but we remove the one categorical feature, leaving $p=10$.}

\item \textsf{JO} data set records athletic performance in Olympic events, and involves $24$ observations with $p=58$. This example is distinctive because the corresponding correlation matrix $\s$ is not full rank (having more variables than observations).
\end{itemize}
We present the results in Figure~\ref{fig:real-one} --- we also experimented with other methods: \textsc{\small{WLS}}, \textsc{\small{GLS}} the results were similar to \textsc{\small{MINRES}} 
and hence have not been shown in the figure.  For the real examples, most of the performance measures described in Section~\ref{sec:perf-meas1} do not apply; however, the notion of explained 
variance~\eqref{pro-var} does apply. We used this metric to compare the performance of different competing estimation procedures.
We observe that solutions delivered by Category (B) explain the maximum amount of residual variance for a given rank $r$, which is indeed desirable, especially in the light of its analogy with PCA on the
residual covariance matrix $\B\Sigma - \B\Phi$.

\subsection{Certificates of Optimality via Algorithm \ref{alg:bb}}\label{sec:comp}

We now turn our attention to certificates of optimality using Algorithm \ref{alg:bb}. Computational results of Algorithm \ref{alg:bb} for a variety of problem sizes across all six classes can be found in Tables \ref{tab:A1_small}, \ref{tab:A1_large}, \ref{tab:A2}, \ref{tab:B1}, \ref{tab:B2}, \ref{tab:B3}, \ref{tab:geo}, \ref{tab:harman}, \ref{tab:JO}, and \ref{tab:largeScale_1}. In general, we provide results for $p$ ranging between $10$ and $4000$. Parametric choices are outlined in depth in Table \ref{tab:A1_small}.\footnote{All computational experiments are performed in a shared cluster computing environment with highly variable demand, and therefore runtimes are not necessarily a reliable measure of problem complexity; hence, the number of nodes considered is always displayed. Further, Algorithm \ref{alg:bb} is highly parallelizable, like many branch-and-bound algorithms; however, our implementation is serial. Therefore, with improvements in code design, it is very likely that runtimes can be substantially improved beyond those shown here.}

\begin{table}
\centering
{\small \begin{tabular}{|c|c|r|r|r|r|r|r|r|}
\hline
\multicolumn{2}{|c|}{} & \multicolumn{3}{c|}{Root node} &  \multicolumn{2}{c|}{Terminal node} & \multicolumn{2}{c|}{}\\\hline
Problem size &\multirow{2}{*}{Instance} & Upper & CE & Weyl & Upper & Lower & \multirow{2}{*}{Nodes} & \multirow{2}{*}{Time (s)}\\
($R/p$) & & bound & LB & LB & bound & bound & & \\\hline
 \multirow{3}{*}{2/10} & 1 & 1.54 & 1.44 & 1.43 & 1.54 & 1.44 & 1 & 0.14 \\
 & 2 & 1.50 & 1.40 & 1.40 & 1.50 & 1.40 & 1 & 0.38 \\
 & 3 & 1.44 & 1.25 & 1.32 & 1.44 & 1.35 & 5 & 0.20\\\hline
  \multirow{3}{*}{3/10} & 1 & 0.88 & 0.49 & 0.70 & 0.88 & 0.78 & 78 & 4.08\\
 & 2 & 0.49 & 0.25 & 0.39 & 0.49 & 0.39 & 1 & 1.27\\
 & 3 & 0.52 & 0.17 & 0.42 & 0.52 & 0.42 & 14 & 0.34\\\hline
  \multirow{3}{*}{5/10} & 1 & 0.43 & $-0.90$ & 0.10 & 0.43 & 0.33 & 28163 & 19432.70\\
 & 2 & 0.06 & $-0.75$ & 0.00 & 0.06 & 0.00 & 1 & 0.20\\\
 & 3 & 0.17 & $-0.98$ & 0.00 & 0.17 & 0.07 & 3213 & 380.66\\\hline
  \multirow{3}{*}{2/20} & 1 & 3.99 & 3.91 & 3.94 & 3.99 & 3.94 & 1 & 0.19\\
 & 2 & 4.64 & 4.58 & 4.60 & 4.64 & 4.60 & 1 & 2.07\\
 & 3 & 3.34 & 3.26 & 3.28 & 3.34 & 3.28 & 1 & 0.66\\\hline
  \multirow{3}{*}{3/20} & 1 & 2.33 & 2.06 & 2.24 & 2.33 & 2.33 & 1 & 0.21\\
 & 2 & 2.55 & 2.38 & 2.49 & 2.55 & 2.49 & 1 & 6.62\\
 & 3 & 2.04 & 1.86 & 1.97 & 2.04 & 1.97 & 1 & 3.51\\\hline
  \multirow{3}{*}{5/20} & 1 & 0.61 & $-0.07$ & 0.49 & 0.61 & 0.51 & 626 & 75.98\\
 & 2 & 0.50 & $-0.01$ & 0.40 & 0.50 & 0.40 & 41 & 2.88\\
 & 3 & 0.92 & 0.18 &0.81 & 0.92 & 0.82 & 267 & 24.81 \\\hline 
\end{tabular}}
\caption{Computational results for Algorithm \ref{alg:bb} for class $C=A_1(R/p)$. All computations are performed in \texttt{julia} using SDO solvers \texttt{MOSEK} (for primal feasible solutions) and \texttt{SCS} (for dual feasible solutions within tolerance $10^{-3}$). Computation time does not include preprocessing (such as computation of $\uu$ as in \eqref{eqn:udefinition} and finding an initial incumbent feasible solution $\ph_f$ as computed via the conditional gradient algorithm in Section \ref{sec:CG-method1}). We always use default tolerance $\tol=0.1$ for algorithm termination. Parameters for branching, pruning, node selection, etc., are detailed throughout Section \ref{sec:alg}. For each problem size, we run three instances (the instance number is the random seed provided to \texttt{julia}). Upper bounds denote $z_f$, which is the best feasible solution found thus far (either at the root node or at algorithm termination). At the root node, we display two lower bounds: the lower bound arising from convex envelopes (denoted ``CE LB'') and the one arising from the Weyl bound (denoted ``Weyl LB''). Note that for lower bound at the termination node, we mean the worst bound $\max\{z^c,w^c\}$ (see Section \ref{ssec:nodeSelect}; $z^c$ is from the convex envelope approach, while $w^c$ comes from Weyl's method). ``Nodes'' indicates the number of nodes considered in the course of execution, while ``Time (s)'' denotes the runtime (in seconds). We set $r^*$, the rank used within Algorithm \ref{alg:bb}, to $r^*=R-1$, where $R$ is the generative rank displayed. All results displayed to two decimals. Computations run on high-demand, shared cluster computing environment with variable architectures. Runtime is capped at 400000s (approximately 5 days), and any instance which is still running at that time is marked with an asterisk next to its runtime.}
\label{tab:A1_small}
\end{table}

\begin{table}[h!]
\centering
{\small \begin{tabular}{|c|c|r|r|r|r|r|r|r|}
\hline
\multicolumn{2}{|c|}{} & \multicolumn{3}{c|}{Root node} &  \multicolumn{2}{c|}{Terminal node} & \multicolumn{2}{c|}{}\\\hline
Problem size &\multirow{2}{*}{Instance} & Upper & CE & Weyl & Upper & Lower & \multirow{2}{*}{Nodes} & \multirow{2}{*}{Time (s)}\\
($R/p$) & & bound & LB & LB & bound & bound & & \\\hline
 \multirow{3}{*}{3/50} & 1 & 7.16 & 7.09 & 7.14 & 7.16 & 7.14 & 1 & 18.66\\
 & 2 & 6.74 & 6.66 & 6.71 & 6.74 & 6.71 & 1 & 13.19 \\
 & 3 & 6.78 & 6.74 & 6.77 & 6.78 & 6.77 & 1 & 33.28\\\hline
 \multirow{3}{*}{5/50} & 1 & 3.04 & 2.86 & 3.01 & 3.04 & 3.01 & 1 & 19.13\\
 & 2 & 3.32 & 3.12 & 3.29 & 3.32 & 3.29 & 1 & 7.33\\
 & 3 & 3.70 & 3.56 & 3.67 & 3.70 & 3.67 & 1 & 52.53\\\hline
 \multirow{3}{*}{10/50} & 1 & 0.88 & $-0.15$ & 0.81 & 0.88 & 0.81 & 1 & 6.22\\
 & 2 & 1.20 & 0.08 & 1.12 & 1.20 & 1.12 & 1 & 17.41\\
 & 3 & 1.14 & $-0.02$ & 1.07 & 1.14 & 1.07 & 1 & 8.88\\\hline
 \multirow{3}{*}{3/100} & 1 & 13.72 & 13.69 & 13.72 & 13.72 & 13.72 & 1 & 125.17\\
 &2 & 14.07 & 14.03 & 14.06 & 14.07 & 14.06 & 1 & 63.25\\
 & 3 & 14.67 & 14.64 & 14.66 & 14.67 & 14.66 & 1 & 28.36\\\hline
  \multirow{3}{*}{5/100} & 1 & 8.54 & 8.46 & 8.53 & 8.54 & 8.53 & 1 & 117.71\\
  & 2 & 7.16 & 7.04 & 7.14 & 7.16 & 7.14 & 1 & 70.09\\
  & 3 & 7.07 & 6.94 & 7.06 & 7.07 & 7.06 & 1 & 78.60\\\hline
  \multirow{3}{*}{10/100} & 1 & 2.62 & 2.10 & 2.58 & 2.62 & 2.58 & 1 & 36.91\\
  &2 & 2.81 & 2.38 & 2.78 & 2.81 & 2.78 & 1 & 9.25\\
  &3 & 3.16 & 2.64 & 3.12 & 3.16 & 3.12 & 1 & 49.04\\\hline
\end{tabular}}
\caption{Computational results for larger examples from  class $C=A_1(R/p)$. Same parameters as in Table \ref{tab:A1_small} but with larger problem instances. Again we set $r^*=R-1$ here.}
\label{tab:A1_large}
\end{table}

\begin{table}[h!]
\centering
{\small \begin{tabular}{|c|c|c|r|r|r|r|r|r|r|}
\hline
\multicolumn{3}{|c|}{} & \multicolumn{3}{c|}{Root node} &  \multicolumn{2}{c|}{Terminal node} & \multicolumn{2}{c|}{}\\\hline
Problem size &\multirow{2}{*}{Instance} & $r^*$ & Upper & CE & Weyl & Upper & Lower & \multirow{2}{*}{Nodes} & \multirow{2}{*}{Time (s)}\\
($p$) & & used & bound & LB & LB & bound & bound & & \\\hline
  \multirow{3}{*}{10} & 1 & \multirow{3}{*}{2} & 0.98 & $-0.29$ & 0.26 & 0.96 & 0.86 & 1955 & 91.99\\
  &2 && 0.85 & $-0.28$ & 0.29 & 0.85 & 0.75 & 736 & 37.70\\
  & 3 && 0.89 & $-0.29$ & 0.23 & 0.89 & 0.79 & 1749 & 53.30\\\hline
    \multirow{3}{*}{10} & 1 & \multirow{3}{*}{3} & 0.53 & $-0.43$ & 0.13 & 0.53 & 0.43 & 3822 & 409.68\\
  &2 && 0.58 & $-0.45$ & 0.15 & 0.58 & 0.48 & 8813 & 1623.38\\
  & 3 && 0.64 & $-0.44$ & 0.10 & 0.59 & 0.49 & 13061 & 1671.61\\\hline
 \multirow{3}{*}{20} & 1 & \multirow{3}{*}{2} & 5.13 & 4.14 & 2.13 & 5.13 & 5.03 & 29724 & 36857.46\\
 & 2 & &4.68 & 3.44 & 2.00 & 4.68 & 4.58 & 28707 & 35441.37\\
 & 3 & &4.83 & 3.67 & 2.00 & 4.83 & 4.73 & 17992 & 11059.55\\\hline
  \multirow{3}{*}{20} & 1 & \multirow{3}{*}{3}& 4.26 & 2.68 & 1.55 & 4.26 & 4.05 & 86687 & 400002.3*\\
  & 2 && 3.88 & 2.00 & 1.46 & 3.88 & 3.71 & 132721 & 400003.7*\\
  & 3 && 4.03 & 2.25 & 1.48 & 4.03 & 3.84 & 100042 & 400002.1*\\\hline
  \multirow{3}{*}{50} & 1 & \multirow{3}{*}{3} & 17.83 & 16.54 & 11.49 & 17.83 & 17.44 & 35228 & 400002.5*\\
  & 2 && 18.57 & 17.34 & 12.19 & 18.57 & 18.20 & 34006 & 400007.1*\\
  & 3 && 18.21 & 17.06 & 11.96 & 18.21 & 17.86 & 35874 & 400004.9*\\\hline
  \multirow{3}{*}{100} & 1 &\multirow{3}{*}{3}& 38.65 & 37.50 & 33.08 & 38.65 & 38.18 & 20282 & 400015.2*\\
  &2 && 38.84 & 37.66 & 33.25 & 38.84 & 38.37 & 17306 & 400013.7*\\
  & 3 && 39.02 & 37.86 & 33.67 & 39.02 & 38.58 & 19889 & 400004.9*\\\hline
  \multirow{3}{*}{100} & 1 & \multirow{3}{*}{5} & 30.98 & 29.05 & 25.69 & 30.98 & 29.91 & 16653 & 400005.6*\\
  &2 && 31.37 & 29.39 & 26.07 & 31.37 & 30.32 & 26442 & 400021.4*\\
  & 3 && 31.20 & 29.24 & 26.13 & 31.20 & 30.15 & 19445 & 400043.4*\\\hline
\end{tabular}}
\caption{Computational results for class $C=A_2(p)$. All parameters as per Table \ref{tab:A1_small}. Here we show the behavior across a variety of choices of the parameter $r^*$. Recall that the class $A_2$ is generated by a common variance component with high rank, and exponentially decaying eigenvalues.}
\label{tab:A2}
\end{table}

\begin{table}[h!]
\centering
{\small \begin{tabular}{|c|c|c|r|r|r|r|r|r|r|}
\hline
\multicolumn{3}{|c|}{} & \multicolumn{3}{c|}{Root node} &  \multicolumn{2}{c|}{Terminal node} & \multicolumn{2}{c|}{}\\\hline
Problem size &\multirow{2}{*}{Instance} & $r^*$ & Upper & CE & Weyl & Upper & Lower & \multirow{2}{*}{Nodes} & \multirow{2}{*}{Time (s)}\\
($r/p$) & & chosen & bound & LB & LB & bound & bound & & \\\hline
 \multirow{3}{*}{4/10} & 1 & 1 & 0.25 & 0.06 & 0.11 & 0.25 & 0.20 & 24 & 0.66\\
 &2&1 & 0.17 & $-0.04$ & 0.04 & 0.17 & 0.12 & 17 & 0.50\\
 & 3 & 1 & 0.16 & $-0.08$ & 0.00 & 0.16 & 0.11 & 26 & 0.16\\\hline
  \multirow{3}{*}{6/10} & 1 & 2 &0.09 & $-0.27$ & 0.00 & 0.04 & 0.09 & 77 & 0.62\\
  & 2 & 2 & 0.07 & $-0.36$ & 0.00 & 0.07 & 0.02 & 118 & 0.75\\
  & 3 & 2 & 0.07 & $-0.36$ & 0.00 & 0.07 & 0.02 & 125 & 3.79\\\hline
  \multirow{3}{*}{10/10} & 1 & 3 & 0.10 & $-0.46$ & 0.00 & 0.10 & 0.05 & 2162 & 166.44\\
  & 2 & 3 &0.10 & $-0.62$ & 0.00 & 0.10 & 0.05 & 3282 & 406.82\\
  & 3 &4 & 0.06 & $-0.49$ & 0.00 & 0.04 & 0.00 & 122 & 11.69\\\hline
  \multirow{3}{*}{4/20} & 1 & 1 & 0.26 & 0.10 & 0.13 & 0.26 & 0.21 & 21 & 0.17\\
  &2 & 1 & 0.19 & 0.01 & 0.07 & 0.19 & 0.14 & 15 & 0.40\\
  &3 & 1 & 0.19 & $-0.04$ & 0.04 & 0.19 & 0.14 & 26 & 0.59\\\hline
  \multirow{3}{*}{6/20} & 1 &  2 & 0.10 & $-0.24$ & 0.01 & 0.10 & 0.05 & 73 & 1.47\\
  &2 & 2 & 0.08 & $-0.32$ & 0.00 & 0.08 & 0.03 & 103 & 2.82 \\
  & 3 & 2 & 0.08 & $-0.30$ & 0.00 & 0.08 & 0.03 & 96 & 1.64\\\hline
  \multirow{3}{*}{10/20} & 1 & 3 & 0.13 & $-0.44$ & 0.00 & 0.13 & 0.08 & 2602 & 113.82\\
  &2 & 3 & 0.12 & $-0.59$ & 0.00 & 0.12 & 0.07 & 3971 & 188.93\\
  &3 & 3 & 0.15 & $-0.45$ & 0.01 & 0.15 & 0.10 & 3910 & 531.04\\\hline
  \multirow{3}{*}{6/50} & 1 & 2 & 0.11 & $-0.15$ & 0.04 & 0.11 &0.06 & 50 & 0.66\\
  &2 & 2 & 0.10 & $-0.18$ & 0.01  & 0.10 & 0.05 & 61 & 1.95\\
  &3 & 2 & 0.10 & $-0.20$ & 0.01 & 0.10 & 0.05 & 64 & 0.58 \\\hline
  \multirow{3}{*}{10/50} & 1 & 3 & 0.17 & $-0.35$ & 0.03 & 0.17 & 0.12 & 2283 & 333.62\\
  &2 & 3 & 0.17 & $-0.46$ & 0.02 & 0.17 & 0.12 & 4104 & 544.71\\
  &3 & 3 & 0.19 & $-0.36$ & 0.03 & 0.19 & 0.14 & 4362 & 620.61\\\hline
  \multirow{3}{*}{6/100} & 1 & 2 & 0.12 & $-0.07$ & 0.05 & 0.12 & 0.05 & 1 & 0.71 \\
  &2 & 2 & 0.11 & $-0.09$ & 0.04 & 0.11 & 0.07 & 12 & 0.82\\
  &3 & 2 & 0.11 & $-0.12$ & 0.03 & 0.11 & 0.06 & 30 & 0.46\\\hline
  \multirow{3}{*}{10/100} & 1 & 3 & 0.19 & $-0.22$ & 0.06 & 0.19 & 0.14 & 1535 & 237.19\\
  &2 & 3 & 0.19 & $-0.28$ & 0.05 & 0.19 & 0.14 & 2625 & 383.27\\
  &3 & 3 & 0.20 & $-0.26$ & 0.05 & 0.20 & 0.15 & 2829 & 355.08\\\hline
\end{tabular}}
\caption{Computational results for class $C=B_1(R/p)$. We choose $r^*$ during computation as the largest $r$ such that the computational gradient method of Section \ref{sec:CG-method1} is strictly positive (up to additive tolerance $0.05$; we use a smaller value here because the objective values are smaller across this class). For this class, examples can be preprocessed because $\s\sim B_1(R/p)$ has a block of size $R\times R$ in the upper left, with all other entries set to zero except the diagonal. Hence, it suffices to perform factor analysis with the truncated matrix $\s_{1:R,1:R}$.}
\label{tab:B1}
\end{table}

\begin{table}[h!]
\centering
{\small \begin{tabular}{|c|c|c|r|r|r|r|r|r|r|}
\hline
\multicolumn{3}{|c|}{} & \multicolumn{3}{c|}{Root node} &  \multicolumn{2}{c|}{Terminal node} & \multicolumn{2}{c|}{}\\\hline
Problem size &\multirow{2}{*}{Instance} & $r^*$ & Upper & CE & Weyl & Upper & Lower & \multirow{2}{*}{Nodes} & \multirow{2}{*}{Time (s)}\\
($r/R/p$) & & chosen & bound & LB & LB & bound & bound & & \\\hline
 \multirow{3}{*}{2/5/20} & 1 & 4 &1.15 & 0.4 & 1.03 & 1.15 & 1.05 & 517 & 146.72\\
 &2 &4& 1.08 & 0.50 & 0.97 & 1.08 & 0.98 & 366 & 60.56\\
 & 3 & 4 & 0.89 & 0.29 & 0.79 & 0.89 & 0.79 & 1 & 0.56\\\hline
  \multirow{3}{*}{3/5/20} & 1 & 4 & 1.24 & 0.46 & 1.11 & 1.24 & 1.14 & 933 & 232.74\\
  &2 & 4 & 0.99 & 0.55 & 0.92 & 0.99 & 0.92 & 1 & 0.87\\
  &3 & 4 & 0.72 & 0.08 & 0.62 & 0.72 & 0.62 & 123 & 20.16\\\hline
  \multirow{3}{*}{2/5/100} & 1 & 4 & 7.54 & 7.46 & 7.53 & 7.54 & 7.53 & 1 & 84.20\\
  &2 & 4 & 6.45 & 6.34 & 6.44 & 6.45 & 6.44 & 1 & 119.62\\
  &3 & 4 & 7.20 & 7.09 & 7.18 & 7.20 & 7.18 & 1 & 97.17\\\hline
  \multirow{3}{*}{3/5/100} & 1 & 4 & 8.40 & 8.32 & 8.39 & 8.40 & 8.39 & 1 & 135.73\\
  &2 & 4 & 5.79 & 5.70 & 5.78 & 5.79 & 5.78 & 1 & 38.19\\
  & 3 & 4 & 7.58 & 7.47 & 7.57 & 7.58 & 7.57 & 1 & 73.51\\\hline
  \multirow{3}{*}{2/10/100} & 1 & 9 & 3.56 & 3.11 & 3.53 & 3.56 & 3.53 & 1 & 71.10\\
  &2 & 9 & 3.00 & 2.61 & 2.97 & 3.00 & 2.97& 1 & 85.47\\
  &3 & 9 & 2.58 & 2.14 & 2.55 & 2.55 &2.58 & 1 & 22.99\\\hline
  \multirow{3}{*}{3/10/100} & 1 & 9 & 3.23 & 2.80 & 3.20 & 3.23 & 3.20 & 1 & 50.63\\
  &2 & 9 & 2.94 & 2.56 & 2.91 & 2.94 & 2.91 & 1 & 51.99\\
  &3 & 9 & 2.57 & 2.12 & 2.54 & 2.57 & 2.54 & 1 & 72.45\\\hline
\end{tabular}}
\caption{Computational results for class $B_2(r/R/p)$. Parameters as set in Table \ref{tab:A1_small}. Here $r^*$ is chosen during computation as largest $k$ such that the computational gradient method of Section \ref{sec:CG-method1} is strictly positive (up to 0.1). Here $r^*=R-1$ ends up being an appropriate choice.}
\label{tab:B2}
\end{table}

\begin{table}
\centering
{\small \begin{tabular}{|c|c|c|r|r|r|r|r|r|r|}
\hline
\multicolumn{3}{|c|}{} & \multicolumn{3}{c|}{Root node} &  \multicolumn{2}{c|}{Terminal node} & \multicolumn{2}{c|}{}\\\hline
Problem size &\multirow{2}{*}{Instance} & $r^*$ & Upper & CE & Weyl & Upper & Lower & \multirow{2}{*}{Nodes} & \multirow{2}{*}{Time (s)}\\
($r/R/p$) & & chosen & bound & LB & LB & bound & bound & & \\\hline
 \multirow{3}{*}{2/5/20} & 1 & 3 & 0.30 & $-0.30 $ & 0.13 & 0.30 & 0.20 & 777 & 219.75\\
 &2 & 3 & 0.22 & $-0.11$ & 0.12 & 0.22 & 0.12 & 1 & 2.61\\
 &3 & 3 & 0.19 & $-0.17$ & 0.10 & 0.19 & 0.10 & 1 & 1.57 \\\hline
  \multirow{3}{*}{3/5/20} & 1 & 2 & 1.00 & 0.62 & 0.82 & 1.00 & 0.90 & 217 & 48.43\\
  &2 & 2 & 0.95 & 0.47 & 0.74 & 0.95 & 0.85 & 305 & 26.14\\
  &3 & 2 & 0.45 & 0.06 & 0.30 & 0.45 & 0.36 & 98 & 12.31\\\hline
  \multirow{3}{*}{2/5/100} & 1 &  3 & 0.55 & 0.24 & 0.40 & 0.55 & 0.45 & 7099 & 47651.57 \\
  &2 & 3 & 0.28 & $-0.07$ & 0.17 & 0.28 & 0.18 & 266 & 1431.88\\
  &3 & 3 & 0.19 & $-0.07$ & 0.11 & 0.19 & 0.11 & 1 & 574.61\\\hline
  \multirow{3}{*}{3/5/100} & 1 & 2 & 1.03 & 0.70 & 0.81 & 1.03 & 0.93 & 4906 & 34514.69\\
  &2 & 2 & 0.90 & 0.66 & 0.78 & 0.90 & 0.80 & 95 & 596.96\\
  &3 & 2 & 0.55 & 0.39 & 0.47 & 0.55 & 0.47 & 1 & 1787.14\\\hline
  \multirow{3}{*}{2/10/100} & 1 & 8 & 0.12 & $-0.58 $ & 0.05 & 0.12 & 0.05 & 1 & 77.38\\
  &2 & 8 & 0.11 & $-0.53$ & 0.05 & 0.11 & 0.05 & 1 & 41.24\\
  &3 & 8 & 0.10 & $-0.43$ & 0.05 & 0.10 & 0.05 & 1 & 110.17\\\hline
 \multirow{3}{*}{3/10/100} & 1 & 6 & 0.33 & $-0.73$ & 0.20 & 0.33 & 0.22 & 27770 & 400000.0*\\
 & 2 & 7 & 0.25 & $-0.43$ & 0.17 & 0.25 & 0.17 & 1 & 128.63\\
 & 3 & 7 & 0.31 & $-0.28$ & 0.24 & 0.31 & 0.24 & 1 & 213.72\\\hline
\end{tabular}}
\caption{Computational results for class $B_3(r/R/p)$. Parameters as set in Table \ref{tab:A1_small}. As in Tables \ref{tab:B1} and \ref{tab:B2}, here $r^*$ is chosen during computation as largest $r$ such that the computational gradient method of Section \ref{sec:CG-method1} produces a feasible solution with strictly positive objective (up to 0.1).}
\label{tab:B3}
\end{table}

\begin{table}
\centering
{\small \begin{tabular}{|c|r|r|r|r|r|r|r|}
\hline
\multicolumn{1}{|c|}{} & \multicolumn{3}{c|}{Root node} &  \multicolumn{2}{c|}{Terminal node} & \multicolumn{2}{c|}{}\\\hline
  $r^*$ & Upper & CE & Weyl & Upper & Lower & \multirow{2}{*}{Nodes} & \multirow{2}{*}{Time (s)}\\
 used & bound & LB & LB & bound & bound & & \\\hline
 1 & 4.06 & 3.78 & 2.53 & 4.06 & 3.96 & 44 & 0.64\\
 2 & 2.64 & 2.04 & 1.42 & 2.64 & 2.54 & 1885 & 142.06\\
 3 & 1.56 & 0.62 & 0.61 & 1.56 & 1.46 & 11056 & 2458.23\\
 4 & 0.88 & $-0.33$ & 0.28 & 0.88 & 0.78 & 66877 & 60612.61\\
 5 & 0.36 & $-0.88$ & 0.0 & 0.36 & 0.25 & 155759 & 400012.4*\\\hline
\end{tabular}}
\caption{Computational results for the \textsf{geomorphology} example ($p=10$). Parameters as set in Table \ref{tab:A1_small}. Here we display results for the choices of $r^*\in\{1,2,3,4,5\}$ (for $r^*> 5$, the upper bound is below the numerical tolerance, so we do not include those).}
\label{tab:geo}
\end{table}

\begin{table}
\centering
{\small \begin{tabular}{|c|r|r|r|r|r|r|r|}
\hline
\multicolumn{1}{|c|}{} & \multicolumn{3}{c|}{Root node} &  \multicolumn{2}{c|}{Terminal node} & \multicolumn{2}{c|}{}\\\hline
  $r^*$ & Upper & CE & Weyl & Upper & Lower & \multirow{2}{*}{Nodes} & \multirow{2}{*}{Time (s)}\\
 used & bound & LB & LB & bound & bound & & \\\hline
 1 & 9.88 & 9.64 & 5.89 & 9.88 & 9.78 & 158 & 30.13\\\hline
 2 & 7.98 & 7.54 & 4.22 & 7.98 & 7.88 & 31710 & 49837.17\\\hline
 3 & 6.53 & 5.85 & 3.01 & 6.53 & 6.35 & 81935 & 400003.8*\\\hline
\end{tabular}}
\caption{Computational results for the \textsf{Harman} example ($p=24$). Parameters as set in Table \ref{tab:A1_small}. Here we display results for the choices of $r^*\in\{1,2,3\}$.}
\label{tab:harman}
\end{table}

\begin{table}
\centering
\subfloat{\small \begin{tabular}{|c|r|r|r|r|r|r|r|}
\hline
  \multirow{2}{*}{$r^*$} & Optimal & Time \\
   & value & (s)\\\hline
 1 & 51.85 & 5.59 \\\hline
 2 & 46.30 & 6.86 \\\hline
 3 & 41.29 & 6.08\\\hline
 4 & 36.54 & 6.08 \\\hline
 5 & 32.36 & 1.21 \\\hline
 6 & 28.72 & 7.83 \\\hline
 7 & 25.39 & 1.13 \\\hline
 8 & 22.10 & 3.87\\\hline
 9 & 19.20 & 4.97 \\\hline
 10 & 16.63 & 4.17 \\\hline
 11 & 14.30 & 7.87 \\\hline
\end{tabular}}
\quad\quad\quad\subfloat{\small \begin{tabular}{|c|r|r|r|r|r|r|r|}
\hline
  \multirow{2}{*}{$r^*$} & Optimal & Time \\
   & value & (s)\\\hline
 12 & 12.52 & 3.37\\\hline
 13 & 10.81 & 2.97\\\hline
 14 & 9.25 & 1.07\\\hline
 15 & 7.78 & 6.21 \\\hline
 16 & 6.44 & 5.40 \\\hline
 17 & 5.15 & 0.91\\\hline
 18 & 3.98 & 8.50\\\hline
 19 & 2.84 & 5.33\\\hline
 20 & 1.87 & 2.15\\\hline
 21 & 1.07 & 2.13\\\hline
 22 & 0.48 & 1.65\\\hline
\end{tabular}}
\caption{Computational results for the example \textsf{JO} ($p=58$). Parameters as set in Table \ref{tab:A1_small}. Here we display results for the choices of $r^*\in\{1,2,\ldots,22\}$ (for $r^*> 22$, the upper bound is below the numerical tolerance, so we do not include those). For all choices of $r^*$, the optimal solution is found and certified to be optimal at the root node, and therefore we do not display any information other than the optimal values.}
\label{tab:JO}
\end{table}


\begin{table}
\centering
\subfloat{\small \begin{tabular}{|c|c|r|r|}
\hline
\multirow{2}{*}{Example} & \multirow{2}{*}{$r^*$}  & Upper & Lower\\
& & bound & bound  \\\hline
\multirow{5}{*}{$A_1(100/1000)$} & 10 & 460.35 &457.46\\
& 30 & 322.35 &313.62\\
& 50 & 198.70 & 197.13\\
& 70 & 103.72 & 102.78\\
& 90 & 28.65 & 28.34\\\hline
\multirow{5}{*}{$A_2(1000)$} & 10 & 184.22 & 183.18\\
& 20 & 61.24 &60.34\\
& 30 & 20.25 &19.50\\
& 40 & 6.63 & 6.02\\
& 50 & 2.17 & 1.71\\\hline
\multirow{4}{*}{$B_2(10/90/1000)$} & 20 &379.02&377.23\\
& 40 & 236.10 & 234.83\\
& 60 & 122.08& 121.33\\
& 80 & 34.49 & 34.24\\\hline
\multirow{4}{*}{$B_2(20/90/1000)$} & 20 & 378.37 & 376.64\\
& 40 & 235.16 & 233.94\\\
& 60 & 121.69&120.96\\
& 80 & 34.33&34.09\\\hline
\multirow{4}{*}{$B_3(10/50/1000)$} & 10 & 384.40 & 384.04\\
& 20 & 233.25 &232.97\\
& 30 & 105.68 & 105.49\\
& 40 & 0.69 &0.60\\\hline
\multirow{4}{*}{$B_3(20/150/1000)$} & 20 & 426.31 & 422.13\\
& 45 & 285.56 &282.33\\
& 70 & 174.00 &171.70\\
& 95 & 86.54 &85.15\\
& 120 & 20.30 &19.80\\\hline
\end{tabular}}
\quad\subfloat{\small \begin{tabular}{|c|c|r|r|}
\hline
\multirow{2}{*}{Example} & \multirow{2}{*}{$r^*$}  & Upper & Lower\\
& & bound & bound  \\\hline
\multirow{6}{*}{$A_1(360/4000)$} & 100 &1334.40 & 1327.60\\
& 150 &991.64 &986.15\\
& 200 &693.02 &688.87\\
& 250 & 434.64 &431.80\\
& 300 & 213.29 &211.75\\
& 350 & 30.75 &30.50\\\hline
\multirow{7}{*}{$A_2(4000)$} & 10 & 733.41 & 733.09\\
& 20 & 244.13 & 243.85\\
& 30 & 79.94 &79.68\\
& 40 & 26.17& 25.96\\
& 50 & 8.57 & 8.39\\
& 60 & 2.80 & 2.66\\
& 70 &0.90 & 0.79\\\hline
\multirow{6}{*}{$B_2(80/360/4000)$} & 100 & 1360.1 & 1354.04\\
& 150 & 1008.90 & 1004.04 \\
& 200 & 703.78 & 700.21\\
& 250 & 440.56& 438.16\\
& 300 & 215.87 & 214.62\\
& 350 & 31.02 & 30.91\\\hline
\multirow{4}{*}{$B_3(120/360/4000)$} & 100 & 1081.20 &1078.74\\
& 150 & 630.25 & 628.54\\
& 200 & 253.50 & 252.50\\
& 240 & 7.89 & 7.46\\\hline
\end{tabular}}
\caption{Computational results across several classes for larger scale instances with $p\in\{1000,4000\}$. Here for a given example, the random seed is set as $1$. Results are displayed across a variety of choices of rank $r^*$. Class $B_1$ is not shown because of the reduction noted in Table \ref{tab:B1} (which reduces this class to a much smaller-dimensional class, and hence it is not truly large scale). The ``Upper bound'' denotes the upper bound found by the conditional gradient method, while ``Lower bound'' denotes the Weyl bound at the root node (no convex envelope bounds via Algorithm \ref{alg:bb} are shown here because of the large nature of the SDO-based convex envelope lower bounds which are prohibitive to solve for problems of this size). These instances are completed in \texttt{MATLAB} because of its superior support (over \texttt{julia}) for solving large-scale quadratic optimization problems (via \texttt{quadprog}), which is the only computation necessary for computing Weyl bounds.}
\label{tab:largeScale_1}
\end{table}

\subsubsection{Root Node Gap}

Let us first consider the gap at the root node. In classes $A_1$, $B_1$, $B_2$, and $B_3$, we see that the Weyl bound at the root node often provides a better bound than the one given by using convex envelopes. Indeed, the bound provided by Weyl's method can in many instances certify optimality (up to numerical tolerance) at the \emph{root node}. For example, this is the case in many instances of classes $A_1$ and half of the instances in $B_3$. Given that Weyl's method is computationally inexpensive (only requiring the computation of $p$ convex quadratic optimization problems and an eigenvalue decomposition), this suggests that Weyl's inequality as used within the context of factor analysis is particularly fruitful.

In contrast, in class $A_2$ and the real examples, we see that the convex envelope bound tends to perform better. Because of the structure of Weyl's inequality, Weyl's method is well-suited for correlation matrices $\s$ with very quickly decaying eigenvalues. Examples in these two classes do not have such a spectrum, and indeed Weyl's method does not provide the best root node bound.\footnote{However, it is worth remarking that Weyl's method still provides lower bounds on the rank of solutions to the noiseless factor analysis problem. Hence, even in settings where Weyl's method is not necessarily well-suited for proving optimality for the noisy factor analysis problem, it can still be applied successfully to lower bound rank for noiseless factor analysis.} Because neither Weyl's method nor the convex envelope bound strictly dominate one another at the root node across all examples, our approach incorporating both can leverage the advantages of each.

Observe that the root node gap (either in terms of the absolute difference between the initial feasible solution found and the better of the convex envelope bound and the Weyl bound) tends to be smaller when $r^*$ is much smaller than $p$. This suggests that the approach we take is well-suited to certify optimality of particularly low-rank decompositions in noisy factor analysis settings. We see that this phenomenon is true across all classes.

Finally, we remark that if the true convex envelope of the objective over the set of semidefinite constraints was taken, then the convex envelope objective would always be non-negative. However, because we have taken the convex envelope of the objective over the polyhedral constraints only, this is not the case.

\subsubsection{Performance at Termination}

It is particularly encouraging that the initial feasible solutions provided via the conditional gradient methods remain the best feasible solution throughout the progress of the Algorithm \ref{alg:bb} for all but three problem instances.\footnote{Of course, this need not be universally true, as for such an optimization problem the feasible $\ph\in\F$ provided by a conditional gradient method are only guaranteed to be locally optimal.} This is an important observation to make because without a provable optimality scheme such as the one we consider here, it is difficult to quantify the performance of heuristic upper bound methods. As we demonstrate here, despite the only local guarantees of solutions obtained via a conditional gradient scheme, they tend to perform quite well in the setting of factor analysis. Indeed, even in the three examples where the best feasible solution is improved, the improved solution is found very early in the branching process. 

Across the different example classes, we see that in general the gap tends to decrease more when $r^*$ is small relative to $p$ and $p$ is smaller. To appropriately contextualize and appreciate the number of nodes solved for a problem with $p=100$ on the timescale of 100s, with state-of-the-art implementations of interior point methods, solving a single node in the branch and bound tree can take on the order of 40s (for the specifically structured problems of interest and on the same machine). In other words, if one were to na\"ively use interior point methods, it would only be possible to solve approximately three nodes during a 100s time limit. In contrast, by using a first-order method approach which facilitates warm starts, we are able to solve hundreds of nodes in the same amount of time.

We see that Algorithm \ref{alg:bb} performs particularly well for classes $A_1$ (Tables \ref{tab:A1_small} and \ref{tab:A1_large}) and $B_1$ (Table \ref{tab:B1}), for problems of reasonable size with relatively small underlying ranks. This is highly encouraging. Class $A_1$ forms a set of prototypical examples for which theoretical recovery guarantees perform well; in stark contrast, problems such as those in $B_1$ which have highly structured underlying factors tend to not satisfy the requirements of such recovery guarantees. Indeed, if $\s\sim B_1(R/p)$, then there is generally appears to be a rank $R/2$ matrix $\T\psd$ so that $\s - \T$ is positive semidefinite and diagonal. In such a problem, for $r^*$ on the order of $R/2-1$, we provide nearly optimal solutions within a reasonable time frame.

Further, we note that similar results to those obtained for classes $A_1$ and $B_1$ are obtained for the other classes and are detailed in Tables \ref{tab:A2} (class $A_2$), \ref{tab:B2} (class $B_2$), \ref{tab:B3} (class $B_3$), and \ref{tab:geo} and \ref{tab:harman}. In a class such as $A_2$, which is generated as a high rank matrix (with decaying spectrum) with added individual variances, theoretical recovery guarantees do not generally apply, so again it is encouraging to see that our approach still makes significant progress towards proving optimality. Further, as shown in Table \ref{tab:largeScale_1}, for a variety of problems with $p$ on the order of $1000$ or $4000$, solutions can be found in seconds and optimality can be certified within minutes via Weyl bounds, with no need for convex envelope bounds as computed via Algorithm \ref{alg:bb}, so long as the rank $r^*$ is sufficiently small (for classes $A_1$, $B_2$, and $B_3$ on the order of hundreds, and for $A_2$ on the order of tens). This strongly supports the value of such an eigenvalue-based approach. When computing lower bounds solely via Weyl's method, the only necessary computations are solving quadratic programs (to find $\uu$ as in Section \ref{ssec:params}) and an eigenvalue decomposition. As Table \ref{tab:largeScale_1} suggests, for sufficiently small rank, one can still quickly find certifiably optimal solutions even for very large-scale factor analysis problems.

A particularly interesting example that stands apart from the rest is \textsf{JO} from the set of real examples. In particular, the correlation matrix for this example is not full rank (as there are fewer observations than variables). As displayed in Table \ref{tab:JO}, solutions found via conditional gradient methods are certified to be optimal at the root node. Indeed, the initial $\mb u_0$ as in Section \ref{ssec:params} can be quickly verified to be $\mb u_0=\mb 0$ for this example, i.e., factor analysis is only possible with no individual variances. Hence, Algorithm \ref{alg:bb} terminates at the root node without needing any branching. 

Finally, we note that all synthetic examples we have considered have equal proportions of common and individual variances (although, of course, this is not exploited by our approach as this information is not \emph{a priori} possible to specify without additional contextual information). If one modifies the classes so that the proportion of the common variances is higher than the proportion of individual variances (in the generative example), then Algorithm \ref{alg:bb} is able to deliver better bounds on a smaller time scale. (Results are not included here.) This is not particularly surprising because the branch-and-bound approach we take  essentially hinges on how well the products $W_{ii}\Phi_{ii}$ can be approximated. When there is underlying factor structure with a lower proportion of individual variances, the scale of $W_{ii}\Phi_{ii}$ is smaller and hence these products are easier to approximate well.

\subsection{Additional Considerations}

We now turn our attention to assessing the benefits of various algorithmic modifications as presented in Section \ref{sec:alg}. We illustrate the impact of these by focusing on four representative examples from across the classes: $A_1(3/10)$, $B_1(6/50)$, $B_3(3/5/20)$, and $G:=\textsf{geomorphology}$.

\begin{table}
\centering
{\small \begin{tabular}{|c|c|r|r|r|r|r|r|}
\hline
\multirow{2}{*}{Example} & \multirow{2}{*}{$r^*$}  & \multicolumn{6}{c|}{Nodes considered for $\epsilon=$}  \\
  & & 0.0 & 0.1 & 0.2 & 0.3 & 0.4 & 0.5\\\hline
 $A_1(3/10)$ & 2 & 103 & 72 & 59 & 62 & 78 & 101\\\hline
 $B_1(6/50)$ & 2 & 39 & 33 & 29 & 40 & 50 & 53\\\hline
 $B_3(3/5/20)$ & 2 & 8245 & 762 & 278 & 189 & 217 & 203 \\\hline
 \textsf{geomorphology} & 1 & 937 & 162 & 59 & 46 & 44 & 52\\\hline
\end{tabular}}
\caption{Computational results for effect of branching strategy across different examples. In particular, we consider how the number of nodes needed to prove optimality (up to numerical tolerance) changes across different choices of $\epsilon$, where $\epsilon$ is as in Section \ref{ssec:branching}. Recall that $\epsilon=0$ corresponds to a na\"ive choice of branching, while $\epsilon>0$ corresponds to a modified branching. All other parameters as in Table \ref{tab:A1_small}, including the branching index. For synthetic examples, the instances are initialized with random seed 1.}
\label{tab:ANl}
\end{table}

\begin{table}
\centering
{\small \begin{tabular}{|c|c|c|c|}
\hline
\multirow{2}{*}{Example} & \multirow{2}{*}{$r^*$}  & \multicolumn{2}{c|}{Nodes considered for }  \\
& & Na\"ive strategy & Modified strategy\\\hline
 $A_1(3/10)$ & 2 & 99 & 78\\\hline
 $B_1(6/50)$ & 2 & 135 & 50 \\\hline
 $B_3(3/5/20)$ & 2 & 375 & 217 \\\hline
 \textsf{geomorphology} & 1 & 43 & 44\\\hline
\end{tabular}}
\caption{Computational results for effect of node selection strategy across different examples. In particular, we consider how the number of nodes needed to prove optimality (up to numerical tolerance) changes across the na\"ive and modified branching strategies as described in Section \ref{ssec:branching}. All other parameters as in Table \ref{tab:A1_small}. For synthetic examples, the instances are initialized with random seed 1.}
\label{tab:nodeSelect}
\end{table}

\begin{table}
\centering
{\small \begin{tabular}{|c|c|c|c|c|}
\hline
\multirow{2}{*}{Example} & \multirow{2}{*}{$r^*$}  & Upper & \multicolumn{1}{c|}{CE LB} & \multicolumn{1}{|c|}{CE LB}  \\
& & Bound & with tightening & without tightening\\\hline
 $A_1(3/10)$ & 2 & 0.88 & 0.70 & 0.39 \\\hline
 $B_1(6/50)$ & 2 & 0.11 & $-0.15 $ & $-0.17$\\\hline
 $B_3(3/5/20)$ & 2 & 1.00 & 0.62 & 0.51 \\\hline
 \textsf{geomorphology} & 1 & 4.06 & 3.78 & 3.78\\\hline
\end{tabular}}
\caption{Computational results for effect of root node bound tightening across different examples. In particular, we consider how the convex envelope lower bound (denoted ``CE LB'') compares with and without bound tightening at the root node (see Section \ref{ssec:weyl}). All other parameters as in Table \ref{tab:A1_small}. ``Upper bound'' displayed is as in Table \ref{tab:A1_small} and is included for scale (i.e., to compare the relative impact of tightening). For synthetic examples, the instances are initialized with random seed 1.}
\label{tab:rootTighten}
\end{table}

\begin{table}
\centering
{\small \begin{tabular}{|c|c|c|c|c|}
\hline
\multirow{2}{*}{Example} & \multirow{2}{*}{$r^*$}  & \multicolumn{3}{|c|}{Nodes considered for $\tol=$}  \\
  & & 0.10 & 0.05 & 0.025\\\hline
 $A_1(3/10)$ & 2 & 78 & 287 & 726\\\hline
 $B_1(6/50)$ & 2 & 1 & 50 &262\\\hline
 $B_3(3/5/20)$ & 2 & 217 &  1396& 5132\\\hline
 \textsf{geomorphology} & 1 & 44 & 217 & 978 \\\hline
\end{tabular}}
\caption{Computational results for numerical tolerance $\tol$ across different examples. In particular, we consider how the number of nodes needed to prove optimality changes as a function of the additive numerical tolerance $\tol$ chosen for algorithm termination. All other parameters as in Table \ref{tab:A1_small}. For synthetic examples, the instances are initialized with random seed 1.}
\label{tab:halveGap}
\end{table}

\subsubsection{Performance of Branching Strategy}

We begin by considering the impact of our branching strategy as developed in Section \ref{ssec:branching}. The results across the four examples are shown in Table \ref{tab:ANl}. Recall that $\epsilon\in[0,1)$ controls the extent of deviation from the canonical branching location, with $\epsilon=0$ corresponding to no deviation. Across all examples, we see that the number of nodes considered to prove optimality is approximately convex in $\epsilon\in[0,0.5]$. In particular, for all examples, the ``optimal'' choice of $\epsilon$ is not the canonical choice of $\epsilon=0$. This contrast is stark for the examples $B_3(3/5/20)$ and $G$. Indeed, for these two examples, the number of nodes considered when $\epsilon=0$ is over five times larger than for any $\epsilon\in\{0.1,0.2,0.3,0.4,0.5\}$.

In other words, the alternative branching strategy can have a substantial impact on the number of nodes considered in the branch-and-bound tree. As a direct consequence, this strategy can drastically reduce the computation time needed to prove optimality. As the examples suggest, it is likely that $\epsilon$ should be chosen dynamically during algorithm execution, as the particular choice depends on a given problem's structure. However, we set $\epsilon=0.4$ for all other computational experiments because this appears to offer a distinct benefit over the na\"ive strategy of setting $\epsilon=0$.

\subsubsection{Performance of Node Selection Strategy}

We now turn our attention to the node selection strategy as detailed in Section \ref{ssec:nodeSelect}. Recall that node selection considers how to pick which node to consider next in the current branch-and-bound tree at any iteration of Algorithm \ref{alg:bb}. We compare two strategies: the na\"ive strategy which selects the node with worst convex envelope bound (as explicitly written in Algorithm \ref{alg:bb}) and the modified strategy which employs randomness and Weyl bounds to consider nodes which might not be advantageous to fathom when only convex envelope bounds are considered.

The comparison is shown in Table \ref{tab:nodeSelect}. We see that this strategy is advantageous overall (with only the example $G$ giving a negligible decrease in performance). The benefit is particularly strong for examples from the $B$ classes which have highly structured underlying factors. For such examples, there is a large difference between the convex envelope bounds and the Weyl bounds at the root node (see e.g. Table \ref{tab:B1}). Hence, an alternative branching strategy which incorporates the eigenvalue information provided by Weyl bounds has potential to improve beyond the na\"ive strategy. Indeed, this appears to be the case across all examples where such behavior occurs.

\subsubsection{Performance of Bound Tightening}

All computational results employ bound tightening as developed in Section \ref{ssec:weyl}, but only at the root node. Bound tightening, which requires repeated eigenvalue computations, is a computationally expensive process. For this reason, we have chosen not to employ bound tightening at every node in the branch-and-bound tree. From a variety of computational experiments, we observed that the most important node for bound tightening is the root node, and therefore it is a reasonable choice to only employ bound tightening there. Consequently, we employ pruning via Weyl's method (detailed in Section \ref{ssec:weyl} as well) at all nodes in the branch-and-bound.\footnote{Recall that bound tightening can be thought of as optimal pruning via Weyl's method.}

In Table \ref{tab:rootTighten} we show the impact of bound tightening at the root node in terms of the improvement in the lower bounds provided by convex envelopes. The results for the class $A_1$ are particularly distinctive. Indeed, for this class bound tightening has a substantial impact on the quality of the convex envelope bound (for the example $A_1(3/10)$ given, the improvement is from a relative gap at the root node of $56\%$ to a gap of $20\%$). For the examples shown, bound tightening offers the least improvement in the real example \textsf{geomorphology}. In light of Table \ref{tab:geo} this is not too surprising, as Weyl's method (at the root node) is not particular effective for this example. As Weyl's inequality is central to bound tightening, problems for which Weyl's inequality is not particularly effective tend to experience less benefit from bound tightening at the root node.

\subsubsection{Influence of Optimality Tolerance}

We close this section by assessing the influence of the optimality tolerance for termination, \tol. In particular, we study how the number of nodes to prove optimality changes as a function of additive gap necessary for termination. The corresponding results across the four examples are shown in Table \ref{tab:halveGap}. Not surprisingly, as the gap necessary for termination is progressively halved, the corresponding number of nodes considered increases substantially. However, it is important to note that even though the gap at termination is smaller as this tolerance decreases (by design), for these examples the best feasible solution remains unchanged. In other words, the increase in the number of nodes is the price for more accurately proving optimality and not for finding better feasible solutions. Indeed, as noted earlier, the solutions found via conditional gradient methods at the outset are of remarkably high quality.

\section{Conclusions}\label{sec:conclude}

We analyze the classical rank-constrained FA problem from a computational perspective. 
We proposed a general flexible family of rank-constrained, nonlinear SDO based formulations for the
task of approximating an observed covariance matrix $\B\Sigma$, 
as the sum of a PSD low-rank component $\B\Theta$ and a diagonal matrix $\B\Phi$ (with nonnegative entries) subject to 
$\B\Sigma - \B\Phi$ being PSD.   
Our framework enables us to estimate the underlying factors and \emph{unique variances} under 
the restriction that the residual covariance matrix is semidefinite --- this is important for statistical interpretability and
understanding the proportion of variance explained by a given number of factors. 
This constraint, however, seems to ignored by most other widely used methods in FA. 

 We introduce a novel \emph{exact} reformulation of the rank-constrained FA problem
as a smooth optimization problem with convex compact constraints.
  We present a unified algorithmic framework, utilizing modern techniques in 
nonlinear optimization and first order methods in convex optimization to obtain high quality solutions for the FA problem. At the same time, we use techniques from discrete and global optimization to demonstrate that these 
solutions are often provably optimal. 
We provide computational evidence demonstrating that the methods proposed herein, provide high quality estimates with improved accuracy, 
when compared to existing, popularly used methods in FA, in terms of a wide variety of metrics.

In this work we have demonstrated that a previously intractable rank optimization problem can be solved to provable optimality. We envision that ideas similar to those used here can be used to address an even larger class of matrix estimation problems. In this way, we anticipate significant progress on such problems in the next decade, particularly in light of myriad advances throughout distinct areas of modern optimization.

\section*{Appendix A}

This appendix contains all proofs for results presented in the main text.

\begin{proof}{\textbf{of Proposition \ref{lem:reform-1}:}}
\begin{enumerate}[(a)]
\item We start by observing that for any two real symmetric matrices $\M{A}, \M{B}$ (with dimensions $p \times p$) and the matrix $q$-norm, 
a result due to Mirsky\footnote{This is also known as the $q$-Wielandt-Hoffman inequality.}
(see for example~\cite{stewart-sun:1990}, pp.\ 205)
states:
\begin{equation}\label{ineq-1}
 \| \M{A} - \M{B}  \|_{q} \geq \| \B{\lambda}(\M{A}) - \B{\lambda}(\M{B}) \|_{q},
\end{equation}
where $\B{\lambda}(\M{A})$ and $\B{\lambda}(\M{B})$ denote the vector of eigenvalues of $\M{A}$ and  $\M{B}$, respectively, arranged in decreasing order, i.e., 
 $\lambda_{1}(\M A) \geq \lambda_2(\M A) \geq \ldots \geq \lambda_{p}(\M A)$ and $\lambda_{1}(\M B) \geq \lambda_2(\M B) \geq \ldots \geq \lambda_{p}(\M B).$
Using this result for Problem~(\ref{obj-2-0}), it is easy to see that for fixed $\B\Phi$:
\begin{equation}\label{equiv-mat-sing}
\left\{ \B\Theta : \B{\Theta} \succeq \M{0}, \rnk(\B{\Theta}) \leq r  \right\}   = \left \{ \B\Theta :  \B\lambda(\B\Theta) \B{\geq} \M{0},  \| \B\lambda(\B\Theta) \|_0 \leq r \right\},
\end{equation}
where, the notation (in bold)  ``$\B{\geq}$'' denotes component-wise inequality and  $\| \B{\lambda}(\B\Theta) \|_0$ counts the number of non-zero elements of $\B{\lambda}(\B\Theta)$.
If we partially minimize the objective function in Problem~(\ref{obj-2-0}), with respect to $\B\Theta$ (with $\B\Phi$ fixed), and use~\eqref{ineq-1} along with~\eqref{equiv-mat-sing}, 
we have the following inequality:
\begin{equation}\label{form-min0}
\begin{array}  { lll ll }
                    & \inf\limits_{\B\Theta}                    & \left\| \left(\B{\Sigma} - \B\Phi \right) -  \B{\Theta}  \right\|^q_q \;                       & \sbt &  \B{\Theta} \succeq \M{0}, \rnk(\B{\Theta}) \leq r  \\
 \geq  \;\; \; &  \inf\limits_{\B{\lambda}(\B\Theta)}  &   \left\| \B\lambda(\B{\Sigma} - \B{\Phi} ) - \B\lambda(\B\Theta) \right\|_q^q  &  \sbt &  \B\lambda(\B\Theta) \B{\geq} \M{0},  \| \B\lambda(\B\Theta) \|_0 \leq r.
\end{array}
\end{equation}
Since $\B{\Sigma} - \B{\Phi} \succeq \M 0,$ it follows that the minimum objective value of the r.h.s.~optimization Problem~\eqref{form-min0}  
is given by $\sum_{i=r+1}^{p} \lambda^q_i(\B{\Sigma} - \B{\Phi} )$ and is achieved for 
$\lambda_{i}(\B\Theta) = \lambda_{i}(\B{\Sigma} - \B{\Phi} )$ for $i = 1, \ldots, r$. This leads to the following inequality:
\begin{equation}\label{form-min1}
\begin{aligned}
\inf_{\B{\Theta} : \B{\Theta} \succeq \M{0}, \rnk(\B{\Theta}) \leq r } \;\;\;&
\left\| \left(\B{\Sigma} - \B\Phi \right) -  \B{\Theta}  \right\|^q_q \;
\geq \;  \sum\limits_{i=r+1}^{p} \lambda^q_i(\B{\Sigma} - \B{\Phi} ).
\end{aligned}
\end{equation}
Furthermore, if $\M{U}_{p \times p}$ denotes the matrix of $p$ eigenvectors  
of $\B\Sigma - \B\Phi$, then the following choice of $\B\Theta^*$:
\begin{equation}\label{form-2-0}
\B\Theta^* :=  \M{U} \; \diag \big ( \lambda_{1}( \B{\Sigma} - \B{\Phi}  ), \ldots, \lambda_{r}(\B{\Sigma} - \B{\Phi} ) , 0, \ldots, 0 \big) \; \M{U}',
\end{equation}
gives equality in~\eqref{form-min1}.
This leads to the following result:
\begin{equation}\label{form-1}
\begin{aligned}
\inf_{\B{\Theta} : \B{\Theta} \succeq \M{0}, \rnk(\B{\Theta}) \leq r } \;\;\;&
\| \B{\Sigma} - (\B{\Theta} + \B{\Phi})  \|^q_q \;
= \; \| \B{\Sigma} - (\B{\Theta}^* + \B{\Phi})  \|^q_q \; = \; \sum_{i=r+1}^{p} \lambda^q_i(\B{\Sigma} - \B{\Phi} ). 
\end{aligned}
\end{equation}
\item The minimizer $\B\Theta^*$ of Problem~\eqref{form-1} is given by~\eqref{form-2-0}.  In particular, if $\B\Phi^*$ solves 
Problem~\eqref{obj-2-0-margin} and we compute $\B\Theta^*$ via~\eqref{form-2-0} (with $\B\Phi = \B\Phi^*$), then the tuple 
$(\B\Phi^*, \B\Theta^*)$ solves Problem~\eqref{obj-2-0}.
 This completes the proof of the proposition.
 \end{enumerate}
 \end{proof}

\begin{proof}{\textbf{of Proposition \ref{equi-minus-0}:}}
 We build upon the proof of Proposition~\ref{lem:reform-1}. Note that any $\B\Phi$ that is feasible for Problems~\eqref{obj-2-0-0} and~\eqref{obj-2-0} is PSD.
Observe that $\B\Theta^*$ (appearing in the proof of Proposition~\ref{lem:reform-1}) as given by~\eqref{form-2-0}  satisfies:
\begin{equation}\label{diff-psd-1}
\B\Sigma - \B\Phi - \B\Theta^{*} \succeq \M{0} \implies \B\Sigma - \B\Theta^* \succeq \M{0},
\end{equation}
where, the right hand side of~\eqref{diff-psd-1}  follows since $\B\Phi \succeq \M{0}$. We have thus established that the solution $ \B\Theta^*$ to the following problem
\begin{equation}\label{form-partial-1-1}
\begin{aligned}
\mini_{\B{\Theta}} \;\;\;&
\left\| \left(\B{\Sigma} - \B\Phi \right) -  \B{\Theta}  \right\|^q_q \\
\sbt\;\;\; &  \B{\Theta} \succeq \M{0} \\
& \rnk(\B{\Theta}) \leq r ,
\end{aligned}
\end{equation}
is feasible for the following optimization problem:
\begin{equation}\label{form-partial-1-1-2}
\begin{aligned}
\mini_{\B{\Theta}} \;\;\;&
\left\| \left(\B{\Sigma} - \B\Phi \right) -  \B{\Theta}  \right\|^q_q \\
\sbt\;\;\; &  \B{\Theta} \succeq \M{0} \\
& \rnk(\B{\Theta}) \leq r \\
& \B\Sigma  - \B\Theta \succeq \M{0}.
\end{aligned}
\end{equation}
Since Problem~\eqref{form-partial-1-1-2} involves minimization over a subset of the feasible set of Problem~\eqref{form-partial-1-1}, it follows that 
$\B\Theta^{*}$ is also a minimizer for Problem~\eqref{form-partial-1-1-2}. This completes the proof of the equivalence. 
\end{proof}

\begin{proof}{\textbf{of Theorem \ref{thm:FA-reform1-gen-q}:}}
\begin{enumerate}[(a)]
\item The proof 
is based on ideas appearing in~\citet{overton-92}, where it was shown that the sum of the top $r$ eigenvalues of a real symmetric matrix can be written as the solution to a linear SDO problem.

By an elegant classical result due to Fan~\citep{stewart-sun:1990}, the smallest $(p-r)$ eigenvalues of a real symmetric matrix $\M{A}$ can be written as:
\begin{equation}\label{sum-low-r-eig-orth}
\begin{array}{lc l}
\sum\limits_{i = r + 1}^{ p } \lambda_i (\M{A})= &  \inf\limits_{ }& \tr \left ( \M{V}'\M{A} \M{V} \right) \\
                                                             &\sbt \;\;& \M{V}'\M{V} = \M{I},
                                                             \end{array}
                                                                                                                                     \end{equation}        
where, the optimization variable                                                                $\M{V} \in \R^{p \times (p-r)} $.                                                                                                                          
                                                                         We will show that the solution to the above non-convex problem can be 
                                                          obtained via the following linear (convex) SDO problem:                                                                                                                                   
\begin{equation}\label{sum-low-r-eig}
\begin{array}{rl}
 \mini\limits_{}  \;\;\; & \tr  \left(\M{W} \M{A} \right)\\
                                                           \sbt \;\;\;&\M{I} \succeq \M{W} \succeq \M{0} \\
                                                           &  \tr (\M W) = p - r.
                                                                                                                                     \end{array}
                                                                                                                                     \end{equation}
Clearly, Problem~\eqref{sum-low-r-eig} is a convex relaxation of Problem~\eqref{sum-low-r-eig-orth}---hence its minimum value is at least smaller than
$\sum_{i = r + 1}^{ p } \lambda_i (\M{A})$. 
By standard results in convex analysis~\citep{rock-conv-96}, it follows that 
the minimum of the above linear SDO problem~\eqref{sum-low-r-eig} is attained at the extreme points of the feasible set of~\eqref{sum-low-r-eig}.  
The extreme points~\citep[see for example,][]{overton-92,sdpSurvey} of this set are given by the set of orthonormal matrices of rank $p-r$: 
$$\big\{ \M{V}\M{V}' :  \M{V} \in \R^{p \times (p - r)} : \M{V}'\M{V}=\M{I} \big \}.$$
It thus follows that the (global) minima of Problems~\eqref{sum-low-r-eig} and~\eqref{sum-low-r-eig-orth} are the same. 
Applying this result to the PSD matrix 
$\M{A} = (\B\Sigma - \B\Phi)^{q}$ appearing in the objective of~\eqref{obj-2-0-margin}, we arrive at~\eqref{FA-reform1-gen-q}. This completes the proof of part {\bf (a)}.

\item  The statement follows from~\eqref{form-2-0}.
\end{enumerate}
\end{proof}

\begin{proof}{\textbf{of Proposition \ref{prop:margin-1}:}}
Observe that, for every fixed $\B{\Phi}$, the function $g_{q}(\M{W}, \B\Phi)$ is concave (in fact, it is linear). Since $G_{q}(\M{W})$ is obtained by 
taking a point-wise 
infimum  (with respect to $\B\Phi$) of the concave function $g_{q}(\M{W},\B{\Phi})$, the
 resulting function $G_{q}(\M{W})$ is concave~\citep{BV2004}. 

The expression of the sub-gradient~\eqref{eq:subdiff1} is an immediate consequence of Danskin's Theorem~\citep{bertsekas-99,rock-conv-96}.
\end{proof}

\begin{proof}{\textbf{of Theorem \ref{lem:conv-rate-1}:}}
Using the concavity of the function $G_{q}({\M{W}})$ we have:
\begin{equation} \label{proof:eq-rate1}
\begin{array}{l c l c l }
G_{q}({\M{W}}^{(i+1)} ) & \leq & G_{q}({\M{W}}^{(i)} ) &+& \langle \nabla G_{q}({\M{W}}^{(i)} ) , {\M{W}}^{(i+1)} - {\M{W}}^{(i)}  \rangle \\
&=& G_{q}({\M{W}}^{(i)} ) &+& \Delta(\M{W}^{(i)}) .
\end{array}
\end{equation}
Note that
$\Delta(\M{W}^{(i)}) \leq 0$ and  $G_{q}({\M{W}}^{(i+1)} )  \leq G_{q}({\M{W}}^{(i)} )$ for all $i\leq k$. 
Hence the (decreasing) sequence of objective values converge to $G_{q}({\M{W}}^{(\infty)})$ (say) and
$\Delta({\M{W}}^{(\infty)}) \geq 0$.
Adding up the terms in~\eqref{proof:eq-rate1}
from $i = 1, \ldots, k$ we have:
\begin{equation}
\begin{aligned}
G_{q}({\M{W}}^{(\infty)} ) - G_{q}({\M{W}}^{(1)} )  \leq  G_{q}({\M{W}}^{(k+1)} ) - G_{q}({\M{W}}^{(1)} )  \leq 
-k \min\limits_{i=1, \ldots, k} \left \{-\Delta(\M{W}^{(k)}) \right\}
\end{aligned}
\end{equation}
from which~\eqref{complexity-rule1-adapt} follows. 
\end{proof}

\bibliographystyle{plainnat_my_nourl}

\bibliography{References_comb}

\begin{thebibliography}{71}
\providecommand{\natexlab}[1]{#1}
\providecommand{\url}[1]{\texttt{#1}}
\expandafter\ifx\csname urlstyle\endcsname\relax
  \providecommand{\doi}[1]{doi: #1}\else
  \providecommand{\doi}{doi: \begingroup \urlstyle{rm}\Url}\fi

\bibitem[Al-Khayyal and Falk(1983)]{akf}
F.~Al-Khayyal and J.~Falk.
\newblock Jointly constrained biconvex programming.
\newblock \emph{Math. Oper. Res}, 8:\penalty0 273--286, 1983.

\bibitem[Andersen and Andersen(2000)]{mosek}
E.~Andersen and K.~Andersen.
\newblock \emph{High Performance Optimization}, chapter The {M}osek Interior
  Point Optimizer for Linear Programming: An Implementation of the Homogeneous
  Algorithm.
\newblock Springer, 2000.

\bibitem[Anderson(2003)]{anderson2003}
T.~Anderson.
\newblock \emph{An Introduction to Multivariate Statistical Analysis}.
\newblock Wiley, New York, 3rd edition, 2003.

\bibitem[Anstreicher and Burer(2010)]{anstreicherburer}
K.~Anstreicher and S.~Burer.
\newblock Computable representations for convex hulls of low-dimensional
  quadratic forms.
\newblock \emph{Math. Program., Ser. B}, 124:\penalty0 33--43, 2010.

\bibitem[Bai and Li(2012)]{bai2012statistical}
J.~Bai and K.~Li.
\newblock Statistical analysis of factor models of high dimension.
\newblock \emph{The Annals of Statistics}, 40\penalty0 (1):\penalty0 436--465,
  2012.

\bibitem[Bai and Ng(2008)]{bai-ng-2008large-review}
J.~Bai and S.~Ng.
\newblock Large dimensional factor analysis.
\newblock \emph{Foundations and Trends in Econometrics}, 3(2):\penalty0
  89--163, 2008.

\bibitem[Bai et~al.(2014)Bai, Mitchell, and Pang]{mitchellcomp}
L.~Bai, J.~Mitchell, and J.-S. Pang.
\newblock On {QPCC}s, {QCQP}s and completely positive programs.
\newblock 2014.
\newblock available at
  \texttt{http://www.optimization-online.org/DB\_FILE/2014/01/4213.pdf}.

\bibitem[Bao et~al.(2009)Bao, Sahinidis, and Tawarmalani]{bst}
X.~Bao, N.~Sahinidis, and M.~Tawarmalani.
\newblock Multiterm polyhedral relaxations for nonconvex, quadratically
  constrained quadratic programs.
\newblock \emph{Optim. Methods \& Software}, 24:\penalty0 485--504, 2009.

\bibitem[Bartholomew et~al.(2011)Bartholomew, Knott, and
  Moustaki]{Bartholomew_Knott_Moustaki_2011}
D.~Bartholomew, M.~Knott, and I.~Moustaki.
\newblock \emph{Latent Variable Models and Factor Analysis: A Unified
  Approach}.
\newblock John Wiley and Sons, United Kingdom, 2011.

\bibitem[Barvinok(1995)]{barvinok}
A.~Barvinok.
\newblock Problems of distance geometry and convex properties of quadratic
  maps.
\newblock \emph{Discrete \& Comp. Geom.}, 12:\penalty0 189--202, 1995.

\bibitem[Berge et~al.(1981)Berge, Snijders, and Zegers]{tenberge-mtfa-81}
J.~Berge, T.~Snijders, and F.~Zegers.
\newblock Computational aspects of the greatest lower bound to the reliability
  and constrained minimum trace factor analysis.
\newblock \emph{Psychometrika}, 46:\penalty0 201--213, 1981.

\bibitem[Bernaards and Jennrich(2005)]{gparotation-r}
C.~Bernaards and R.~Jennrich.
\newblock Gradient projection algorithms and software for arbitrary rotation
  criteria in factor analysis.
\newblock \emph{Educational and Psychological Measurement}, 65:\penalty0
  676--696, 2005.

\bibitem[Bertsekas(1999)]{bertsekas-99}
D.~Bertsekas.
\newblock \emph{{Nonlinear Programming}}.
\newblock Athena Scientific, Belmont, Massachusetts, 2nd edition, 1999.
\newblock ISBN 1886529000.

\bibitem[Bertsekas(1982)]{bertsekasLag}
D.~Bertsekas.
\newblock \emph{Constrained Optimization and {L}agrange Multiplier Methods}.
\newblock Athena Scientific, 1982.

\bibitem[Bhatia(2007)]{bhatia}
R.~Bhatia.
\newblock \emph{Perturbation bounds for matrix eigenvalues}.
\newblock SIAM, 2007.

\bibitem[Boyd and Vandenberghe(2004)]{BV2004}
S.~Boyd and L.~Vandenberghe.
\newblock \emph{Convex Optimization}.
\newblock Cambridge University Press, Cambridge, 2004.

\bibitem[Boyd et~al.(2011)Boyd, Parikh, Chu, Peleato, and Eckstein]{boyd-admm1}
S.~Boyd, N.~Parikh, E.~Chu, B.~Peleato, and J.~Eckstein.
\newblock Distributed optimization and statistical learning via the
  {Alternating Direction Method of Multipliers}.
\newblock \emph{Foundations and Trends in Machine Learning}, 3(1), 2011.

\bibitem[Burer(2012)]{cop1}
S.~Burer.
\newblock \emph{Handbook on Semidefinite, Conic and Polynomial Optimization},
  chapter Copositive programming, pages 201--218.
\newblock Springer, 2012.

\bibitem[Connor and Korajczyk(1986)]{connor1986performance}
G.~Connor and R.~Korajczyk.
\newblock Performance measurement with the arbitrage pricing theory: A new
  framework for analysis.
\newblock \emph{Journal of financial economics}, 15\penalty0 (3):\penalty0
  373--394, 1986.

\bibitem[Costa and Liberti(2012)]{costaliberti}
A.~Costa and L.~Liberti.
\newblock \emph{Experimental algorithms}, volume 7276, chapter Relaxations of
  multilinear convex envelopes: dual is better than primal, pages 87--98.
\newblock Springer, 2012.

\bibitem[Fazel(2002)]{fazel-thes}
M.~Fazel.
\newblock \emph{Matrix Rank Minimization with Applications}.
\newblock PhD thesis, Stanford University, 2002.

\bibitem[Floudas(1999)]{floudas}
C.~Floudas.
\newblock \emph{Deterministic global optimization: theory, algorithms, and
  applications}.
\newblock Kluwer Academic Publishers, 1999.

\bibitem[Frank and Wolfe(1956)]{frank-wolfe}
M.~Frank and P.~Wolfe.
\newblock An algorithm for quadratic programming.
\newblock \emph{Naval Research Logistics Quarterly}, 3\penalty0 (1-2):\penalty0
  95--110, 1956.

\bibitem[Guttman(1958)]{guttman1958extent}
L.~Guttman.
\newblock To what extent can communalities reduce rank?
\newblock \emph{Psychometrika}, 23\penalty0 (4):\penalty0 297--308, 1958.

\bibitem[Hansen et~al.(1993)Hansen, Jaumard, Ruiz, and Xiong]{hansen}
P.~Hansen, B.~Jaumard, M.~Ruiz, and J.~Xiong.
\newblock Global minimization of indefinite quadratic functions subject to box
  constraints.
\newblock \emph{Naval Res. Logist.}, 40:\penalty0 373--392, 1993.

\bibitem[Harman and Jones(1966)]{minres-66}
H.~Harman and W.~Jones.
\newblock Factor analysis by minimizing residuals ({MINRES}).
\newblock \emph{Psychometrika}, 31:\penalty0 351--368, 1966.

\bibitem[Horn and Johnson(2013)]{hornjohnson}
R.~Horn and C.~Johnson.
\newblock \emph{Matrix analysis}.
\newblock Cambridge University Press, second edition, 2013.

\bibitem[Husson et~al.(2015)Husson, Josse, Le, and Mazet]{FactoMineR}
F.~Husson, J.~Josse, S.~Le, and J.~Mazet.
\newblock \emph{Facto{M}ine{R}: Multivariate Exploratory Data Analysis and Data
  Mining}, 2015.
\newblock R package version 1.31.3.

\bibitem[John and Yildirim(2008)]{ws2}
E.~John and E.~Yildirim.
\newblock Implementation of warm-start strategies in interior-point methods for
  linear programming in fixed dimension.
\newblock \emph{Comp. Optim. \& Appl.}, 41:\penalty0 151--183, 2008.

\bibitem[Joreskog(1967)]{mle-factor-analysis}
K.~Joreskog.
\newblock Some contributions to maximum likelihood factor analysis.
\newblock \emph{Psychometrika}, 32\penalty0 (4):\penalty0 443--482, 1967.

\bibitem[Joreskog(1978)]{joreskog-78}
K.~Joreskog.
\newblock Structural analysis of covariance and correlation matrices.
\newblock \emph{Psychometrika}, 43\penalty0 (4):\penalty0 443--477, 1978.

\bibitem[Lasserre(2009)]{momentsBook}
J.~Lasserre.
\newblock \emph{Moments, positive polynomials and their applications}.
\newblock Imperial College Press, first edition, 2009.

\bibitem[Lawley and Maxwell(1962)]{lawley-max-62}
D.~Lawley and A.~Maxwell.
\newblock Factor analysis as a statistical method.
\newblock \emph{Journal of the Royal Statistical Society. Series D (The
  Statistician)}, 12\penalty0 (3):\penalty0 209--229, 1962.

\bibitem[le~Bodic and Nemhauser(2015)]{lebodic}
P.~Bodic and G.~Nemhauser.
\newblock A theoretical model of branching decisions in a branch-and-bound
  algorithm.
\newblock manuscript in preparation, 2015.

\bibitem[Ledermann(1937)]{lederman-37}
W.~Ledermann.
\newblock On the rank of the reduced correlational matrix in multiple-factor
  analysis.
\newblock \emph{Psychometrika}, 2\penalty0 (2):\penalty0 85--93, 1937.

\bibitem[Lewis(1996)]{lewis-96}
A.~Lewis.
\newblock Derivatives of spectral functions.
\newblock \emph{Mathematics of Operations Research}, 21(3):\penalty0 576--588,
  1996.

\bibitem[L\"{o}fberg(2004)]{yalmip}
J.~L\"{o}fberg.
\newblock Yalmip : A toolbox for modeling and optimization in {M}atlab.
\newblock \emph{Proc. CACSD Conf., Taipei, Taiwan}, 2004.

\bibitem[Mardia et~al.(1979)Mardia, Kent, and Bibby]{mardia}
K.~Mardia, J.~Kent, and J.~Bibby.
\newblock \emph{Multivariate analysis}.
\newblock Academic Press, 1979.

\bibitem[Misener and Floudas(2012)]{misenerglobal}
R.~Misener and C.~Floudas.
\newblock Global optimization of mixed-integer quadratically-constrained
  quadratic programs ({MIQCQP}) through piecewise-linear and edge-concave
  relaxations.
\newblock \emph{Math. Program., Ser. B}, 136:\penalty0 155--182, 2012.

\bibitem[Monteiro(2003)]{monteiroSurvey}
R.~Monteiro.
\newblock First-and second-order methods for semidefinite programming.
\newblock \emph{Math. Program., Ser. B}, 97:\penalty0 209--244, 2003.

\bibitem[Nesterov(2005)]{nest_05}
Y.~Nesterov.
\newblock Smooth minimization of non-smooth functions.
\newblock \emph{Mathematical Programming, Series A}, 103:\penalty0 127--152,
  2005.

\bibitem[Nesterov(2007)]{nest-07}
Y.~Nesterov.
\newblock Gradient methods for minimizing composite objective function.
\newblock Technical report, Center for Operations Research and Econometrics
  (CORE), Catholic University of Louvain, 2007.
\newblock Technical Report number 76.

\bibitem[Nesterov(2004)]{nesterov2004introductory}
Y.~Nesterov.
\newblock \emph{Introductory Lectures on Convex Optimization: A Basic Course}.
\newblock Kluwer, Norwell, 2004.

\bibitem[O'Donoghue et~al.(2013)O'Donoghue, Chu, Parikh, and Boyd]{scs}
B.~O'Donoghue, E.~Chu, N.~Parikh, and S.~Boyd.
\newblock Conic optimization via operator splitting and homogeneous self-dual
  embedding.
\newblock 2013.
\newblock working paper, available at
  \texttt{http://web.stanford.edu/~boyd/papers/scs.html}.

\bibitem[Overton and Womersley(1992)]{overton-92}
M.~Overton and R.~Womersley.
\newblock On the sum of the largest eigenvalues of a symmetric matrix.
\newblock \emph{SIAM Journal of Matrix Analysis and Applications}, 13:\penalty0
  41--45, 1992.

\bibitem[Pataki(1998)]{pataki}
G.~Pataki.
\newblock On the rank of extreme matrices in semidefinite programs and the
  multiplicity of optimal eigenvalues.
\newblock \emph{Math. Oper. Res.}, 23:\penalty0 339--358, 1998.

\bibitem[{R Core Team}(2014)]{R}
{R Core Team}.
\newblock \emph{R: A Language and Environment for Statistical Computing}.
\newblock R Foundation for Statistical Computing, Vienna, Austria, 2014.

\bibitem[Raiche and Magis(2011)]{r-nfactors-13}
G.~Raiche and D.~Magis.
\newblock {nFactors}: Parallel analysis and non graphical solutions to the
  cattell scree test.
\newblock \url{http://cran.r-project.org/web/packages/nFactors/index.html},
  2011.

\bibitem[Rao(1973)]{Rao73}
C.~R. Rao.
\newblock \emph{Linear Statistical Inference and Its Applications}.
\newblock Wiley, New York, 1973.

\bibitem[Revelle(2015)]{psych}
W.~Revelle.
\newblock \emph{psych: Procedures for Psychological, Psychometric, and
  Personality Research}.
\newblock Northwestern University, Evanston, Illinois, 2015.
\newblock R package version 1.5.6.

\bibitem[Riccia and Shapiro(1982)]{shapiro-mtfa-82}
G.~Riccia and A.~Shapiro.
\newblock Minimum rank and minimum trace of covariance matrices.
\newblock \emph{Psychometrika}, 47:\penalty0 443--448, 1982.

\bibitem[Robertson and Symons(2007)]{robertson2007maximum}
D.~Robertson and J.~Symons.
\newblock Maximum likelihood factor analysis with rank-deficient sample
  covariance matrices.
\newblock \emph{Journal of Multivariate Analysis}, 98\penalty0 (4):\penalty0
  813--828, 2007.

\bibitem[Rockafellar(1996)]{rock-conv-96}
R.~T. Rockafellar.
\newblock \emph{Convex Analysis}.
\newblock Princeton University Press, Princeton, 1996.
\newblock ISBN 0691015864.

\bibitem[Roweis and Ghahramani(1999)]{roweis1999unifying}
S.~Roweis and Z.~Ghahramani.
\newblock A unifying review of linear gaussian models.
\newblock \emph{Neural {C}omputation}, 11\penalty0 (2):\penalty0 305--345,
  1999.

\bibitem[Rubin and Thayer(1982)]{em-factor-rubin}
D.~Rubin and D.~Thayer.
\newblock {EM} algorithms for {ML} factor analysis.
\newblock \emph{Psychometrika}, 47\penalty0 (1):\penalty0 69--76, 1982.

\bibitem[Sahinidis(2014)]{baron}
N.~V. Sahinidis.
\newblock \emph{{BARON 14.3.1: Global Optimization of Mixed-Integer Nonlinear
  Programs, {\em User's Manual}}}, 2014.

\bibitem[Saunderson et~al.(2012)Saunderson, Chandrasekaran, Parrilo, and
  Willsky]{venkat-2012-factor}
J.~Saunderson, V.~Chandrasekaran, P.~Parrilo, and A.~Willsky.
\newblock Diagonal and low-rank matrix decompositions, correlation matrices,
  and ellipsoid fitting.
\newblock \emph{SIAM Journal on Matrix Analysis and Applications}, 33\penalty0
  (4):\penalty0 1395--1416, 2012.

\bibitem[Shapiro(1982)]{rank-shapiro-82}
A.~Shapiro.
\newblock Rank-reducibility of a symmetric matrix and sampling theory of
  minimum trace factor analysis.
\newblock \emph{Psychometrika}, 47:\penalty0 187--199, 1982.

\bibitem[Shapiro and Ten~Berge(2002)]{shapiro-inference-FA-02}
A.~Shapiro and J.~Ten~Berge.
\newblock Statistical inference of minimum rank factor analysis.
\newblock \emph{Psychometrika}, 67:\penalty0 79--94, 2002.

\bibitem[Spearman(1904)]{spearman1904}
C.~Spearman.
\newblock {``General Intelligence,'' Objectively Determined and Measured}.
\newblock \emph{American Journal of Psychology}, 15:\penalty0 201--293, 1904.

\bibitem[Stewart and Sun(1990)]{stewart-sun:1990}
G.~Stewart and J.-G. Sun.
\newblock \emph{{Matrix Perturbation Theory}}.
\newblock Academic Press, Boston, 1 edition, 1990.
\newblock ISBN 0126702306.

\bibitem[Tawarmalani and Sahinidis(2002{\natexlab{a}})]{tawarBook}
M.~Tawarmalani and N.~Sahinidis.
\newblock \emph{Convexification and global optimization in continuous and
  mixed-integer nonlinear programming: theory, algorithms, software, and
  applications}, volume~65 of \emph{Nonconvex Optimization and its
  Applications}.
\newblock Kluwer Acad. Publish., 2002{\natexlab{a}}.

\bibitem[Tawarmalani and Sahinidis(2002{\natexlab{b}})]{tsExt}
M.~Tawarmalani and N.~Sahinidis.
\newblock Convex extensions and envelopes of lower semi-continuous functions.
\newblock \emph{Math. Program., Ser. A}, 93:\penalty0 247--263,
  2002{\natexlab{b}}.

\bibitem[Tawarmalani et~al.(2013)Tawarmalani, Richard, and Xiong]{trx}
M.~Tawarmalani, J.-P. Richard, and C.~Xiong.
\newblock Explicit convex and concave envelopes through polyhedral
  subdivisions.
\newblock \emph{Math. Program., Ser. A}, 138:\penalty0 531--577, 2013.

\bibitem[Ten-Berge(1998)]{ten1998some}
J.~Ten-Berge.
\newblock Some recent developments in factor analysis and the search for proper
  communalities.
\newblock In \emph{Advances in data science and classification}, pages
  325--334. Springer, 1998.

\bibitem[Ten~Berge and Kiers(1991)]{berge-91}
J.~Ten~Berge and H.~Kiers.
\newblock A numerical approach to the approximate and the exact minimum rank of
  a covariance matrix.
\newblock \emph{Psychometrika}, 56:\penalty0 309--315, 1991.

\bibitem[Ten-Berge and Kiers(2003)]{mrfa-2003-code}
J.~Ten-Berge and H.~Kiers.
\newblock The minimum rank factor analysis program {MRFA}.
\newblock \url{http://www.ppsw.rug.nl/~kiers/}, 2003.

\bibitem[Thurstone(1947)]{thurstone1947multiple}
L.~Thurstone.
\newblock \emph{{Multiple Factor Analysis: a Development and Expansion of the
  Vectors of the Mind}}.
\newblock The University of Chicago Press, Chicago, 1947.

\bibitem[Toh et~al.(1999)Toh, Todd, and Tutuncu]{sdpt3a}
K.~Toh, M.~Todd, and R.~Tutuncu.
\newblock {SDPT}3 -- a {M}atlab software package for semidefinite programming.
\newblock \emph{Optim. Meth. \& Softw.}, 11:\penalty0 545--581, 1999.

\bibitem[Vandenberghe and Boyd(1996)]{sdpSurvey}
L.~Vandenberghe and S.~Boyd.
\newblock Semidefinite programming.
\newblock \emph{SIAM Review}, 38\penalty0 (1):\penalty0 49--95, 1996.

\bibitem[Yildirim and Wright(2002)]{ws1}
E.~Yildirim and S.~Wright.
\newblock Warm-start strategies in interior-point methods for linear
  programming.
\newblock \emph{SIAM J. Optim.}, 12:\penalty0 782--810, 2002.

\end{thebibliography}

\end{document}